\renewcommand\paragraph{\@startsection{paragraph}{4}{\z@}%
                                      {1ex \@plus1ex \@minus.2ex}%
                                      {-1em}%
                                      {\normalfont\normalsize\bfseries}}
\definecolor{darkred}  {rgb}{0.5,0,0}
\definecolor{darkblue} {rgb}{0,0,0.5}
\definecolor{darkgreen}{rgb}{0,0.5,0}
\protected\def\tikz@nonactivecolon{\ifmmode\mathrel{\mathop\ordinarycolon}\else:\fi}
\crefname{lemma}{Lemma}{Lemmas}
\crefname{proposition}{Proposition}{Propositions}
\crefname{definition}{Definition}{Definitions}
\crefname{theorem}{Theorem}{Theorems}
\crefname{conjecture}{Conjecture}{Conjectures}
\crefname{corollary}{Corollary}{Corollaries}
\crefname{claim}{Claim}{Claims}
\crefname{section}{Section}{Sections}
\crefname{appendix}{Appendix}{Appendices}
\crefname{figure}{Fig.}{Figs.}
\crefname{table}{Table}{Tables}
\newcommand{\x}{\otimes}
\DeclareMathOperator{\Tab}{Tab}
\DeclareMathOperator{\Obl}{Obl} 
\DeclarePairedDelimiter{\set}{\lbrace}{\rbrace}
\DeclarePairedDelimiter{\abs}{\lvert}{\rvert}
\DeclarePairedDelimiter{\norm}{\lVert}{\rVert}
\newcommand{\poly}{\mathrm{poly}}
\newcommand{\polylog}{\mathrm{polylog}}
\newcommand{\polyloglog}{\mathrm{polyloglog}}
\newcommand{\polylogloglog}{\mathrm{polylogloglog}}
\newcommand{\C}{\mathbb{C}}
\newcommand{\Z}{\mathbb{Z}}
\newcommand{\calH}{\mathcal{H}}
\newcommand{\calX}{\mathcal{X}}
\newcommand{\calY}{\mathcal{Y}}
\newcommand{\calA}{\mathcal{A}}
\newcommand{\calB}{\mathcal{B}}
\newcommand{\1}{\mathbb{1}}
\newcommand{\phip}{\phi_+}
\newcommand{\id}{\mathbb{1}}
\newcommand{\EPR}{\mathrm{EPR}}
\newcommand{\pr}[2]{P(#1|#2)}
\newcommand{\PCPP}{\mathsf{PCPP}}
\newcommand{\ipt}{\mathsf{input}}
\newcommand{\MIP}{\mathsf{MIP}}
\newcommand{\pzkMIP}{\mathsf{PZK}\text{-}\MIP}
\newcommand{\szkMIP}{\mathsf{SZK}\text{-}\MIP}
\newcommand{\czkMIP}{\mathsf{CZK}\text{-}\MIP}
\newcommand{\RE}{\mathsf{RE}}
\newcommand{\val}{\mathrm{val}}
\newcommand{\NEXP}{\mathsf{NEXP}}
\newcommand{\NTIME}{\mathsf{NTIME}}
\newcommand{\Sim}{\mathsf{Sim}}
\newcommand{\Graph}{\mathcal{G}}
\newcommand{\QMIP}{\mathsf{QMIP}}
\newcommand{\IP}{\mathsf{IP}}
\newcommand{\OR}{\mathrm{OR}}
\newcommand{\QR}{\mathrm{QR}}
\newcommand{\AR}{\mathrm{AR}}
\newcommand{\PR}{\mathrm{PR}}
\newcommand{\appd}[1]{\simeq_{#1}}
\newcommand{\hft}[1]{\textcolor{darkgreen}{#1}}
\newtheorem{theorem}{Theorem}[section]
\newtheorem{lemma}[theorem]{Lemma}
\newtheorem{proposition}[theorem]{Proposition}
\newtheorem{definition}[theorem]{Definition}
\newtheorem*{conjecture*}{Conjecture}
\theoremstyle{definition}
\begin{document}

\title{Succinct Perfect Zero-knowledge for MIP*}
\date{}
\author[1]{Honghao Fu~\thanks{Email: \href{mailto:honghao.fu@concordia.ca}{honghao.fu@concordia.ca}}}
\author[2]{Kieran Mastel~\thanks{Email: \href{mailto:kmastel@uottawa.ca}{kmastel@uottawa.ca}}}
\author[3]{Xingjian Zhang~\thanks{Email: \href{mailto:zxj24@hku.hk}{zxj24@hku.hk}}}

\renewcommand*{\Authfont}{\normalsize}
\affil[1]{ Concordia Institute for Information Systems Engineering \\ Concordia University, Montreal, Canada}
\affil[2]{Department of Mathematics and Statistics, University of Ottawa, Ottawa, Canada}
\affil[3]{ QICI Quantum Information and Computation Initiative\\
School of Computing and Data Science, The University of Hong Kong, Hong Kong SAR, China}

\maketitle

\begin{abstract}
     In the recent breakthrough result \cite{mastel2024two}, Slofstra and the second author show that there is a two-player one-round perfect zero-knowledge $\MIP^*$ protocol for $\RE$. We build on their result to show that there exists a succinct two-player one-round perfect zero-knowledge $\MIP^*$ protocol for $\RE$ against dishonest verifiers with $\polylog$ question size and $O(1)$ answer size, or with $O(1)$ question size and $\polylog$ answer size. To prove our result, we study the three central compression techniques underlying the $\MIP^*=\RE$ proof \cite{re}---question reduction, oracularization, and answer reduction. We show that question reduction preserves the perfect (as well as statistical and computational) zero-knowledge properties of the original protocol against dishonest verifiers, and oracularization and answer reduction preserve the perfect (as well as statistical and computational) zero-knowledge properties of the original protocol against honest verifiers. Secondly, we show that every constraint-constraint binary constraint system (BCS) nonlocal game, which provides a quantum information characterization of $\MIP^*$, can be converted to a synchronous constraint-variable BCS game to preserve perfect completeness for our compression. Lastly, we present a parametrized perfect-zero-knowledge transformation of $\MIP^*$ protocols, which generalizes the transformation in \cite{mastel2024two}. This transformation allows us to preserve the zero-knowledge property against dishonest verifiers in the recursively oracularized protocols in our compression. 
\end{abstract}
\tableofcontents

\section{Introduction}
An interactive proof system involves two parties: a prover with unlimited resources for computation and a verifier limited to bounded computation power. The prover aims to prove a statement to the verifier through rounds of message exchanges. Nevertheless, the prover might be dishonest and prove false statements to the verifier~\cite{goldwasser1985knowledge}. 
An important discovery in complexity theory is that the class of interactive proof systems, $\mathsf{IP}$, is equal to $\mathsf{PSPACE}$, where a polynomial time verifier can verify any statement that can be computed in polynomial space, i.e., in $\mathsf{PSPACE}$ \cite{shamir1992ip}.
Moreover, the interactive proof system can be turned into a zero-knowledge proof system ~\cite{goldwasser1985knowledge}. That is, the prover ascertains the statement to the verifier without unveiling any information of the proof beyond this mere fact. 
More precisely, it means that there is a simulator that does not have the proof but can still simulate the distribution of an honest prover's answers.
Assuming the existence of one-way functions, it is known that $\mathsf{IP}=\mathsf{CZK}$ \cite{goldwasser1985knowledge}. The complexity class $\mathsf{CZK}$ contains all the problems admitting \emph{computational zero-knowledge} proofs, where the simulated distribution is computationally indistinguishable from an honest one.
However, it is not yet clear whether we can get rid of the cryptographic assumptions to establish \emph{statistical zero-knowledge} ($\mathsf{SZK}$),
where the simulated distribution is statistically close to an honest one,
or \emph{perfect zero-knowledge} ($\mathsf{PZK}$), where the simulated distribution equals an honest one.

A remarkable solution to remove the computationally intractable assumptions is to consider a setting with multiple non-communicating provers~\cite{ben1988multi}. This is the setting of the complexity class $\mathsf{MIP}$, \emph{multi-prover interactive proof} systems.
In this setting, every interactive proof between the verifier and the provers can be turned into a perfect zero-knowledge one, i.e., $\mathsf{MIP}=\pzkMIP$ \cite{ben1988multi}. 
Since this setting does not involve any computationally intractable assumption, its efficiency is significantly improved over the ones built on such assumptions.
Moreover, as the verifier can communicate with multiple provers separately and cross-check their proofs, the verifier's capability is also strengthened. 
As shown by the celebrated results of Babai, Fortnow and Lund, we have $\mathsf{MIP}=\mathsf{NEXP} \supseteq \mathsf{PSPACE}$~\cite{babai1991non}. 
The $\MIP$ protocol for $\NEXP$ can be quite simple, with only two provers and one round of interaction. 

In $\mathsf{MIP}$, the resource between the provers is shared randomness.
What if the provers can share quantum resources such as entanglement? Moreover, what if the verifier can also run polynomial time quantum computation and sends quantum messages?
The former question led to the definition of $\MIP^*$ \cite{cleve2004consequences}, where the provers can share unlimited entanglement while the verifier is still classical.
The latter question inspired the definition of $\QMIP$ \cite{qmip}, where the verifier is also quantum.
Although a classical verifier might seem weaker than the quantum one, they can utilize \emph{self-tests} to verify whether the provers are behaving honestly \cite{mayers1998quantum}, which are tools absent in the classical case. 
This observation led to the breakthrough result $\QMIP = \MIP^*$ \cite{ruv13}.
However, the full computational power of $\MIP^*$ was still elusive.
It turns out that the law of quantum mechanics allows the classical verifier to command the quantum provers with very low communication \cite{neexp}, which led to
the celebrated result of $\mathsf{MIP}^*=\mathsf{RE}$~\cite{re}. This result says that any language in $\RE$ including the undecidable Halting problem has an $\MIP^*$ protocol.
Similar to its classical counterpart for any $\NEXP$ language, an $\MIP^*$ protocol with only two provers and one round of interaction suffices for any $\RE$ language \cite{re}.

To get zero-knowledge $\MIP^*$ protocols, Chiesa and his colleagues first proved that every language in $\NEXP$ has a perfect zero-knowledge $\MIP^*$ protocol with two provers but polynomially many rounds of interactions,
so $\NEXP \subseteq \pzkMIP^*$ \cite{chiesa2022spatial}.
On the other hand, Grilo and his colleagues proved that every language in $\MIP^*$ also has a six-prover one-round $\pzkMIP^*$ protocol, so $\MIP^* = \pzkMIP^*$~\cite{grilo2019perfect}.
The latest remarkable result of Mastel and Slofstra brings the number of provers from six down to two and proves that every language in $\RE$ has a two-prover one-round $\pzkMIP^*$ protocol \cite{mastel2024two}.

The line of works above has reduced the numbers of provers and rounds of interactions to the minimal, but the parameters such as the communication cost and the completeness-soundness gap are not fully optimized. 
In the work of \cite{mastel2024two}, the questions and answers must be polynomially long to ensure a constant completeness-soundness gap. Otherwise, if the question length is polylog and the answer length is a constant, the completeness-soundness gap is inverse-polynomial.
In a recent work by Culf and Mastel, they succeeded in reducing the answer length to a constant while still requiring polynomial-length questions \cite{culf2024re}.
Therefore, one of the open problems is how to construct \emph{succinct} $\pzkMIP^*$ protocols with polylog question length and constant answer length for $\RE$ with a constant completeness-soundness gap, or more generally, to construct \emph{succinct} $\pzkMIP^*$ protocols of which the total communication between the verifier and the provers is of a polylog length.

The motivation to explore succinct $\pzkMIP^*$ protocols resides in several aspects.
For the validity of multi-prover interactive proofs (both classical and quantum ones), a key issue is to prohibit the provers from communicating with each other. 
As first pointed out by Kilian, the physical implementation requirements can be reduced if the communication cost and the verification time can be reduced~\cite{kilian1990strong}.
Usually to shield potential information leakage, one may take advantage of the principles of relativity~\cite{bell2004speakable}:
The verifier is posed between the provers, and the provers are spatially separated far enough
so that they cannot exchange information before their answers reach the verifier due to the light speed restriction.
Therefore, as the communication cost gets lower and the verification time gets shorter, the spatial configuration can be made more compact and the spatial separation condition is enforced more easily.
Besides the motivation to validate a multi-prover setting, succinctness has always been desirable for practical applications of zero-knowledge proofs~\cite{kilian1992note}.
For instance, a famous usage of zero-knowledge proofs is to protect privacy in blockchains. As practical blockchain applications usually involve many users over the internet, a low communication cost in its implementation is vital for high-speed information transmission on a large scale \cite{goldwasser2015delegating}.

In classical theoretical computer science, building on techniques such as the probabilistic checkable proofs (PCPs) \cite{babai1991checking,feige1996interactive,arora1998probabilistic,arora1998proof}, the pursuit for succinctness in $\MIP$ protocols and zero-knowledge proofs led to the following result~\cite{dwork1992low}:
Every language in $\mathsf{NP}$ admits a perfect zero-knowledge proof with two classical provers and one round of interaction, which has perfect completeness, a constant complete-soundness gap, a logarithmic question length, and a constant answer length.
In this paper, we show that similar parameters can be obtained for $\pzkMIP^*$ protocols for $\RE$.
\begin{theorem}\label{thm:main}
    Let $\MIP^*[q,a]$ denote the two-prover one-round $\MIP^*$ protocols with length-$q$ questions and length-$a$ answers.
    Then every language in $\RE$ has a succinct perfect zero-knowledge two-prover one-round $\MIP^*[\polylog, O(1)]$ protocol with perfect completeness, constant soundness, and polynomial sampling and verification time.
\end{theorem}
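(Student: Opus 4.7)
My plan is to take the two-player one-round perfect zero-knowledge $\MIP^*$ protocol for $\RE$ constructed by Mastel and Slofstra \cite{mastel2024two}, which already has polynomial question and answer length together with a constant completeness-soundness gap, and push it through the compression pipeline of the $\MIP^*=\RE$ proof \cite{re} to shrink the question length to $\polylog$ and the answer length to $O(1)$, arguing at each step that perfect zero-knowledge is preserved.

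Concretely, I would proceed in four stages. First, recast the starting protocol as a BCS nonlocal game and, if necessary, convert between constraint-constraint and constraint-variable form using the full equivalence established in this paper. Second, apply oracularization to put the game into the special form required by the later reductions. Third, apply answer reduction, which replaces each long honest answer with a short local view into a PCP-encoded answer, bringing the answer length down to $O(1)$ at the cost of slightly longer questions. Fourth, apply question reduction (introspection) to compress the question length down to $\polylog$, and finish with parallel repetition to amplify the potentially inverse-polynomial soundness gap back up to a constant. All four transformations incur only polynomial blow-up in the verifier's sampling and verification time, which yields the efficiency claimed in \cref{thm:main}.

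The main obstacle is showing that each of the four compression operations preserves perfect zero-knowledge. The compressions in \cite{re} were designed purely for completeness and soundness, so for each one I have to exhibit a simulator for the compressed protocol out of the simulator for the input protocol, making sure no statistical slack is introduced. For oracularization and parallel repetition this is the easier direction: the compressed honest strategy is obtained from the original by pairing or tupling coordinates, and the simulator transports along. For answer reduction, the simulator must, starting from the original simulator's sampled answer, also produce a consistent short PCP local view; since answer reduction is essentially a classical PCP composition applied to the already-sampled quantum answers, the simulator composes the $\MIP^*$ simulator with the deterministic PCP encoder and opens only the queried symbols, producing a transcript whose distribution equals the honest one exactly. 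The most delicate case is question reduction, because the compressed verifier samples questions via a separate introspective subgame whose structure is not directly visible to the honest prover; here the plan is to let the simulator replay the verifier's introspective sampling step verbatim and then invoke the inner simulator on the decompressed question that the compressed verifier would have produced, arguing that the perfect simulation of the underlying $\RE$ protocol lifts through this substitution.

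Once each compression step is equipped with a perfect-zero-knowledge-preserving simulator, chaining them together yields the succinct $\pzkMIP^*[\polylog,O(1)]$ protocol with perfect completeness and constant soundness claimed by \cref{thm:main}. The same simulator constructions go through with only computational or statistical indistinguishability between the simulated and honest distributions, so as a by-product we also obtain the $\szkMIP^*$ and $\czkMIP^*$ preservation statements advertised in the abstract.
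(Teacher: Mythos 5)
Your high-level strategy (start from Mastel--Slofstra, push through the compression pipeline, show each step preserves PZK by composing simulators with efficient classical post-processing) matches the paper, and your simulator sketches for oracularization, parallel repetition, and answer reduction are essentially the right ideas. However, the order in which you apply the transformations does not work, and this is not a cosmetic issue. You propose answer reduction first, then question reduction, then a final parallel repetition. Question reduction (introspection) forces the provers to \emph{output} the questions they introspectively sampled, so the answer length of the question-reduced protocol is $a'(n) = \polylog(q(n))\,(q(n) + a(n) + O(1))$ --- it absorbs the \emph{pre-reduction question length}. Applying question reduction after you have already shrunk the answers to $O(1)$ therefore blows the answers back up to $\poly(n)$ (or at best to the question length of the intermediate protocol), and you lose succinctness. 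The paper must do question reduction \emph{first} and answer reduction \emph{last}. Relatedly, a single application of the low-degree-code answer reduction only brings the answer length down to $\polylog(d_V(n)) = \polylog(n)$, not $O(1)$; the paper iterates it $O(\log\log\log n)$ times (interleaved with oracularization and parallel repetition to hold the soundness at $1/2$). The Hadamard-code variant does reach $O(1)$ in one shot but inflates the questions to $\poly(n)$ whenever the verification time is polynomial, so it cannot be used here either.

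Two further gaps. First, your plan to ``finish with parallel repetition to amplify the potentially inverse-polynomial soundness gap'' is self-defeating: repairing an inverse-polynomial gap requires polynomially many repetitions, which multiplies the question and answer lengths by $\poly(n)$ and destroys the parameters you just achieved. Constant soundness must be maintained \emph{before} and \emph{throughout} the compression steps, which is why the paper parallel-repeats immediately after the constraint-constraint-to-constraint-variable conversion (while lengths are still polynomial anyway) and relies on the projection-game repetition inside question reduction. Second, the conversion to a constraint-variable game is not an optional normalization: the Mastel--Slofstra game is constraint-constraint and its perfect strategy is not commuting on co-sampled question pairs, so oracularization and answer reduction would not preserve perfect completeness without first converting to the constraint-variable form (which yields a perfect PCC strategy at the cost of an inverse-polynomial soundness loss that must then be repaired). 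Your proposal treats this as ``if necessary'' bookkeeping, but it is a load-bearing step with its own completeness, soundness, and zero-knowledge analysis.
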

Following a similar argument, we also get the following.
\begin{theorem}
    \label{thm:am}
    Every language in $\RE$ has a succinct perfect zero-knowledge two-prover one-round $\MIP^*[O(1),\polylog]$ protocol with perfect completeness, constant soundness, constant sampling time, and polynomial verification time, which is also an $\mathrm{AM}^*(2)$ protocol.
\end{theorem}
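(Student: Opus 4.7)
The plan is to run the same four-stage compression pipeline used to prove Theorem~\ref{thm:main}---question reduction, oracularization, answer reduction, and parallel repetition applied to the two-prover PZK protocol of \cite{mastel2024two}---but to dualize the final answer-reduction step so that the output is asymmetric in the opposite direction. The output of Theorem~\ref{thm:main} has polylog questions and $O(1)$ answers because the last answer-reduction step PCP-encodes each prover's reply and has the verifier query only a constant-size slice of this encoding. For the present theorem I would instead have the verifier sample an $O(1)$-size slice of the query space and ask the prover to return the full polylog-size relevant PCP window, thereby shifting the polylog cost from the question to the answer.

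Concretely, I would first put the compressed BCS game in constraint--variable form, using the converse direction of the BCS equivalence established in this paper (which is needed precisely because one wants to go from constraint--constraint to constraint--variable while preserving the $\pzkMIP^*$ structure of the protocol). In this form the verifier's message to each prover is of one of two short kinds---a constraint index or a variable index---both of which can be encoded as a short random selector. I would then restructure the protocol so that each prover receives only a constant-length message, namely a role bit plus a few random coins, and returns both the object it has been challenged on and the associated assignment, padded to polylog length. Perfect completeness, constant soundness (after a final parallel repetition), and perfect zero-knowledge all follow from the analogous analyses for Theorem~\ref{thm:main}, together with the fact---established earlier in this paper---that each compression technique preserves $\mathsf{PZK}$, $\mathsf{SZK}$, and $\mathsf{CZK}$.

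The $\mathrm{AM}^*(2)$ property comes for free once questions are $O(1)$: the verifier's entire message is a constant number of uniformly random bits, sampled in constant time and sent as public coins to two non-communicating quantum provers, which is exactly the $\mathrm{AM}^*(2)$ template. Polynomial verification time is inherited from the verifier of Theorem~\ref{thm:main}, which already runs in polynomial time, since the added checks in the constraint--variable form amount only to string equality tests on polylog-size strings.

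The main obstacle I would expect is maintaining constant soundness: once the prover is free to choose which constraint or variable to report in its (now longer) answer, a cheating strategy could always point to an ``easy'' instance. The cure is the cross-check built into the constraint--variable game structure---the verifier checks that the variable reported by one prover appears in the constraint reported by the other and that the associated assignments are consistent. Showing that this cross-check still yields a constant gap in the $\MIP^*$ setting is exactly what the BCS equivalence proved in this paper, combined with the soundness analysis of the underlying compression, is designed to deliver, so the soundness bound ultimately reduces to that of Theorem~\ref{thm:main}.
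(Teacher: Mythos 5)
There is a genuine gap in your route to $O(1)$-length questions. The paper obtains \cref{thm:am} by running the same pipeline as \cref{thm:main} but swapping which technique is iterated: answer reduction (\cref{prop:tight_ar_summary,prop:ans_redu}) is applied \emph{once}, leaving polylog-length answers, and then question reduction (\cref{prop:qr_summary,prop:que_redu}) is applied \emph{iteratively}, each round shrinking the question length from $q$ to $\polylog(q)$ until it reaches $O(1)$. Your proposal instead tries to get constant-length questions by "dualizing" answer reduction --- sending each prover only a role bit plus a few coins and letting the prover report which constraint or variable it is answering. This does not work as described. First, after answer reduction the verifier's message necessarily contains the original (polylog-length) questions $x,y$ and the query points into the low-degree/Hadamard encodings; having the prover return a "window" of the PCP does not remove the need to specify \emph{which} window, so the question length does not drop to $O(1)$. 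Second, and more fundamentally, letting the provers choose their own challenges destroys soundness unless you force them to sample from the correct question distribution. The cross-check you invoke (consistency of the reported variable with the reported constraint) is defeated by two provers who deterministically agree in advance on a single easy constraint and a single variable in it: they always pass the consistency test while never being challenged on anything else. The only known cure is exactly the introspection machinery of question reduction --- the Pauli basis test certifying shared EPR pairs and certifying that the provers measure them in the $Z$ basis to sample $(x,y)$ from the correct conditionally linear distribution. In other words, the mechanism you are reinventing \emph{is} question reduction, and the correct proof is to iterate it (as a projection game, with the parallel repetition of \cref{thm:projectionparallel} restoring constant soundness at each round, and \cref{prop:que_redu} preserving zero knowledge).

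Two smaller points. The constraint-constraint to constraint-variable conversion of \cref{thm:cc_to_cv} is applied at the very start of the pipeline, to the protocol of \cite{mastel2024two}, in order to obtain a perfect PCC strategy before oracularization and answer reduction; it is not a tool for the final step, and after compression the protocol is no longer a BCS game in that form. Your observations that $O(1)$-length questions yield an $\mathrm{AM}^*(2)$ protocol and that polynomial verification time is inherited are consistent with the paper, but they only become available once the question length has actually been driven to $O(1)$ by the iterated introspection argument.
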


\paragraph{Proof overview.}

Recall that the proof of $\MIP^* = \RE$ is built upon four key techniques used in the compression of $\MIP^*$ protocols: question reduction, oracularization, answer reduction and parallel repetition.
These four techniques are recursively applied to families of $\MIP^*$ protocols to construct an $\MIP^*$ protocol for the $\RE$-complete Halting problem.
Question reduction is first applied to reduce the question length exponentially, but it will reduce the completeness-soundness gap from a constant to inverse polylog. 
The central idea behind question reduction is letting the provers sample their own questions and prove to the verifier that their questions are sampled from the correct distribution. Since the verifier does not need to send the original questions, the question length is reduced.
Then, oracularization is applied to ensure the protocol is suitable for answer reduction: The oracularized prover answers the questions to the two provers. This step preserves the completeness-soundness gap.
The third step is to apply answer reduction, where the oracularized prover computes a succinct proof, i.e., a probabilistic checkable proof (PCP), that his/her answers satisfy the verifier's checks. 
Instead of reading the entire answers, the verifier only needs to query bits of the answers and the succinct proof, which also reduces the answer length and verification time exponentially.
In the end, parallel repetition is applied to restore the constant completeness-soundness gap.

In this work, we build on the $\pzkMIP^*$ protocol with a constant completeness-soundness gap and polynomial-length questions and answers in \cite{mastel2024two} and construct a succinct $\pzkMIP^*$ protocol. For this purpose, we first examine if the four techniques towards proving $\MIP^*=\mathsf{RE}$ preserve the zero-knowledge property. Previously, \cite{mastel2024two} has shown that parallel repetition preserves the zero-knowledge property of the original protocol. We further show that question reduction in \cite{re} preserve the zero-knowledge property.
The subtlety is that we can only show that oracularization and 
 the tightened answer reduction technique of \cite{dong2023computational} preserve zero-knowledge against \emph{honest verifiers}, who only sample questions that are samplable in the original protocol.
 The subtlety will be discussed in more details later.
As a remark, 
the tightened answer reduction technique of \cite{dong2023computational} preserves the completeness-soundness gap and can reduce answer size exponentially.
Moreover, it is capable of reducing answer size below $\polylog(n)$, whereas the answer reductions in \cite{re,natarajan2023quantum} cannot.

The idea behind proving that question reduction preserves zero-knowledge is the observation that the honest distribution of the question-reduced protocol is the product of two distributions: the first one is obtained from measuring many copies of $\ket{\text{EPR}} = \frac{1}{\sqrt{2}}(\ket{00}+\ket{11})$ in the Pauli Z or X basis, and the second one is the honest distribution of the protocol before question reduction. Since simulating Pauli measurements on EPR pairs can be done in polynomial time \cite{aaronson2004improved}, if the honest distribution of the protocol before question reduction is simulable by a poly-time simulator, so is the new product distribution.
The idea behind proving that answer reduction preserves zero-knowledge is to notice that the honest prover will first sample answers from the honest distribution of the protocol before answer reduction, compute the PCP of the correctness of the answers, and then answer the new questions from the verifier based on the PCP.
Since in the complete case, the PCP can be computed in polynomial time, all that the provers need to do is efficient classical post-processing.
Hence, if a simulator can simulate the honest distribution of the original protocol, the new simulator for the answer-reduced protocol only needs to perform efficient classical post-processing on the output of the original simulator.

However, a subtle issue is the oracularization step. While we show this procedure preserves the zero-knowledge property against an honest verifier, in general, it does not preserve the perfect zero-knowledge property against a \emph{dishonest verifier}. 
Instead of specifying only one prover as the oracularized prover, a dishonest verifier may ask the two provers different oracle questions and learn more information in a round than an honest verifier. 
Even if a simulator exists for the base game, it is unclear how to construct a new simulator for the oracularized game
that can produce the correlation of an honest quantum strategy when asked two oracularized question.
Since answer reduction is built upon oracularization, we cannot prove answer reduction preserve zero-knowledge
against dishonest verifiers in general.

To mitigate this problem in the proof of \cref{thm:main}, we propose a parametrized perfect zero-knowledge transformation (\cref{lem:tab}) for the oracularized nonlocal game, which combines a technique called obliviation and Barrington's branching program \cite{Barrington86}. This transformation generalizes the perfect zero-knowledge protocol in \cite{mastel2024two}, which was inspired by \cite{dwork1992low} that constructs a succinct perfect zero-knowledge protocol for the NP language in the $\MIP$ setting. 
We add parameters for the number of rows in the tableau and the degree of obliviation. 
We choose the parameters to be large enough that the dishonest verifier cannot learn the assignment to a non-oblivious variable even by cheating and asking two oracle questions. 
In other words, the answers from the honest disttribution are completely random to the dishonest verifier, which makes the answers samplable in polynomial time.
We remark that after performing this transformation once on the baseline protocol, recursive oracularization shall preserve the perfect zero-knowledge property.
Moreover, when the two parameters are constant, the transformation preserves constant soundness.


To prove \cref{thm:main}, there is one more technical point before we apply the four techniques to the $\pzkMIP^*$ protocol for $\RE$ with a constant completeness-soundness gap in \cite{mastel2024two}.
In this $\pzkMIP^*$ protocol, the verifier randomly selects two constraints independently and distributes them to the two provers. We call such a protocol a constraint-constraint protocol. However, answer reduction cannot be applied to such a protocol, because for answer reduction to preserve perfect completeness, the strategy for the complete case should be a commuting strategy. That is, Alice and Bob's measurement operators should commute if the two corresponding questions have a nonzero probability of being sampled by the verifier. 
To fix this issue, we need to convert the protocol to a constraint-variable protocol, where the verifier first samples a random constraint and then samples a random variable from the constraint. 
\begin{theorem}[Informal]
    Let $P$ be an instance of a constraint-constraint protocol with $n$ variables and $m$ constraints in total,
    and let $P^{cv}$ be the corresponding instance of a constraint-variable protocol.
    \begin{itemize}
        \item If $P$ has a perfect quantum strategy, $P^{cv}$ has a perfect \emph{commuting} strategy.
        \item If the quantum winning probability of $P^{cv}$ is $1 - \epsilon$, the quantum winning probability of $P$ is at least $1 - \poly(\epsilon)$.
        \item If a perfect distribution of $P$ can be simulated, so can a perfect distribution of $P^{cv}$.
    \end{itemize}
\end{theorem}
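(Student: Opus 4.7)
The statement has three parts, and my plan is to handle items 1 and 3 by an explicit reduction starting from the constraint-constraint side, while invoking Paddock's rounding theorem for item 2. For item 1, suppose $|\psi\rangle_{AB}$ together with projective measurements $\{A^{c}_{a}\}_{a}$ on system $A$ and $\{B^{c}_{a}\}_{a}$ on system $B$ form a perfect tensor-product strategy for $P$, indexed by constraints $c$ and their satisfying assignments $a$. In $P^{cv}$ I let Alice keep her constraint operators $\{A^{c}_{a}\}$. For each variable $x$ I fix a canonical constraint $c^{(x)} \ni x$ and define Bob's variable operator by the marginal
\[
\tB^{x}_{b} \;:=\; \sum_{a \,:\, a_{x} = b} B^{c^{(x)}}_{a}.
\]
Since the cc strategy is perfect, whenever the verifier samples the pair $(c, c^{(x)})$ the two provers' outcomes agree on every shared variable with probability one; in particular they agree on $x$. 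Hence, when the $P^{cv}$-verifier sends $c$ to Alice and $x$ to Bob, Alice's answer and Bob's marginalized outcome match on $x$ with probability one, so the new strategy wins $P^{cv}$ perfectly. The commuting requirement is automatic because Alice and Bob still act on separate tensor factors, and tensor-product strategies are a special case of commuting strategies.

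For item 3, given an efficient simulator $\Sim$ for the honest distribution of $P$, I construct $\Sim'$ for $P^{cv}$ as follows: on input $(c, x)$ with $x \in c$, it invokes $\Sim$ on input $(c, c^{(x)})$ to obtain $(a_{c}, a^{*})$, and outputs $(a_{c}, a^{*}_{x})$, where $a^{*}_{x}$ denotes the restriction of $a^{*}$ to the variable $x$. By the construction in item 1 this is precisely the honest output distribution of $P^{cv}$, and the overhead is one table lookup, so efficiency is preserved. For item 2 I would cite Paddock's rounding theorem directly; the main task there is to verify that his polynomial error bound translates cleanly to $\poly(n, m, \epsilon)$ under the present conventions.

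The principal obstacle is justifying that the operators $\tB^{x}_{b}$ in item 1 yield a well-defined, perfect strategy: the marginal measurement on a shared variable must be essentially independent of which constraint $c^{(x)}$ is used to marginalize, and the post-measurement state must leave Alice's operators consistent with Bob's outcome. This rests on the pairwise consistency imposed by a perfect cc strategy between every two constraints sampled with nonzero probability, but it requires careful formalization on the support of $|\psi\rangle$, and may entail a connectivity argument through the variable--constraint incidence graph when the verifier's distribution over constraint pairs does not directly cover all pairs sharing a given variable. Once that technical step is settled, items 2 and 3 follow in a straightforward manner.
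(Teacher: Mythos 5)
The main gap is your treatment of the commuting requirement in item 1. You dismiss it as ``automatic because Alice and Bob act on separate tensor factors,'' but that reading makes the condition vacuous and misses why the theorem demands a \emph{commuting} strategy at all: the point (needed so that oracularization and answer reduction preserve perfect completeness) is that Alice's constraint measurement $\{A^i_a\}$ and Bob's variable measurement $\{B^k_b\}$ must commute \emph{as operators on the same Hilbert space}, so that a single oracle player can measure both. Establishing this is the bulk of the paper's completeness proof: using the Schmidt decomposition and the Cleve--Mittal identity lemma, one shows that Bob's marginalized variable observables $N^{i,k}$ are independent of $i$ \emph{exactly} (not merely consistent on average), that the state can be replaced by a maximally entangled one on each Schmidt block, and finally that $M^{i,k}=N^{i,k}$ as operators, so Bob's variable measurement is a coarse-graining of Alice's constraint measurement and hence commutes with it. Your construction of $\tB^x_b$ as a marginal of $B^{c^{(x)}}$ does yield a perfect tensor-product strategy for $P^{cv}$ (and your worry about connectivity is moot, since every constraint pair is sampled with positive probability in the constraint-constraint game), but it does not touch the commutation that is the actual content of item 1.

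Item 2 is also under-argued. Paddock's rounding theorem only converts a near-perfect strategy for the constraint-variable game into a near-perfect strategy \emph{for the same game} whose shared state is maximally entangled; it does not produce a strategy for the constraint-constraint game. The missing step, which the paper supplies, is to show that after this rounding both players can use Alice's constraint measurements in $P$: one proves $M^{i,k}_b\otimes\1\ket{\phi_+}\approx N^k_b\otimes\1\ket{\phi_+}\approx M^{j,k}_b\otimes\1\ket{\phi_+}$ (using the maximally entangled state to transfer operators across the tensor product) and then chains these approximations over the at most $C$ shared variables of a constraint pair, accumulating an error polynomial in $m$, $n$, $C$, and $\epsilon^{1/2}$. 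Your item 3 matches the paper's simulator essentially verbatim and is fine.
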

This transformation is optimal because it preserves the question and answer size, perfect completeness, the zero-knowledge property, and more importantly, the constant soundness.
To prove the soundness of this transformation, we first use results from \cite{marrakchi2023almost} to show the shared state can be assumed to be the EPR pairs. 
By carefully design the new question distribution, we ensure this step preserve constant soundness.
Then we show the constraint measurements from a near-perfect strategy of $P^{cv}$ can be used by both parties in $P$ because these measurements must be consistent with the variable measurements no matter which side they are applied on.
The second step also preserves constant soundness.

To prove \Cref{thm:main}, we start with an $\MIP^*[O(1),\polylog]$ protocol from \cite{natarajan2023quantum}, which does not yet have the zero-knowledge property, and the nonlocal games are constraint-constraint games. We first apply the parametrized perfect zero-knowledge transformation to the protocol and transform the constraint-constraint protocol to a constraint-variable protocol. 
Since this step reduces the completeness-soundness gap, we apply a parallel repetition to restore it to a constant. Afterwards, we apply question reduction and recursive answer reduction until the protocol has a constant answer length. In the end, we get a succinct $\pzkMIP^*$ for $\RE$ with $\polylog$-question length, $O(1)$-answer length,
polynomial verification and sampling time, perfect completeness and constant soundness.


\paragraph{Discussion and open problems.}
In this paper, we show that every language in $\RE$ has a succinct $\pzkMIP^*$ protocol with constant completeness-soundness gap.
To prove this result, we prove that question reduction preserves the perfect (resp. statistical and computational) zero-knowledge property of the original protocol against a dishonest verifier. 
We also prove that oracularization and answer reduction preserve the perfect (resp. statistical and computational) zero-knowledge property of the original protocol against an honest verifier.
Additionally, we propose a parametrized perfect zero-knowledge transformation for the oracularized nonlocal game.
As the completeness and soundness properties of these techniques have been established in previous works, our result adds the preservation of zero-knowledge properties to their strengths, making these techniques more suitable for $\MIP^*$ and quantum cryptography studies.

Moreover, we show that every constraint-constraint binary constraint system (BCS) nonlocal game can be converted
to a synchronous constraint-variable BCS game while preserving the perfect completeness and soundness.
Note that BCS games are equivalent to $\MIP^*$ protocols \cite{mastel2024two}.
The other direction of converting constraint-variable protocols to constraint-constraint protocols was first proved in \cite{paddock2022rounding}.
Upon finishing this paper, we notice the conversion from constraint-constraint protocols to constraint-variable protocols was independently proved in \cite{culf2024re}. They proved the result using operator algebra techniques, and we take a quantum information approach in the proof.

We list a few open problems as future research directions.
\begin{enumerate}
    \item Can oracularization preserve the perfect zero knowledge property against a dishonest verifier? Notice that a dishonest verifier of an oracularized protocol can ask two independent oracularized questions to the two provers individually. 
    A technical challenge is that the corresponding observables may not be compatible, and it is hence not immediately clear how to efficiently simulate the correlated measurement statistics even given a simulator that can simulate the answers of one orcularized questions.
    We expect additional transformations to be added to oracularization to make it preserve zero knolwedge against 
    dishonest verifiers.
    In this work, we only show that this compression technique preserves zero knowledge against an honest verifier, and we need to develop the parameterized PZK transformation for the final target of succinct $\pzkMIP^*$ against a dishonest verifier. Nevertheless, it is worth mentioning that once this additional transformation is performed, one can recursively apply oracularization to the transformed protocol, which will then preserve the perfect zero-knowledge property. 
    
    \item How do we compress $\MIP^*$ protocols with imperfect completeness? The completeness guarantees of question reduction, oracularization, and answer reduction break down when the original protocol has imperfect completeness. 
    Classically, by giving up perfect completeness, H\aa stad designed a new PCP for $\mathsf{NP}$ with constantly many queries, of which the verifier only needs to perform binary addition \cite{haastad2001some}.
    If dropping the perfect completeness condition of $\MIP^*$ protocols can give us new $\MIP^*$ protocols for quantum computational complexity classes like $\mathsf{QMA}$ with improved parameters, we need new techniques to compress such protocols.
    \item Can we find $\pzkMIP^*$ (resp. $\szkMIP^*$, $\czkMIP^*$) protocols for smaller quantum computational complexity classes with improved parameters? The game version of the quantum PCP conjecture is about finding a succinct $\MIP^*$ protocol for a $\mathsf{QMA}$-complete problem \cite{gameqpcp}.
    For smaller complexity classes, we may utilize the structure therein to design specific zero-knowledge $\MIP^*$ protocols instead of using the generic protocol for $\RE$. In the classical literature, zero-knowledge protocols for $\mathsf{P}$ have better parameters than those of $\mathsf{NP}$~\cite{goldwasser2015delegating}.
    
    \item Can $\MIP^*$ protocols help transform interactive proof system into non-interactive ones? For single-prover interactive proof protocols, namely $\IP$, the Fiat-Shamir heuristic provides a paradigm to transform them into non-interactive proofs~\cite{fiat1986prove}.
    In practice, this line of research breeds succinct non-interactive arguments (SNARGs) and non-interactive succinct arguments of knowledge (SNARKs)~\cite{groth2010short,parno2016pinocchio,ben2013snarks,braun2013verifying,ben2014succinct}. Moreover, they can usually retain zero-knowledge properties. While the random oracle is usually adopted in the transformation, it has been realized that the study of $\MIP$ can help construct non-interactive protocols relying on more standard cryptographic assumptions~\cite{boneh2018quasi}. 
    Then, can we also do it in the quantum world?

    \item Can we further reduce the question length from polylog to log?
    The lower bound of the communication cost of $\MIP^*$ protocols for $\RE$ is log instead of polylog \cite{natarajan2023quantum}. To further reduce the question length to log, we possibly need new ideas for question reduction. Note that this problem of reducing question length to log is also related to the game version of the quantum PCP conjecture~\cite{gameqpcp}.
  
\end{enumerate}


\paragraph{Acknowledgments.} We would like thank Anand Natarajan and Yilei Chen for helpful discussions.
X.Z. acknowledges support from Hong Kong Research Grant Council through grant number R7035-21 of the Research Impact Fund and No.27300823. K.M. acknowledges support from the Natural Sciences and Engineering Research Council of Canada (NSERC).

\section{Preliminary}
\label{sec:prelim}

\paragraph{Notations.}
We denote the set $\set{1,\ldots, n}$ by $[n]$.
We denote a d-dimensional maximally entangled state (MES) by $\ket{\phi_+} = \frac{1}{\sqrt{d}}\sum_{i=1}^d \ket{i,i}$.
The special case of $d=2$ is the EPR pair, which is denoted by $\ket{\text{EPR}} = \frac{1}{\sqrt{2}}(\ket{00}+\ket{11})$.
We denote a Hilbert space by $\calH$.
For approximations, we use the following notations.
If $a,b \in \C$, we write $a \appd{\epsilon} b$ to mean $\abs{a-b} \leq \epsilon$.
For vectors $\ket{v}$ and $\ket{u}$, we write $\ket{u} \appd{\epsilon} \ket{v}$ to mean $\norm{\ket{u}-\ket{v}} \leq \epsilon$.

\paragraph{$\MIP^*$ and nonlocal games.}
We follow the notations of \cite{neexp} for two-player one-round $\MIP^\ast$ protocols.
For some language $L$, such a protocol is an interaction between a verifier and two noncommunicating provers, who can share arbitrary entanglement.
Given an instance $\ipt$, the verifier first samples a pair of random questions and sends them to the provers, receives
a pair of answers from the provers, and decides whether to accept $\ipt \in L$ or reject it based on the questions and answers.
For $\MIP^*$, the verifier is described by two algorithms: a randomized sampler $Q$ and a deterministic verification algorithm $V$.
Given $\ipt$, the protocol is denoted by $P(\ipt)$.
To execute the protocol, the verifier samples two questions $(x, y) \sim Q(\ipt)$ and gives question $x$
to Alice and question $y$ to Bob.
After receiving answers $a$ and $b$, the verifier accepts if $V(\ipt, x, y, a, b) = 1$.
Let $n = \abs{\ipt}$. Then the question length, answer length, sampling time and verification time of $P(\ipt)$
are denoted by $q(n)$, $a(n)$, $t_Q(n)$ and $t_V(n)$.
When analyzing answer reduction transformation, we break $V$ into two phases. In the first phase, $V$ computes
a predicate circuit $C^{\ipt}_{x,y}$ to check the answers. In the second phase, $V$ runs $C^{\ipt}_{x,y}$ on answers $a,b$ and outputs the output of $C^{\ipt}_{x,y}$. We call the size of $C^{\ipt}_{x,y}$ the \emph{decision complexity} of $P$ and denote it by $d_V(n)$.

Two-player one-round $\MIP^\ast$ protocols are also nonlocal games.
We follow the notations of \cite{re} for nonlocal games.
\begin{definition}[Two-player one-round games]
    A two-player one-round game $G$ is specified by a tuple $(\calX, \calY, \calA, \calB, \mu, V)$ where
    \begin{itemize}
        \item $\calX$ and $\calY$ are finite sets, called the \emph{question sets},
        \item $\calA$ and $\calB$ are finite sets, called the \emph{answer sets},
        \item $\mu$ is a probability distribution over $\calX \times \calY$, called the \emph{question distribution}, and
        \item $V: \calX \times \calY \times \calA \times \calB \to \set{0,1}$ is a function, called the \emph{decision predicate}.
    \end{itemize}
\end{definition}
In the context of $\MIP^*$, $\mu$ is the distribution that $Q(\ipt)$ samples from, and the predicate $V$ has
$\ipt$ hardcoded into it, so $V(x,y,a,b) = V(\ipt, x, y, a, b)$ for all $(x,y,a,b) \in \calX \times \calY \times \calA \times \calB$.

A special type of two-player one-round games is the synchronous game. In such a game, the question sets of the two players are the same, which we denote as $\calX$, the question distribution $\mu$ is a symmetric probability distribution over $\calX\times\calX$, and the decision predicate $V$ is a symmetric function satisfying that if both players are asked the same question $x$, $V(x,x,a,b)=1$ only if $a=b$. Furthermore, we call the synchronous game $\alpha$-synchronous if there exists $\alpha\in(0,1)$ such that $\mu(x,x)\geq\alpha\sum_y\mu(x,y)$ for every $x\in\calX$.

Next, we formally define the strategies for nonlocal games and $\MIP^*$ protocols and the values they achieve.
\begin{definition}[Tensor-product strategies and correlations]
    A tensor-product strategy $S$ of a nonlocal game $G = (\calX, \calY, \calA, \calB, \mu, V)$ is a tuple 
    $(\ket{\psi}, A, B)$ where
    \begin{itemize}
        \item a bipartite quantum state $\ket{\psi} \in \calH_A \x \calH_B$ for finite dimensional complex Hilbert
spaces $\calH_A$ and $\calH_B$,
        \item $A$ is a set $\set{A^x}$ such that for every $x \in \calX$, $A^x = \set{A^x_a \mid a \in \calA}$ is 
        a projective measurement over $\calH_A$, and 
        \item $B$ is a set $\set{B^y}$ such that for every $y \in \calY$, $B^y = \set{B^y_b \mid b \in \calB}$ is 
        a projective measurement over $\calH_B$.
    \end{itemize}
    In the nonlocal game, given questions $x$ and $y$, the probability distribution for the answers $a$ and $b$ is given by
    \begin{equation}
        P(a,b\mid x,y) = \bra{\psi} A^x_a \x B^y_b \ket{\psi}.
    \end{equation}
    We call the set of question-conditioned distributions $\{P(a,b\mid x,y)\}_{a,b,x,y}$ the \emph{correlation} of the tensor-product strategy $S$.
    
\label{def:QStrategy}
\end{definition}
We will not discuss commuting operator strategies in our work, so we also refer to tensor-product strategies as quantum strategies.
Note that we focus on projective measurements in the definition above. This is because 
when we prove the quantum soundness of an $\MIP^\ast$ protocol, we can assume the measurements are projective by Naimark's Dilation theorem \cite[Theorem 5.1]{re}.

\begin{definition}[Tensor product value]
    The tensor-product value of a tensor product strategy $S = (\ket{\psi}, A ,B)$ for a nonlocal game
    $G = (\calX, \calY, \calA, \calB, \mu, V)$ is defined as
    \begin{align*}
        \val^\ast(G,S) = \sum_{x,y,a,b} \mu(x,y)V(x,y,a,b) \bra{\psi} A^x_a \x B^y_b \ket{\psi}.
    \end{align*}
    For $v \in [0,1]$ we say that the strategy passes or wins $G$ with probability $v$ if $\val^\ast(G,S)\geq v$.
    The quantum value or tensor product value of $G$ is defined as
    \begin{align*}
        \val^*(G) = \sup_{S} \val^\ast(G,S)
    \end{align*}
    where the supremum is taken over all tensor product strategies $S$ for $G$.
\end{definition}
In other words, the quantum value of Alice and Bob's strategy is the probability for the verifier to accept Alice and Bob's answers over the distribution of the questions.
We say a strategy $S$ that achieves quantum value 1 is a \emph{perfect quantum strategy},
and say the correlation produced by $S$ is a \emph{perfect correlation}.

There is
a special subclass of quantum strategies called \emph{projective, commuting and consistent} (PCC) strategies.
Since we only consider projective measurements in this work, we define commuting strategies and consistent strategies below.
\begin{definition}[PCC strategy]
    \label{def:pcc}
    We say a projective strategy $S = (\ket{\psi}, A ,B)$ for a nonlocal game
    $G = (\calX, \calY, \calA, \calB, \mu, V)$ is \emph{commuting} if for any $(x,y) \in \calX \times \calY$ such that
    $\mu(x,y) > 0$, $[A^x_a, B^y_b] = A^x_aB^y_b - B^y_bA^x_a = 0$ for all $a \in \calA$ and $b \in \calB$.
    Moreover, we say $S$ is \emph{consistent} if for all $x \in \calX, a \in \calA$ and $y \in \calY, b\in \calB$
    \begin{align*}
        A^x_a \x \1 \ket{\psi} = \1 \x A^x_a \ket{\psi} && 
         B^y_b \x \1 \ket{\psi} = \1 \x B^y_b \ket{\psi}.
    \end{align*}
\end{definition}
\begin{definition}
    We say a language $L$ is in $\MIP^*_{c,s}[q,a,t_Q,t_V,d_V]$, where $q,a,t_Q, t_V, d_V,c, s$ are functions of the instance size, if there exists a protocol $P = (Q,V)$ whose question length, answer length, sampling time, verification time, and decision complexity
    are upper bounded by $q, a, t_Q, t_V, d_V$ respectively such that:
    \begin{description}
        \item[Completeness] If $\ipt \in L$, there exists a quantum strategy $S=(\ket{\psi},M,N)$ for $P(\ipt)$ whose value is at least $c$.
        \item[Soundness] If $\ipt \notin L$, there is no quantum strategy whose value is larger than $s$ for $P(\ipt)$.
    \end{description}
\end{definition}

When using only the first four parameters, we use $t_V$ as an upper bound of $d_V$.
The decision complexity will be explicit in the analysis of answer reduction.
In this work, we focus on the case that $c=1$ and $s$ is a constant smaller than 1.
To simplify the notation, we sometimes use $V$ to represent a verifier.
If an $\MIP^*$ protocol has a perfect PCC  strategy, then the question-reduced and answer-reduced protocols also have a perfect PCC  strategy. We need this property to argue the completeness of $\MIP^*$ protocols are preserved through question reduction and answer reduction.

\begin{definition}[Zero knowledge $\MIP^*$ {\cite[Definition 6.3]{coudron2019complexity}}]
    Let $P=(Q,V)$ be a two-prover one-round $\MIP^*$ protocol for a language $L$ with completeness $1$ and 
    soundness $s$.
    For any $\ipt \in L$, let $\Sim$ be a classical simulator running in a polynomial time of $\abs{\ipt}$,
    and let $\mathrm{D}$ be a distinguisher also running in a polynomial time of $\abs{\ipt}$.
    We say the protocol $P$ is
    \begin{description}   
        \item[Perfect zero-knowledge (PZK),] if there is a perfect quantum strategy $S = (\ket{\psi}, M, N)$ whose correlation is $\{P(a,b\mid x,y)\}_{x,y}$, such that for any $x\in\mathcal{X},y\in\mathcal{Y}$, there exists a simulator $\Sim$ that can sample from the distribution $P(a,b\mid x,y)$;
        \item[Statistical zero-knowledge (SZK),] if there is a perfect quantum strategy $S = (\ket{\psi}, M, N)$ whose correlation is $\{P(a,b\mid x,y)\}_{x,y}$, such that for any $x\in\mathcal{X},y\in\mathcal{Y}$, there exists a simulator $\Sim$ that can sample from a distribution whose statistical distance to $P(a,b\mid x,y)$ is smaller than any inverse polynomial in $\abs{\ipt}$;
        \item[Computational zero-knowledge (CZK),] if there is a perfect quantum strategy $S = (\ket{\psi}, M, N)$ whose correlation is $\{P(a,b\mid x,y)\}_{x,y}$, such that for any $x\in\mathcal{X},y\in\mathcal{Y}$, there exists a simulator $\Sim$ that can sample from a distribution, which cannot be distinguished from $P(a,b\mid x,y)$ by any distinguisher $\mathrm{D}$.
    \end{description}
    The class $\pzkMIP^*_{c,s}$ (resp. $\szkMIP^*_{c,s}$ and $\czkMIP^*_{c,s}$) is the class of languages with a
    perfect zero-knowledge (resp. statistical zero-knowledge and computational zero-knowledge) two-prover one-round $\MIP^*$ protocol
    with completeness $c$ and soundness $s$.
\end{definition}

In the above zero knowledge definition, we account for dishonest verifiers. A dishonest verifier may deviate from the $\MIP^*$ protocol and seek more information about the proof of the provers, where instead of sampling questions $(x,y)$ from the desired distribution $\mu$ in the $\MIP^*$ protocol, they sample questions from a different distribution that can be sampled in a polynomial time of $|\ipt|$.
Then, PZK implies that no matter how the dishonest verifier changes the distribution, there exists an efficient simulator to produce the same answer distribution conditioned on the questions. Thus, the verifier still cannot learn anything about the proof. Statistical (Computational) zero-knowledge guarantees that the verifier cannot distinguish the true distribution from a simulated distribution without the proof, so the dishonest verifier can learn at most negligible portion of the proof. The validity of the above definition against a dishonest verifier has been discussed in \cite[Definition 6.3]{coudron2019complexity} and \cite[Theorem 9.1]{mastel2024two}.

\paragraph{Binary constraint system games.}
In our results, we shall utilize a special type of nonlocal games inspired by the binary constraint systems (BCS).
Specifically, a BCS consists of $n$ binary variables and $m$ constraints. For later convenience, we use signed variables, where each variable $v_j$ for $j\in[n]$ takes values in $\{\pm1\}$. 
In the following discussions, we shall refer to a variable by its index $j\in[n]$. 
In the BCS, each constraint is a Boolean function over a subset of variables, and we denote the maximum number of a constraint's composing variables as $C$. For the $i$'th constraint, we denote the subset of variables within it as $V_i$. 
Additionally, for the $i$'th constraint, we denote the set of satisfying assignments over its composing variables as $C_i$, which is a subset of $\{\pm1\}^{V_i}$. 
From another viewpoint, since every Boolean function can be represented as a multi-linear polynomial over the real field, we can consider each constraint as defined by a polynomial $c_i$ over the set of composing variables $V_i$, and the constraint is represented as a function in the form of $c_i(V_i)=\pm1$.
BCS is widely used in classical theoretical computer science. For instance, every 3-SAT instance defines a BCS.

Once a BCS is given, we can define an associated nonlocal game~\cite{cleve2014characterization} . In a constraint-constraint BCS nonlocal game, the two players, Alice and Bob, are each given a constraint labelled by $i\in[m]$ and $j\in[m]$, respectively. The players need to output a satisfying assignment to the constraint they each receive. Furthermore, for every variable that shows up in both constraints, i.e., $\forall k\in V_i\cap V_j$, the values assigned to it must be the same.
In a constraint-variable BCS nonlocal game, one of the players, say Alice, is given a constraint $i\in[m]$, and Bob is given a variable $j\in V_i$. In this game, Alice needs to output a satisfying assignment to the constraint, and her assignment to the variable $j$ must take the same value as Bob's.

As a famous BCS example, the Mermin-Peres magic square~\cite{mermin1990simple,peres1990incompatible} is given as:
\begin{equation}
\begin{aligned}
    v_1v_2v_3 &= 1, \\
v_4v_5v_6 &= 1, \\
v_7v_8v_9 &= 1, \\
v_1v_4v_7 &= 1, \\
v_2v_5v_8 &= 1, \\
v_3v_6v_9 &= -1.
\end{aligned}
\end{equation}
This BCS contains nine variables, six constraints, and each constraint contains three variables.
All the constraints in this BCS cannot be simultaneously satisfied. Consequently, the associated BCS game cannot be won perfectly via classical strategies. Nevertheless, by adopting tensor-product strategies in \cref{def:QStrategy}, the nonlocal players can win the game perfectly.
In the perfect-winning strategy, the players needs to share a maximally entangled state, $\ket{\phip}$.
Moreover, the set of quantum measurements of each player can be seen as a operator-valued satisfying assignment to the BCS. That is, for the observables associated with variables measured by Alice (resp. Bob), $\{A^x\}$ (resp. $\{B^y\}$), with respect to each constraint $c_i(V_i)=\pm1$, we have $c_i(\{A^x\}_{x\in V_i})=\pm\1$ (resp. $c_i(\{B^y\}_{y\in V_i})=\pm\1$), where $\1$ is the identity operator.


\section{Constraint-constraint game to symmetrized constraint-variable game}
The PZK protocols from \cite{mastel2024two} can be viewed as 
constraint-constraint BCS games where each prover gets a uniformly random constraint.
To compress such protocols while preserving perfect completeness, they need to be converted to constraint-variable games. 
The purpose of this conversion is to guarantee the PCC condition for the perfect strategy. This is vital for answer reduction, as answer reduction can only be applied to an oracularized protocol, and oracularization preserves completeness only if the original game satisfies the PCC condition \cite{natarajan2023quantum}.
Note that the zero-knowledge protocol from \cite{mastel2024two}, which uses constraint-constraint games, does not satisfy the PCC condition in the complete case.
Moreover, since question reduction and oracularization preserve the PCC condition, we will use this conversion to make sure the original game for question reduction and answer reduction satisfies the PCC condition.

For a constraint-constraint BCS game with $m$ constraints and $n$ variables, in its corresponding constraint-variable BCS game, Alice gets a random constraint $i \in [m]$ and Bob gets a random variable $j\in V_i$ in Alice's constraint. 
Nevertheless, another subtlety is that such a constraint-variable game is not synchronous, which is required for the soundness analysis for a technical reason. For this purpose, we apply a symmetrization processing step. That is, instead of keeping asking one particular player constraint questions and the other variable questions, the verifier randomly specifies Alice or Bob to answer the constraint question and asks the other to answer an associated variable question. In addition, the verifier asks consistency-check questions randomly in some rounds, where the two players are asked the same constraint question or the same variable question. In this way, the question sets of both players are the same, which are defined by the union of the set of constraints and the set of variables in the underlying BCS. We call such a game a symmetrized constraint-variable BCS game.

Here, we specify the question distribution we use in the symmetrized constraint-variable game $G^{cv}$. 
First, the referee samples a constraint $i\in [m]$ uniformly at random. The referee then samples a uniformly random variable $j$ from constraint $i$. The referee asks one of the following question pairs with uniform probability.
\begin{itemize}
    \item The referee asks Alice and Bob question $i$.
    \item The referee asks Alice and Bob question $j$.
    \item The referee asks Alice question $i$ and Bob question $j$.
    \item The referee asks Alice question $j$ and Bob question $i$.
\end{itemize}


Our first result is the following theorem.

\begin{theorem}
    \label{thm:cc_to_cv}
    Let $G$ be a constraint-constraint BCS game with $n$ variables, $m$ constraints and at most $C$ variables in each constraint, and $G^{cv}$ be the corresponding symmetrized constraint-variable BCS game.
    \begin{description}
        \item[Completeness] If $G$ has a perfect finite-dimensional quantum strategy, $G^{cv}$ has a perfect PCC quantum strategy whose shared state is maximally entangled.
        \item[Soundness] If $G^{cv}$ has a quantum strategy with winning probability $1-\epsilon$, then $G$ has a quantum strategy achieving quantum value at least $1 - \poly(C, \epsilon)$.
        \item[Zero-knowledge] If the corresponding $\MIP^*$ protocol of $G$ has perfect completeness and is perfect zero-knowledge (resp. statistical zero-knowledge and computational zero-knowledge), so is the corresponding protocol of $G^{cv}$.
    \end{description}
\end{theorem}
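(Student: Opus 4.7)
The plan is to treat the three clauses separately, with \textbf{completeness} following from the exact consistency of the variable assignments induced by a perfect strategy, \textbf{soundness} requiring a rounding step plus a propagation-of-approximate-consistency argument, and \textbf{zero-knowledge} obtained by deterministic classical post-processing of the simulator for $G$. For completeness, starting from a perfect finite-dimensional strategy $(\ket{\psi}, A, B)$ for $G$, I would define the induced variable projectors $\tilde A^{i,k}_s := \sum_{a \in C_i,\, a_k = s} A^i_a$ and $\tilde B^{j,k}_s$ analogously. Perfect value forces exact consistency: for every $k \in V_i \cap V_j$ one has $\tilde A^{i,k}_s \otimes \1 \ket{\psi} = \1 \otimes \tilde B^{j,k}_s \ket{\psi}$, and in particular $\tilde A^{i,k}_s$ and $\tilde A^{i',k}_s$ act identically on the support of Alice's reduced density for every pair $i, i'$ containing $k$. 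A single variable observable $Z^k_B$ for Bob is therefore well-defined on the relevant support, and restricting to that support and applying the standard symmetrization used in \cite{paddock2022rounding} yields a PCC strategy for $G^{cv}$ on a maximally entangled state in which Alice measures $A^i$ and Bob measures $Z^k_B$.

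For soundness, start from a strategy $S = (\ket{\psi}, A, B)$ for $G^{cv}$ with $\val^*(G^{cv}, S) = 1-\epsilon$. First I would invoke the rounding theorem of \cite{paddock2022rounding} to replace the state by a maximally entangled $\ket{\phip}$ of polynomial dimension, at the cost of a polynomial loss in $\epsilon$. On $\ket{\phip}$, the transpose trick $M \otimes \1\, \ket{\phip} = \1 \otimes M\tp\, \ket{\phip}$ lets me define Bob's $G$-measurement for constraint $j$ as $(A^j_b)\tp$, so the candidate $G$-strategy is $(A^i,\, (A^j)\tp)$. Its only failure mode is disagreement on some $k \in V_i \cap V_j$, so the task reduces to controlling cross-constraint inconsistency of $\tilde A^{\cdot, k}$. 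The $G^{cv}$ winning condition gives the averaged state-dependent bound $\E_i \E_{k \in V_i} \norm{(\tilde A^{i,k}_s \otimes \1 - \1 \otimes B^k_s)\ket{\psi}}^2 \lesssim \epsilon$. Applying this once to $i$ and once to $j$ with the same $k$, and using that $B^k_s$ is independent of the constraint, yields $\tilde A^{i,k}_s \otimes \1 \ket{\phip} \appd{O(\sqrt\epsilon)} \1 \otimes \tilde A^{j,k}_s \ket{\phip}$ in the averaged sense over $(i,j,k)$. A union bound over the at most $C$ shared variables per pair $(i,j)$ together with Markov's inequality converts this into a per-round failure probability of $\poly(m,n,C,\sqrt\epsilon)$, matching the stated loss.

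For zero-knowledge, the simulator $\Sim^{cv}$ for $G^{cv}$ on input $(i,k)$ selects a canonical constraint $j(i,k)$ containing $k$ (say the lowest-indexed such constraint), invokes the simulator $\Sim$ for $G$ on $(i, j(i,k))$ to obtain $(a, b)$, and outputs $(a, b_k)$. By the completeness construction, the perfect correlation of $G^{cv}$ satisfies $P_{G^{cv}}(a, s \mid i, k) = \sum_{b \in C_j :\, b_k = s} P_G(a, b \mid i, j)$ for every $j$ containing $k$, so $\Sim^{cv}$ reproduces the honest correlation exactly in the PZK case. The only operation on top of $\Sim$ is one deterministic classical polynomial-time post-processing step, so the statistical distance and the distinguishing advantage are preserved, immediately giving the SZK and CZK statements as well. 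The main obstacle I expect is in the soundness step: carefully tracking the polynomial loss in the rounding of \cite{paddock2022rounding} when the target is a constraint-variable game, and ensuring that the resulting approximate consistency can be transported across both sides of the maximally entangled state via the transpose trick without incurring extra dimension-dependent factors.
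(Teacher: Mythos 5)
Your proposal is correct and follows essentially the same route as the paper's proof: exact consistency of the induced variable observables plus restriction to a maximally entangled support for completeness, Paddock's rounding to an MES followed by the transpose trick and propagation of approximate consistency across the at most $C$ shared variables for soundness, and deterministic polynomial-time post-processing of the simulator for $G$ for zero-knowledge. The only cosmetic differences are that the paper obtains the MES via a Schmidt-decomposition uniqueness argument rather than citing Paddock's symmetrization, and its simulator samples the auxiliary constraint $j$ at random rather than canonically; neither changes the substance.
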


\begin{proof}
For convenience of our proof, we denote $\tilde{G}^{cv}$ as the constraint-variable game before the symmetrization step.
In the constraint-constraint BCS game, $G$, let Alice and Bob's quantum strategy be $S = (\ket{\psi}, M, N)$, where
\begin{enumerate}
    \item $\ket{\psi}$ is a finite or countably-infinite dimensional entangled state;
    \item Alice: for a constraint question $i\in[m]$, her measurement is given by projectors $\{M^i_a\}_a$, with $a\in C_i$ giving a satisfying assignment to the constraint;
    \item Bob: for a constraint question $j\in[m]$, his measurement is given by projectors $\{N^i_b\}_b$, with $b\in C_i$ giving a satisfying assignment to the constraint.
\end{enumerate}

\paragraph{Completeness.} 
Suppose the strategy $S$ can win the constraint-constraint game $G$ perfectly. On Alice's side, denote $M^{i,k}=\sum_{a \in C_i} (-1)^{a_k} M^i_a$ with $a_k=\pm1$, representing the observable for the variable labelled by $k$ in the $i$'th constraint, and we take a similar notation on Bob's side. We decompose the entangled state $\ket{\psi}$ on the Schmidt basis:
\begin{equation}\label{eq:Schmidt}
    \ket{\psi}=\sum_{i=1}^{\infty}\alpha_i\ket{\phi_i}\ket{\varphi_i},
\end{equation}
where $\{\ket{\phi_i}\}_i$ and $\{\ket{\varphi_i}\}_i$ form orthonormal bases of $\mathcal{H}$, and $\alpha_i\geq0,\forall i$, and $\sum_{i=1}^\infty\alpha_i^2=1$. In the following discussions, we restrict the Hilbert space to be the support of $\ket{\psi}$, i.e., the spaces of the subsystems are given by $\mathcal{H}=\mathrm{span}(\{\ket{\phi_i}\}_{i:\alpha_i>0})$, which is isomorphic to $\mathrm{span}(\{\ket{\psi_i}\}_{i:\alpha_i>0})$. In addition, we have the following lemma:

\begin{lemma}[Lemma~2, \cite{cleve2014characterization}]
  Consider the quantum state $\ket{\psi}\in\mathcal{H}\x\mathcal{H}$ given in Eq.~\eqref{eq:Schmidt}, and Hermitian operators $B,C_1,C_2$ acting on $\mathcal{H}$, such that $-\1 \leq C_1,C_2,B\leq \1$. If $\bra{\psi}B\x C_1\ket{\psi}=\bra{\psi}B\x C_2\ket{\psi}=1$, then $C_1=C_2$ \footnote{If some eigenvectors of these operators reside outside $\mathcal{H}$, they do not enter the calculation and their effects are also not observable.}.
\label{lemma:OperatorId}
\end{lemma}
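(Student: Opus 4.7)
The plan is to reduce the lemma to a matrix-element calculation in the Schmidt basis and show that the hypothesis, read as holding for \emph{every} admissible test operator $B$, forces the difference $D \coloneqq C_1 - C_2$ to have all zero entries on the support $\mathcal{H} = \mathrm{span}(\{\ket{\varphi_i}\}_{i:\alpha_i>0})$. Concretely, expanding both sides of the hypothesis via \eqref{eq:Schmidt} gives
\begin{equation*}
\bra\psi B \x D \ket\psi \;=\; \sum_{i,j:\,\alpha_i,\alpha_j>0} \alpha_i\,\alpha_j\,\bra{\phi_i} B \ket{\phi_j}\,\bra{\varphi_i} D \ket{\varphi_j} \;=\; 0.
\end{equation*}
Because each surviving $\alpha_i$ is strictly positive, it suffices to peel off the individual matrix elements $\bra{\varphi_i} D \ket{\varphi_j}$ by choosing $B$ with well-controlled Schmidt-basis matrix elements $\bra{\phi_i} B \ket{\phi_j}$.

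Next I would carry out the peel-off in two steps. For the diagonal entries, take $B = \ket{\phi_i}\bra{\phi_i}$, which is a rank-one projector, hence Hermitian and satisfying $0 \leq B \leq \1$; substituting yields $\alpha_i^2\,\bra{\varphi_i} D \ket{\varphi_i} = 0$, so $\bra{\varphi_i} D \ket{\varphi_i} = 0$ whenever $\alpha_i > 0$. For the off-diagonal entries with $i \neq j$, test the hypothesis against the two Hermitian operators
\begin{equation*}
B_+ \;=\; \ket{\phi_i}\bra{\phi_j} + \ket{\phi_j}\bra{\phi_i}, \qquad B_- \;=\; i\bigl(\ket{\phi_i}\bra{\phi_j} - \ket{\phi_j}\bra{\phi_i}\bigr),
\end{equation*}
each of which has spectrum contained in $\{-1,0,+1\}$ and therefore satisfies $-\1 \leq B_\pm \leq \1$. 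Substituting produces a $2\times 2$ linear system in $\bra{\varphi_i} D \ket{\varphi_j}$ and $\bra{\varphi_j} D \ket{\varphi_i}$ whose unique solution is zero. Consequently every matrix element of $D$ on the basis $\{\ket{\varphi_i}\}_{i:\alpha_i>0}$ vanishes, so $D$ is the zero operator on $\mathcal{H}$, which is the sense in which $C_1 = C_2$ (as the footnote in the statement clarifies, any action of $C_1, C_2$ off $\mathcal{H}$ is unobservable and does not enter the conclusion).

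The step I expect to require the most care is not conceptual but verificational: one must confirm that each chosen $B$ is Hermitian with operator norm at most one so that it genuinely meets the hypothesis's admissibility condition, and one must be precise about the fact that the conclusion only identifies $C_1$ and $C_2$ on the restricted Hilbert space determined by the Schmidt decomposition. Beyond this bookkeeping, the argument is a direct application of the polarization-style technique of extracting matrix entries by probing with rank-one and rank-two Hermitian perturbations, and no further machinery is needed.
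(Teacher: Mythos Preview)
The paper does not prove this lemma itself; it is cited from \cite{cleve2014characterization}. More importantly, your reading of the hypothesis as ranging over \emph{every} admissible $B$ does not match how the paper actually applies the lemma. In Steps~1--3 of the completeness argument for Theorem~\ref{thm:cc_to_cv}, the lemma is invoked with a \emph{single fixed} $B$ and with both inner products equal to $1$: for instance, from $\bra{\psi}M^{i,k}\otimes N^{i,k}\ket{\psi}=1=\bra{\psi}M^{i,k}\otimes N^{j,k}\ket{\psi}$ the authors conclude $N^{i,k}=N^{j,k}$. Your polarization argument, which requires the freedom to vary $B$ over rank-one and rank-two probes, is simply unavailable there. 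Note also that the statement read literally---one fixed $B$, mere equality of the two sides---is false: with $\ket{\psi}$ the EPR pair, $B=\sigma_z$, $C_1=\1$, $C_2=\sigma_x$, one has $\bra{\psi}B\otimes C_1\ket{\psi}=\bra{\psi}B\otimes C_2\ket{\psi}=0$ yet $C_1\neq C_2$.

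The form of the lemma the paper relies on (and the one in \cite{cleve2014characterization}) carries the additional hypothesis that the common value is $1$: if $-\1\le B,C_1,C_2\le\1$ and $\bra{\psi}B\otimes C_1\ket{\psi}=\bra{\psi}B\otimes C_2\ket{\psi}=1$, then $C_1=C_2$ on the support of $\ket{\psi}$. The proof goes via saturation of Cauchy--Schwarz: the value $1$ forces $(B\otimes\1)\ket{\psi}=(\1\otimes C_\ell)\ket{\psi}$ for $\ell=1,2$, hence $(\1\otimes(C_1-C_2))\ket{\psi}=0$; expanding in the Schmidt basis with all $\alpha_i>0$ then gives $(C_1-C_2)\ket{\varphi_i}=0$ for every $i$. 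Your polarization proof is correct for the ``for all $B$'' reading, but that reading is not the one the paper needs.
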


Based on Eq.~\eqref{eq:Schmidt} and Lemma~\ref{lemma:OperatorId}, we prove completeness with several steps. 

\begin{enumerate}
    \item Consider $k\in V_i\cap V_j$, and the following two cases: (1) Alice receives the $i$'th constraint, Bob receives the $i$'th constraint, and (2) Alice receives the $i$'th constraint, Bob receives the $j$'th constraint. Then,
    \begin{align*}
        \bra{\psi}M^{i,k}\x N^{i,k}\ket{\psi}&=1, \\
        \bra{\psi}M^{i,k}\x N^{j,k}\ket{\psi}&=1.
    \end{align*}
    By applying Lemma~\ref{lemma:OperatorId}, we have $N^{i,k}=N^{j,k}$. It follows that Bob's effective observables for the $k$'th variable in the two cases are the same. 
    Hence for any $i$ such that $k\in V_i$, we can denote $N^{i,k}\equiv B^{k}$.
    
    \item Using a similar argument, Alice's effective observables for the same variable in various constraints are the same, which we denote as $\{A^{k}\}_k$.
    
    \item As Alice's assignments need to satisfy the constraints, for the $i$'th constraint,
    \begin{align*}
        \bra{\psi}c_i(M^{i,k\in V_i})\otimes \1\ket{\psi}=1,
    \end{align*}
    where $c_i$ represents the multilinear polynomial function that defines the $i$'th constraint in the BCS.
    By applying Lemma~\ref{lemma:OperatorId}, $c_i(M^{i,k\in V_i})\equiv c_i(A^{k\in V_i})=\1$. Therefore, $\{A^k\}_k$ gives an operator-valued satisfying assignment to the BCS. Similarly, $\{B^k\}_k$ also gives an operator-valued satisfying assignment to the BCS.
    \item For every $k$, $A^k$ and $B^{k}$ satisfy
    \begin{align*}
        \bra{\psi}A^k \x B^k\ket{\psi}=1,
    \end{align*}
    hence $A^k\x \1\ket{\psi}=\1\x B^k\ket{\psi}$, given by
    \begin{align*}
        \sum_{i=1}^\infty\alpha_i(A^k\ket{\phi_i})\ket{\varphi_i}=\sum_{i=1}^\infty\alpha_i\ket{\phi_i}(B^k\ket{\varphi_i}).
    \end{align*}
    Both sides are also Schmidt decompositions of the state $\ket{\psi}$. Note that the Schmidt decomposition is unique up to a unitary operator acting on a subspace spanned by vectors of the same Schmidt coefficient. Therefore, for every subspace, we have $\mathcal{H}_{\alpha}=\mathrm{span}(\ket{\phi_i}:\alpha_i\equiv\alpha)$, $A^k\mathcal{H}_{\alpha}=\mathcal{H}_{\alpha},\forall k$. A similar argument can be made on Bob's side. Then, the operators restricted to the subspace $\mathcal{H}_{\alpha}$ also give a quantum satisfying assignment to the BCS, and the effective underlying state is the maximally entangled state $\ket{\phi^+}$. Since
    \begin{align*}
        \bra{\phip}M^{i,k}\otimes N^{i,k}\ket{\phip}=\bra{\phip}M^{i,k}(N^{i,k})^\top\otimes \1\ket{\phip}=1,
    \end{align*}
    and the operators $M^{i,k},N^{i,k}$ have eigenvalues $\pm1$, we have $M^{i,k}=N^{i,k}$.

    \item Next, we show that there exists a perfect strategy that wins the constraint-variable game $\tilde{G}^{cv}$ perfectly. For this purpose, Alice and Bob share the maximally entangled state $\ket{\phip}$, Alice applies the measurements given by projectors $\{M^i_a\}_{a}$ for each constraint $i$, and Bob applies the measurements given by observables $B^k$ for each variable $k$, where $B^k$ are observables given by $N^{i,k}$ in step~1. 
    
    \item Finally, in the symmetrized constraint-variable game $G^{cv}$, the additional questions other than those asked in $\tilde{G}^{cv}$ are the consistency check questions. Because the players share $\ket{\phip}$ and that the observables they measure are the same, they will output identical measurement outcomes. This completes the proof.
\end{enumerate}


\paragraph{Soundness.}
Suppose a quantum strategy $(\ket{\psi}, M, N)$ wins $G^{cv}$ with probability at least $1 -\epsilon$. We first round it to an EPR quantum strategy that wins $G^{cv}$ with sufficiently high probability. We take advantage of the following theorem for the proof:

\begin{theorem}[{\cite[Theorem 0.1]{marrakchi2023almost}}]
    Let $\varepsilon,\alpha\in(0,1)$, and let $G$ be an $\alpha$-synchronous game with a PCC strategy with value $1-\epsilon$. Then, there exists an EPR strategy for $G$ with value at least $1-c(\epsilon/\alpha)^{\frac{1}{4}}$, with $c$ being a universal positive constant.
\label{thm:rounding}
\end{theorem}

To apply this theorem to our case, we first examine the question distribution of $G^{cv}$. By construction, the question distribution is symmetric, thus we shall first fix the entry of Alice's question and examine the probability of possible question pairs.

\begin{enumerate}
    \item Case 1: Consider that Alice is asked a constraint question $i\in[m]$. In this case, Bob is asked either the same constraint question or a variable question $j\in V_i$. The first event is chosen with probability $\mu(i,i)=\frac{1}{4}\frac{1}{m}$; the event $(i,j)$ with $j\in V_i$ is chosen with probability $\mu(i,j)=\frac{1}{4}\frac{1}{m}\frac{1}{|V_i|}$. Therefore, we have 
    \begin{align*}
        \sum_{\ell}\mu(i,\ell)=\frac{1}{4m}+\sum_{j\in V_i}\frac{1}{4m|V_i|}=\frac{1}{4m}+|V_i|\frac{1}{4m|V_i|} = \frac{1}{2m}.
    \end{align*}
    \item Case 2: Consider that Alice is asked a variable question $j\in[n]$. In this case, Bob is asked either the same variable question or a constraint question $i\in[m]$ such that $j\in V_i$. The first event is chosen with probability 
    \begin{align*}
        \mu(j,j)=\frac{1}{4}\frac{1}{m}\sum_{i:j\in V_i}\frac{1}{|V_i|};
    \end{align*}
    for the event $(j,i)$, this question is asked with probability $\frac{1}{4}$ when the referee samples the constraint-variable pair $(i,j)\in[m]\times[n]$. Thus, the event occurs with probability $\mu(j,i)=\frac{1}{4}\frac{1}{m}\frac{1}{|V_i|}$. 
    Therefore, we have
    \begin{align*}
        \sum_{\ell}\mu(j,\ell)=\frac{1}{4}\frac{1}{m}\sum_{i: j\in V_i}\frac{1}{|V_i|}+\frac{1}{4}\frac{1}{m}\sum_{i:j\in V_i}\frac{1}{|V_i|}
        .
    \end{align*}
\end{enumerate}
In case 1, we have
\begin{align*}
    \mu(i,i)\geq\frac{1}{2}\sum_{\ell}\mu(i,\ell),
\end{align*}
for every $i\in[m]$. Similarly, in case 2, we get
\begin{align*}
    \sum_i\mu(j,j)\geq\frac{1}{2}\sum_{\ell}\mu(j,\ell),
\end{align*}
for every $j\in[n]$.
We conclude that the symmetrized constraint-variable game is $\alpha$-synchronous with $\alpha=\frac{1}{2}$. Consequently, applying Theorem~\ref{thm:rounding}, there exists an EPR strategy winning $G^{cv}$ with probability at least $1 - \poly(\epsilon^{1/4})$, where $\ket{\phi_+}$ is a maximally entangled state (MES).
Recall that there are $m$ constraints, $n$ variables in the BCS and at most $C$ variables in the each constraint.
Then we want to show there is a strategy for the corresponding constraint-constraint BCS game $G$ with a winning probability at least $1 - \poly( C,\epsilon)$.


Starting with the EPR strategy for $G^{cv}$ denoted as $(\ket{\phip},M,N)$, let $\delta \coloneqq \poly(\epsilon^{1/4})$. For each $i \in [m]$ and $k \in V_i$, let $M^{i,k} = \sum_{a \in C_i} (-1)^{a_k} M^i_a$ and $N^k = N^k_0 - N^k_1$. Let $\delta_i$ be the probability that the player's strategy fails given that the referee sampled constraint $i$. Note that $\delta \geq \sum_i\frac{1}{m}\delta_i$. We have the following bound on Alice and Bob's strategy. 


\begin{align*}
    \bra{\phip} M^{i,k} \x N^k \ket{\phip} = 2\pr{\text{win}}{i,k} - 1\geq 1 - 2 |V_i|\delta_i\geq 1-2C\delta_i,
\end{align*}
or equivalently
\begin{align*}
    \norm{M^{i,k} \x \id \ket{\phip} - N^k \x \id \ket{\phip} }^2 \leq 4 C\delta_i \text{ and }
    M^{i,k} \x \id \ket{\phip} \appd{2C\sqrt{\delta_i}} N^k \x \id \ket{\phip}.
\end{align*}
Hence for any $i, j \in [m]$ such that $k \in V_i \cap V_j$ and $b = 0,1$,
\begin{align*}
    M^{i,k}_b \x \id \ket{\phip} \appd{C\sqrt{\delta_i}} N^k_b \x \id \ket{\phip} \appd{C\sqrt{\delta_j}}
    M^{j,k}_b \x \id \ket{\phip}.
\end{align*}
Since $\ket{\phip}$ is an MES and $M^{i,k}$ and $N^k$ are Hermitian, we also have
\begin{align*}
    \id \x M^{i,k}_b \ket{\phip} \appd{C\sqrt{\delta_i}} \id \x N^k_b  \ket{\phip} \appd{C\sqrt{\delta_j}}
    \id \x M^{j,k}_b \ket{\phip}.
\end{align*}
Then,
\begin{align*}
    M^{i,k}_b \x M^{j,k}_b \ket{\phip} \appd{C\sqrt{\delta_j}} M^{i,k}_b \x N^{k}_b \ket{\phip}
    \appd{C\sqrt{\delta_i}} N^k_b \x N^{k}_b \ket{\phip} = N^k_b \x \id \ket{\phip},
\end{align*}
where we use the projector property of $(N^k_b)^2=N^k_b$ and that $A\otimes B\ket{\phip}=AB^{\mathrm{T}}\ket{\phip}$ in the last equation.
By applying the triangle inequality,
\begin{equation}\label{eq:substitute}
    \norm{M^{i,k}_b \x M^{j,k}_b \ket{\phip} - N^k_b \x \id \ket{\phip}} \leq C\left(\sqrt{\delta_i}+\sqrt{\delta_j} \right)
\end{equation}

Suppose there are two variables labelled by $k_1$ and $k_2$ where $k_1,k_2\in V_i\cap V_j$. The projector for the measurement results $b_1$ and $b_2$ on Alice's side is given by
\begin{equation}
      \sum_{\substack{\vec{a}\in C_i: \\a_{k_1}=b_1 \\a_{k_2}=b_2}}M^i_{\vec{a}} 
    = \sum_{\substack{\vec{a}_1\in C_i:\\
    a_{1_{k_1}}=b_1}}M^i_{\vec{a}_1} 
      \sum_{\substack{\vec{a}_2\in C_i:\\ 
    a_{2_{k_2}}=b_2}}M^i_{\vec{a}_2} =M^{i,k_1}_{b_1}M^{i,k_2}_{b_2},
\end{equation}
where in the first equation, we use the orthogonality between projectors $M^{i}_{\vec{a}_1}M^{i}_{\vec{a}_2}=0$ if $\vec{a}_1\neq\vec{a}_2$. Hence the winning condition is given by
\begin{align*}
    \pr{\text{win}_{k_1} \land \text{win}_{k_2}}{i,j}  &= \sum_{b_1,b_2=0,1}
    \sum_{\vec{a}\in C_i} \sum_{\substack{\vec{a'}\in C_j \\a_{k_1}=a'_{k_1}=b_1 \\a_{k_2}=a'_{k_2}=b_2}}\bra{\phip}M^i_{\vec{a}}\otimes M^j_{\vec{a}'}\ket{\phip} \\
    &= \sum_{b_1=0}^{1}\sum_{b_2=0}^{1}\bra{\phip} (M^{i,k_1}_{b_1} \x M^{j,k_1}_{b_1}) (M^{i,k_2}_{b_2} \x M^{j,k_2}_{b_2}) \ket{\phip},
\end{align*}
where $\text{win}_{k_1}$ is the event that the players give satisfying assignments agreeing on the assignment to the variable labelled by $k_1$.
In general, for $|V_i\cap V_j|=\ell\leq C$, where $k_1,\cdots,k_\ell\in V_i\cap V_j$, we have
\begin{equation}
    \sum_{\substack{\vec{a}\in C_i: \\a_{k_t}=b_t,\,t\in[\ell]}}M^i_{\vec{a}}= \sum_{\substack{\vec{a}_1\in C_i: \\a_{1_{k_1}}=b_1}}M^i_{\vec{a}_1} \sum_{\substack{\vec{a}_2\in C_i: \\ a_{2_{k_2}}=b_2}}M^i_{\vec{a}_2} \cdots \sum_{\substack{\vec{a}_\ell\in C_i: \\ a_{\ell_{k_\ell}}=b_l}}M^i_{\vec{a}_\ell}
    =\prod_{t=1}^\ell M^{i,k_t}_{b_t},
\end{equation}
and
\begin{align*}
    \pr{\land_{t=1}^\ell\text{win}_{k_t}}{i,j} 
    &=\sum_{b_1,\ldots,b_\ell = 0,1}
    \sum_{\vec{a}\in C_i} \sum_{\substack{\vec{a}'\in C_j \\a_{k_t}=a'_{k_t}=b_t,\,t\in[\ell]}}\bra{\phip}M^i_{\vec{a}}\x M^j_{\vec{a}'}\ket{\phip} \\
    &= \sum_{b_1=0}^{1}\cdots\sum_{b_\ell=0}^{1}\bra{\phip} (M^{i,k_1}_{b_1} \x M^{j,k_1}_{b_1}) \cdots (M^{i,k_\ell}_{b_\ell} \x M^{j,k_\ell}_{b_\ell}) \ket{\phip} \\
    &= \bra{\phip} \prod_{t=1}^{\ell}(\sum_{b_t=0}^{1}M^{i,k_t}_{b_t} \x M^{j,k_t}_{b_t})\ket{\phip}.
\end{align*}
Note that the product and the summation can be exchanged due to the orthogonality between different projectors $M^i_{\vec{a}}$.
Together with Eq.~\eqref{eq:substitute}, in the constraint-constraint game, 
\begin{align*}
    \pr{\land_{t=1}^\ell\text{win}_{k_t}}{i,j}
    = &\bra{\phip} \prod_{t=1}^{\ell}(\sum_{b_t=0}^{1}M^{i,k_t}_{b_t} \x M^{j,k_t}_{b_t})\ket{\phip} \\
    \appd{C\sqrt{\delta_i}+C\sqrt{\delta_j} } &\bra{\phip} \prod_{t=1}^{\ell-1}(\sum_{b_t=0}^{1}M^{i,k_t}_{b_t} \x M^{j,k_t}_{b_t})(N^{k_\ell}_0+N^{k_\ell}_1)\x \1\ket{\phip} \\
    &\cdots \\
    \appd{C\sqrt{\delta_i}+C\sqrt{\delta_j}} &\bra{\phip} \prod_{t=1}^{\ell}(N^{k_t}_0+N^{k_t}_1)\x \1\ket{\phip} \\
    =&\bra{\phip} \1\x \1\ket{\phip} \\
    =&1.
\end{align*}
There are $\ell$ steps in this deduction. By applying the triangle inequality, we have
\begin{align*}
    1-\pr{\land_{t=1}^\ell\text{win}_{k_t}}{i,j}
    \leq \ell C\left(\sqrt{\delta_i}+\sqrt{\delta_j} \right).
\end{align*}
The constraint questions are chosen uniformly at random. By averaging over questions $i,j$, 
\begin{align*}
    1-P(\text{win}) &=1-\sum_{i,j}P(i,j)\pr{\land_{t=1}^C\text{win}_{k_t}}{i,j} \\
    &\leq lC\sum_{i,j}P(i,j)(\sqrt{\delta_i}+\sqrt{\delta_j}) \\
    &= lC\frac{1}{m^2}2m\sum_i\sqrt{\delta_i} \\
    &\leq 2lC\sqrt{\delta} \\
    &=\poly( C,\epsilon),
\end{align*}
where we apply the following proposition in the final step:
\begin{proposition}
    Given $\delta_i\in(0,1),i\in[m]$ and $\sum_{i=1}^{m}\delta_i/m\leq\delta$, $\sum_{i=1}^{m}\sqrt{\delta_i}\leq m\sqrt{\delta}$.
\end{proposition}
\begin{proof}
    Denote $S=\sum_{i=1}^{m}\sqrt{\delta_i}$, we have
    \begin{align*}
        S^2=\sum_{i=1}^{m}\delta_i+\sum_{i=1}^{m}\sum_{j\neq i}\sqrt{\delta_i\delta_j}.
    \end{align*}
    For every $i,j\neq i$, $\sqrt{\delta_i\delta_j}\leq(\delta_i+\delta_j)/2$, hence
    \begin{align*}
        \sum_{i=1}^{m}\sum_{j\neq i}\sqrt{\delta_i\delta_j}\leq\frac{1}{2}\sum_{i=1}^{m}\sum_{j\neq i}(\delta_i+\delta_j).
    \end{align*}
    Therefore,
    \begin{align*}
        S^2&\leq\sum_{i=1}^{m}\delta_i+\frac{1}{2}\sum_{i=1}^{m}\sum_{j\neq i}(\delta_i+\delta_j) \\
        &=\sum_{i=1}^{m}\delta_i+\frac{1}{2}\sum_{i=1}^{m}\left[(m-2)\delta_i+\sum_{j=1}^{m}\delta_j\right] \\
        &\leq m\delta+\frac{1}{2}(m-2)m\delta+\frac{1}{2}m^2\delta \\
        &=m^2\delta,
    \end{align*}
    and hence $S=\sum_{i=1}^{m}\sqrt{\delta_i}\leq m\sqrt{\delta}$.
\end{proof}

Combining this argument with the rounding result using Theorem~\ref{thm:rounding}, if $\exists(\ket{\psi}, M, N)$ that wins the symmetrized constraint-variable game $G^{cv}$ with probability at least $1 -\epsilon$, then $(\ket{\phip}, M, M)$ wins the constraint-constraint game $G$ with probability at least $1 - \poly(C,\epsilon)$.

\paragraph{Zero-knowledge.}
Suppose there is a simulator $\Sim$ that can simulate a perfect correlation for $G$. 
Then a simulator $\Sim'$ for $G^{cv}$ can be constructed as follows. When getting constraint $i$
and a variable in it, $\Sim'$ randomly samples a constraint $j$, feeds $(i,j)$ into $\Sim$, and uses $\Sim$'s answer for constraint $i$ as its answer for constraint $i$ and $\Sim$'s assignment for the variable labelled by $k$ as its answer for $k$. 
The correlation produced by $\Sim'$ is also perfect.

If the correlation produced by $\Sim$ can win $G$ with probability $1-\epsilon$, then for each constraint $i$,
$\Sim$ can give a satisfying assignment with a probability of at least $1 - m^2\epsilon$. Hence the winning probability of $\Sim'$ constructed above is at least $1-m^2\epsilon$. When $\epsilon$ is inverse superpolynomial, so is $m^2\epsilon$, which concludes the statistical zero-knowledge part.

For computational zero-knowledge, suppose there is a distinguisher $D'$ that can distinguish any correlation produced by simulators from perfect correlations of $G'$, we can build a distinguisher $D$ for $G$.
When $D$ receives questions $(i,j)$ and their respective assignments, it selects a random variable from constraint $i$, feeds the assignment to constraint $i$ and the random variable to $D'$ and records the output of $D'$. If it is yes, $D$ repeats the same steps for constraint $j$ and output what $D'$ outputs; otherwise, $D$ outputs no. Then computational zero-knowledge follows from contrapositivity. 

\end{proof}





\section{Oracularization}
In the compression of $\MIP^*$ protocols, an important technique is the oracularization of nonlocal games. Roughly speaking, in the oracularization of a nonlocal game $G$, one player, who is called an oracle player,  receives a pair of questions $(x,y)$ that were originally sent to the two players in $G$, and the other player, who is called an isolated player, receives the question $x$ or $y$ that was originally sent to one player in $G$. The oracle player needs to generate answers $(a,b)$ to both questions satisfying the predicate of $G$, and the two players need to generate consistent answers to the same question of $x$ or $y$.
The key point of oracularization is to allow the oracle player to calculate a proof for the answers in the nonlocal game, which is necessary for answer reduction in the analysis of $\MIP^*$.

The details of oracularization are as follows. Consider a nonlocal game $G$ with two players $A$ and $B$, where the questions to the players are given by $x\in\mathcal{X},y\in\mathcal{Y}$, which are drawn with respect to the probability distribution $\mu$ over $\mathcal{X}\times\mathcal{Y}$, the players' answers are given by $a\in\mathcal{A},b\in\mathcal{B}$, and a decision predicate $V$, with $V(x,y,a,b)=1$ indicating winning the game. We define its associated oracularized nonlocal game $G^{\mathrm{OR}}$ as follows: 
First, define the type set as $\mathcal{T}^{\mathrm{OR}}=\{\mathrm{ORACLE},A,B\}$, where the elements are called types or roles. We define the type graph $\Graph^{\mathrm{OR}}$ as an undirected graph, which takes $\mathcal{T}^{\mathrm{OR}}$ as the vertex set and contains edges linking $\mathrm{ORACLE}$ and $A$, $\mathrm{ORACLE}$ and $B$, and self-loops. 
In each round of the game $G^{\mathsf{OR}}$, an edge $e$ in $\mathcal{G}^{\mathrm{OR}}$ (including self-loops) is randomly chosen, and a question pair for the game $G$, $(x,y)$, is drawn according to the distribution $\mu$. Then, each player is first specified with a role $t\in\mathcal{T}^{\mathrm{OR}}$, which are respectively the two end vertices of $e$ (when $e$ represents a self-loop, both players receive the same role). Based on the role,
\begin{enumerate}
    \item When a player is specified with the role $\mathrm{ORACLE}$, the player is further given the pair of questions $(x,y)$, who is required to generate answers $(a,b)$ satisfying $V(x,y,a,b)=1$;
    \item When a player receives the role $A$, the player is further given the question $x$, who is required to generate an answer $a\in\mathcal{A}$;
    \item When a player receives the role $B$, the player is further given the question $y$, who is required to generate an answer $b\in\mathcal{B}$.
\end{enumerate}
The players win the oracularized game if their answers are consistent and satisfy the requirement of the roles. Namely, when both players receive the role $\mathrm{ORACLE}$, their outputs are identical and satisfy $V(x,y,a,b)=1$; when one player receives the role $\mathrm{ORACLE}$ and another player receives the role $A$ (resp. $B$), the first player's answer satisfies $V(x,y,a,b)=1$, and the value $a$ (resp. $b$) is identical to the answer of the second player; when both players receive the role $A$ (resp. $B$), their outputs are identical.

As a remark, the oracularization in our work follows~\cite{natarajan2023quantum} and differs from the one in~\cite{re}, where the type graph does not contain an edge linking vertices $A$ and $B$. This design guarantees the oracularized game to be a projection game~\cite{dinur2015parallel}, which is relevant to choosing the parallel repetition method later.

\begin{definition}[\cite{dinur2015parallel}]
    Consider a nonlocal game $G$ with two players. The game is called a projection game if for any pair of questions $(x,y)$, any possible answer $b$ from the second player determines at most one valid answer $a=\pi_{xy}(b)$ for the first player.
\label{def:projection}
\end{definition}

In summary, by \cite[Theorem 9.1]{re}
if $P=(Q,V)$ is an $\MIP^*_{1,s}[q,a, t_Q, t_V]$ protocol for a language $L$ with constant soundness $s$, 
the oracularized protocol $P^{\OR}=(Q^{\OR},V^{\OR})$ is an $\MIP^*_{1,s'}[q,a, t_Q, t_V]$ protocol for $L$ with constant soundness $s'$. 
Moreover, if $\ipt \in L$ and there is a perfect PCC strategy for $P(\ipt)$, there is a perfect PCC strategy for $P^{\OR}(\ipt)$.


Here we establish the zero-knowledge property of oracularization against \emph{honest} verifiers, who never samples questions that is not in the support of $\mu^{\OR}$, such as two ORACLE type questions.
\begin{proposition}
    \label{prop:oracul}
    If $P = (Q,V)$ is a $\pzkMIP^*_{1.s}[q,a,\poly,\poly]$  (resp. $\szkMIP^*_{1.s}$ or $\czkMIP^*_{1.s}$) protocol against dishonest verifiers for a language $L$,
    the oracularized protocol $P^{\OR} = (Q^{\OR}, V^{\OR})$ is a $\pzkMIP^*_{1.s'}[q,a,\poly,\poly]$  (resp. $\szkMIP^*_{1.s'}$ or $\czkMIP^*_{1.s'}$) against honest verifiers protocol for $L$.
\end{proposition}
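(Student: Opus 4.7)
The plan is to construct, given the simulator $\Sim$ guaranteed by the zero-knowledge property of $P$, a new simulator $\Sim^{OR}$ for $P^{OR}$ that samples the correlation of the natural honest oracularized strategy. This honest strategy is built from a perfect PCC strategy $(\ket{\psi}, A, B)$ for $P$: when a player is given role $\mathrm{ORACLE}$ with question $(x,y)$, they apply the commuting product $A^x_a B^y_b$ and output $(a,b)$; when given role $A$ with question $x$ (resp.\ role $B$ with question $y$), they apply $A^x_a$ (resp.\ $B^y_b$) and output $a$ (resp.\ $b$). The PCC properties---commutativity of $A^x$ and $B^y$ on $\ket{\psi}$, and consistency $A^x_a \x \1 \ket{\psi} = \1 \x A^x_a \ket{\psi}$---imply that for each of the five edges of the type graph $\Graph^{\mathrm{ORAC}}$, the induced output distribution factors cleanly through the original correlation $\pr{a,b}{x,y}$ of $P$.

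Given this, I would define $\Sim^{OR}$ case by case on the edge, which $\Sim^{OR}$ identifies from the roles in its input. On the three edges incident to $\mathrm{ORACLE}$, the full question pair $(x,y)$ is visible to $\Sim^{OR}$, so it invokes $\Sim(x,y)$ once to obtain $(a,b)$ and emits $(a,b)$ for both players on the $(\mathrm{ORACLE}, \mathrm{ORACLE})$ edge, or $(a,b)$ for the oracle and the appropriate single coordinate for the isolated player on the $(\mathrm{ORACLE}, A)$ and $(\mathrm{ORACLE}, B)$ edges. On the $(A,A)$ self-loop, given the single question $x$, $\Sim^{OR}$ samples an auxiliary $y' \sim \mu(\cdot\,|\,x)$ from the conditional of the verifier's question distribution, invokes $\Sim(x,y')$, and emits the $a$-coordinate of the result for both players; the $(B,B)$ self-loop is handled symmetrically.

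Correctness then reduces to checking that these cases reproduce the honest distribution of $P^{OR}$. The $\mathrm{ORACLE}$-incident cases follow from the consistency identity, which collapses bipartite overlaps of the form $\bra{\psi} A^x_a B^y_b \x A^x_{a'} \ket{\psi}$ to $\pr{a,b}{x,y}$ supported on $a = a'$. The self-loop cases use the no-signaling property to note that the honest marginal on the $(A,A)$ edge is $\pa{a}{x}$, independent of which $y'$ is used, so sampling $(a,b)$ from $\Sim(x,y')$ and discarding $b$ produces the correct distribution. The extension to $\szkMIP^*$ and $\czkMIP^*$ is then immediate from data processing: $\Sim^{OR}$ is a polynomial-time randomized post-processing of $\Sim$ together with an independent sample from $Q$, so statistical (resp.\ computational) closeness to the honest correlation of $P$ carries over to closeness to the honest correlation of $P^{OR}$. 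A distinguisher against $P^{OR}$ can in turn be converted into one against $P$ by drawing a random edge of $\Graph^{\mathrm{ORAC}}$ and assembling the oracularized transcript from a single $P$-transcript.

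The main obstacle is the self-loop case: efficiency of $\Sim^{OR}$ requires polynomial-time sampling of $\mu(\cdot\,|\,x)$, which is not automatic from the polynomial-time sampling of $Q$. However, for the concrete $Q$ arising from the constraint-variable conversion of \cref{thm:cc_to_cv} and the subsequent question reduction, the sampler has an explicit two-stage form in which the second stage is drawn from a uniform distribution over a small structured set, making the required conditional samplable in polynomial time. This is the one place the argument uses structural features of $Q$ beyond the general polynomial-time guarantee; the rest is direct case-analysis bookkeeping over the five edges of $\Graph^{\mathrm{ORAC}}$.
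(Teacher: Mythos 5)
Your proposal is correct and follows essentially the same route as the paper's (one-line) proof: the simulator for $P^{OR}$ is built by running the simulator for $P$ and classically post-processing its output, with the distinguisher argument reversed for the computational case. Your edge-by-edge analysis---in particular the observation that the $(A,A)$ and $(B,B)$ self-loops force the simulator to supply a question for the absent role, and that efficient sampling of $\mu(\cdot \mid x)$ is not automatic from the general polynomial-time guarantee on $Q$---is a careful refinement of a detail the paper's proof elides, but it does not change the approach.
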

\begin{proof}
    If $S$ is a simulator for $P$, then a simulator $S'$ for $P^{\OR}$ works as follows.
    If the question is of the form $((x,y), (x,y))$, $((x,y), x)$, or $((x,y), y)$, where $(x,y) \sim Q(\ipt)$,
    then $S'$ runs $S$ on input $(x,y)$ to get answers $a,b$ and outputs $((a,b),(a,b))$, $((a,b), a)$, or $((a,b),b)$
    according to the question.
    If the question is of the form $(x,x)$ or $(y,y)$, $S'$ samples Bob's question $y'$ (Alice's question $x'$),
    then $S'$ runs $S$ on input $(x,y')$ ($(x',y)$) to get answers $(a,b)$ and outputs $(a,a)$ ( $(b,b)$).
    This establishes perfect zero-knowledge and statistical zero-knowledge.
    
    For computational zero-knowledge, note that a distinguisher $D$ for $P^{\OR}$ can be used to construct a distinguisher $D'$ for $P$ as follows. 
    On input $(x,y,a,b)$, $D'$ inputs $((x,y), x, (a,b), a)$ to $D$ and outputs what $D$ outputs. Note that indistinguishability is only worsened by a constant factor because questions of type $(\mathrm{ORACLE},A)$ is sampled with a constant probability.
    This contradicts the computational zero-knowledge property of $P$.
\end{proof}
In \cref{sec:all}, we use specific properties of the base protocol to show the oracularized protocol preserves zero-knowledge properties against \emph{dishonest verifiers}.
\section{Parallel repetition}

The idea of parallel repetition is to sample from the question distribution $\mu$ for $k$ times, give all the questions to the provers, and accept only if all their answers are accepted by the verifier $V$.
In this work, we shall utilize two versions of quantum parallel repetition, which don't have any restriction on the question distribution $\mu$. 
The first one is a general result for all nonlocal games.
\begin{theorem}[\cite{yuen2016parallel}]
    Let $G$ be a nonlocal game with two players, where the maximum quantum winning probability is given by $\mathrm{val}^*(G)=1-\varepsilon$. Then for any integer value $n>0$, the maximum probability to win $n$ games in parallel using quantum strategies is upper-bounded by
    \begin{equation}
        \mathrm{val}^*(G^{\x n})\leq c\frac{a\log{n}}{\varepsilon^{17}n^{1/4}},
    \end{equation}
    where $c$ is a universal constant, and $a$ is the bit-length of the players' answers in the base game $G$.
\label{thm:polyparallel}
\end{theorem}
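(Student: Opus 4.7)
The plan is to prove the contrapositive by the standard quantum information-theoretic approach to parallel repetition: assume there is a quantum strategy $S$ winning $G^{\otimes n}$ with probability noticeably larger than the target $c\, a\log n / (\varepsilon^{17} n^{1/4})$, and extract from it a single-shot quantum strategy for $G$ whose value strictly exceeds $1-\varepsilon$, contradicting $\val^*(G)=1-\varepsilon$. The extraction template is the familiar one: sample a random coordinate $i\in[n]$ and a random subset $T\subseteq[n]\setminus\{i\}$; embed the external single instance of $G$ into coordinate $i$; let the two players internally simulate the remaining coordinates using shared randomness and the strategy $S$; condition on the event of winning every coordinate in $T$; and read off the answers on coordinate $i$ as the outputs of the single-shot strategy.

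The quantitative core is information-theoretic. Under the contradiction hypothesis, the probability of the ``success on $T$'' event is not too small, so it does not cost too much entropy to condition on it. Using a quantum chain rule for mutual information together with a sampling argument over $i$, one shows that, on average, conditioning makes the joint state on coordinate $i$'s question and answer registers close in trace distance to a genuine product-form state corresponding to a single play of $G$. A quantum Pinsker-type inequality translates the mutual information bound into a trace-distance bound, at the cost of polynomial factors in $\varepsilon$; aggregating these losses across the two players, the answer register of length $a$, the coordinate sampling, and the repeated Pinsker applications is what produces the $a\log n$ numerator and the $\varepsilon^{17}$ exponent in the final bound.

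The main obstacle, and the technical heart of the argument, is that in the quantum setting, conditioning on measurement outcomes in coordinates outside $i$ globally disturbs the shared entangled state, so the extracted marginal is not a priori a legitimate tensor-product strategy for $G$. The remedy I would use is the \emph{anchoring transformation}: modify $G$ so that each player, independently with some constant probability, receives a distinguished ``anchor'' question on which the predicate is trivially accepted. The anchor indicator effectively decouples the players' side information by making the joint question distribution a convex mixture whose components are products; this is what licenses the use of quantum chain rules to bound how much the other coordinates can correlate Alice's and Bob's internal registers. One then proves the parallel repetition bound for $G_{\text{anchored}}^{\otimes n}$, and separately shows that anchoring changes $\val^*$ and $\varepsilon$ only by constant multiplicative factors, so the bound pulls back to $G$. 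Putting these pieces together, the conditional marginal on coordinate $i$ becomes a legitimate single-shot quantum strategy whose value quantitatively contradicts $\val^*(G)=1-\varepsilon$, yielding the claimed $c\, a\log n / (\varepsilon^{17} n^{1/4})$ bound.
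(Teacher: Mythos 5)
This statement is an imported result: the paper cites it directly from \cite{yuen2016parallel} and gives no proof of its own, so there is nothing internal to compare your argument against. Judged as a proof of the cited theorem, however, your proposal has a genuine gap at the ``pull back'' step. You propose to prove parallel repetition for the \emph{anchored} game $G_{\mathrm{anchored}}$ and then transfer the bound to $G^{\otimes n}$ on the grounds that anchoring changes $\val^*$ and $\varepsilon$ only by constant factors. But a parallel repetition bound for $G_{\mathrm{anchored}}^{\otimes n}$ is a statement about a \emph{different repeated game} than $G^{\otimes n}$: anchoring must be applied to the base game \emph{before} repetition, and there is no known reduction taking a high-value strategy for $G^{\otimes n}$ to a comparably high-value strategy for $G_{\mathrm{anchored}}^{\otimes n}$ (the players in the anchored repeated game face a correlated pattern of anchor coordinates they cannot reproduce from an execution of $G^{\otimes n}$). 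If such a transfer existed, combining it with the Bavarian--Vidick--Yuen exponential-decay theorem for anchored games would immediately yield exponential-decay parallel repetition for \emph{all} entangled games, which is a well-known open problem. So the step you flag as the ``technical heart'' is resolved by a tool that does not actually apply to the theorem as stated.

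The actual proof in \cite{yuen2016parallel} treats general games directly, without anchoring. Its mechanism is an iterative conditioning argument: one conditions on winning coordinates one at a time, showing at each stage either that some coordinate's conditional success probability is at most $1-\varepsilon/2$ (so the overall value shrinks) or that the players could extract a single-shot strategy beating $\val^*(G)$, a contradiction. The reason the decay is only polynomial ($n^{-1/4}$ rather than exponential) is precisely that the approximation error incurred in locally regenerating the post-measurement state grows with the number of conditioned coordinates, so the argument can only be iterated roughly $n^{1/2}$ times; this also sidesteps the need for quantum correlated sampling that your ``condition on a random subset $T$ all at once'' template would require. Your sketch of the information-theoretic bookkeeping (chain rule, Pinsker, the origin of $a\log n$ and $\varepsilon^{17}$) is plausible in outline, but the structural route you chose would at best reprove the anchored-game theorem, not the stated bound on $\val^*(G^{\otimes n})$ for arbitrary $G$.
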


This parallel repetition result depends on the answer length $a$. For projection games defined in Def.~\ref{def:projection}, one can obtain a parallel repetition result that is independent of the answer length.

\begin{theorem}[\cite{dinur2015parallel}]
    Let $G$ be a projection game with two players, where the maximum quantum winning probability is given by $\mathrm{val}^*(G)=1-\varepsilon$. Then for any integer value $n>0$, there exists universal constants $c_0,c_1>0$, such that the maximum probability to win $n$ games in parallel using quantum strategies is upper-bounded by
    \begin{equation}
        \mathrm{val}^*(G^{\x n})\leq (1-c_0\varepsilon^{c_1})^{n/2}.
    \end{equation}
    In addition, the constant $c_1$ can be made $c_1\leq 12$.
\label{thm:projectionparallel}
\end{theorem}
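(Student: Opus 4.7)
The plan is to follow the Raz--Holenstein style information-theoretic reduction, adapted to the quantum setting by Dinur, Steurer, and Vidick. Suppose for contradiction that there is a quantum strategy $S = (\ket{\psi}, A, B)$ for $G^{\x n}$ with value $\nu$ exceeding $(1 - c_0 \varepsilon^{c_1})^{n/2}$, and let $W_i \in \{0,1\}$ denote the indicator random variable for winning coordinate $i$. An averaging/pigeonhole argument shows that when $\nu$ is this large, for a random subset $T \subseteq [n]$ of size $t = \Theta(\varepsilon^{c_1} n)$ and a random $i^* \notin T$, the conditional winning probability on coordinate $i^*$ given the event $E_T = \{W_j = 1 \text{ for all } j \in T\}$ must exceed $1 - \varepsilon/2$ for a non-negligible fraction of choices.

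The second step converts this conditional winning bound into a single-game strategy. On base-game input $(x,y)$, Alice and Bob use shared randomness to sample $T$ and $i^*$, embed $(x,y)$ into coordinate $i^*$, sample the remaining questions consistently with the marginal, and attempt to prepare the post-$E_T$ state for measurement with the restriction of $S$ to coordinate $i^*$. Classical parallel repetition proofs implement this conditioning by independent sampling and rejection using public randomness, but in the quantum case the conditioning operates on $\ket{\psi}$ through measurement postselection, which cannot be performed unilaterally without disturbing the other player's reduced state. This is where the projection-game hypothesis (Definition~\ref{def:projection}) becomes decisive: once Bob applies his projective measurements for each $j \in T$ and obtains outcomes $b_j$, Alice's winning answer is forced to be $a_j = \pi_{x_j y_j}(b_j)$, so Bob alone can perform the $E_T$-postselection and Alice's consistency with $E_T$ is automatic from the projection structure. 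This eliminates the need for two-sided coordination on the measurement outcomes.

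The main obstacle is the quantum correlated sampling step: Bob's postselected state still depends on the questions held by both players in coordinates $[n] \setminus \{i^*\}$, and one must show that the resulting reduced state on coordinate $i^*$ can be approximately prepared by local operations on $\ket{\psi}$ with only shared randomness. The trace-distance error is bounded by a term of the form $O(\sqrt{I(B_T : Y_{[n] \setminus T})_\psi})$ using quantum-Pinsker-style inequalities, and the entropy chain rule $\sum_i I(B_i : Y_{[n] \setminus i})_\psi \leq H(B_{[n]})_\psi = O(n \log |\mathcal{B}|)$ together with averaging over $i^* \notin T$ forces the existence of an $i^*$ where this error is $O(\varepsilon)$. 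Composing the $(1 - \varepsilon/2)$ conditional winning bound with the $O(\varepsilon)$ sampling error produces a single-game strategy of value strictly greater than $1 - \varepsilon = \mathrm{val}^*(G)$, contradicting the assumption. Careful accounting of the polynomial losses in (i) the combinatorial pigeonhole, (ii) the quantum correlated sampling via Pinsker, and (iii) the projection-based removal of cross-player dependency yields the quantitative exponent $c_1 \leq 12$, matching the statement.
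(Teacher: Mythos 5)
This theorem is imported from Dinur--Steurer--Vidick \cite{dinur2015parallel}; the paper states it without proof, so your proposal has to be measured against the original argument. Your route is the Raz--Holenstein information-theoretic template: condition on the event $E_T$ of winning a random subset $T$ of coordinates, argue that the conditional success probability on a fresh coordinate $i^*$ stays above $1-\varepsilon/2$, and then embed the base game into coordinate $i^*$. That is not how the cited theorem is proved, and the difference is not cosmetic. DSV bound $\val^*(G^{\x n})$ by a relaxation of the entangled value (defined via vector/completely-positive-map strategies) that is \emph{exactly} multiplicative under tensor products, and the entire difficulty is pushed into a rounding step: if the relaxed value of a projection game is close to $1$, then its true entangled value is close to $1$. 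The projection hypothesis and their quantum correlated sampling lemma are used there, not to justify a conditioning argument; the $n/2$ in the exponent comes from the relation between the relaxation and $\val^*$, not from a pigeonhole over subsets.

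Beyond taking a different (and, for general projection games, not known to work) route, your argument has concrete gaps. First, the claim that the projection property lets Bob unilaterally postselect on $E_T$ is false: coordinate $j$ is won only if Alice actually outputs $a_j=\pi_{x_jy_j}(b_j)$, and Bob never observes Alice's answers, so knowing his own outcomes $b_j$ does not tell him whether $E_T$ occurred. \cref{def:projection} constrains what a winning answer of Alice must be; it does not make her consistency "automatic." This is precisely the obstruction that forced DSV away from the conditioning approach. Second, the information bookkeeping does not close: $\sum_i I(B_i:Y_{[n]\setminus i})\leq H(B_{[n]})=O(n\log\abs{\calB})$ yields a per-coordinate error of $O(\log\abs{\calB})$, not $O(\varepsilon)$; in Raz--Holenstein the smallness comes from conditioning on an event whose probability is lower-bounded by the assumed value, and even then the bound scales with the answer length. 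Any estimate routed through $H(B_{[n]})$ reintroduces exactly the answer-length dependence that distinguishes this theorem from \cref{thm:polyparallel}. Third, the assertion that your accounting "yields $c_1\leq 12$" is unsupported, since the intermediate steps it would have to quantify are the ones that fail.
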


Beyond soundness,
it is not hard to see that if there is a perfect PCC \hft{(EPR)} strategy for a nonlocal game $G$, there is also a perfect PCC \hft{(EPR)} strategy
for the parallel-repeated game $G^{\x m}$, because we can run the PCC strategy for $G$ in parallel.
The verification time of $G^{\x m}$ is $m$-times of that of $G$.
The perfect zero-knowledge property of parallel repetition is proved in \cite{mastel2024two},
so we establish the statistical and computational zero-knowledge property of parallel repetition.
\begin{proposition}
    \label{prop:parallel_repeat}
    If $P = (Q,V)$ is a $\pzkMIP^*_{1,s}[q,a,\poly,\poly]$  (resp. $\szkMIP^*_{1,s}$ or $\czkMIP^*_{1,s}$) protocol for a language $L$,
    then for $m = \poly(\abs{\ipt})$ with a sufficiently high polynomial degree,
    the $m$-times parallel repeated protocol $P^{\x m} = (Q^{\x m}, V^{\x m})$
    is a $\pzkMIP^*_{1,s'}[mq,ma,\poly,\poly]$  (resp. $\szkMIP^*_{1,s'}$ or $\czkMIP^*_{1,s'}$) protocol for $L$
    where the soundness $s'$ is polynomially reduced as in \cref{thm:polyparallel}. Furthermore, if $P = (Q,V)$ satisfies the projection game condition, the soundness $s'$ can be exponentially reduced as in
    \cref{thm:projectionparallel}.
\end{proposition}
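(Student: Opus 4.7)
The plan is to extend the PZK argument of \cite{mastel2024two} to the statistical and computational settings by the natural construction: given a simulator $\Sim$ for $P$, define a simulator $\Sim^{\times m}$ for $P^{\times m}$ that, on input a tuple of questions $(x_1,y_1),\ldots,(x_m,y_m)$, runs $m$ independent copies of $\Sim$, one on each pair $(x_i,y_i)$, and returns the concatenated answers. On the honest side, the perfect PCC strategy $S$ for $P$ lifted coordinatewise yields the strategy $S^{\otimes m}$ for $P^{\times m}$, whose induced correlation is exactly the $m$-fold product of the single-game correlation. Thus both the honest and simulated correlations for $P^{\times m}$ are products of the corresponding correlations for $P$, which reduces the analysis to bounding the distance between product distributions.

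For the statistical case, the goal is to show that if the simulated correlation of $P$ is within inverse-superpolynomial statistical distance $\epsilon(n)$ of the honest correlation, then the product correlations for $P^{\times m}$ are within $m\cdot\epsilon(n)$ of each other. This follows from the standard fact that for any distributions $P_i,Q_i$ we have $\|\bigotimes_{i=1}^m P_i - \bigotimes_{i=1}^m Q_i\|_1 \leq \sum_{i=1}^m \|P_i - Q_i\|_1$, proved by a one-step hybrid/triangle-inequality argument. Since $m$ is polynomial in $|\ipt|$, the product $m\cdot\epsilon(n)$ is again inverse-superpolynomial, which gives the SZK guarantee for $P^{\times m}$.

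For the computational case, the plan is a standard hybrid argument. Suppose for contradiction that there is a polynomial-time distinguisher $D^{\times m}$ that tells apart the honest correlation of $P^{\times m}$ from the simulated one with non-negligible advantage $\eta$. Define hybrid distributions $H_0,H_1,\ldots,H_m$, where $H_k$ consists of $k$ coordinates sampled from the honest correlation followed by $m-k$ coordinates sampled from $\Sim$. Then $H_0$ is the simulated correlation and $H_m$ is the honest correlation, so by the triangle inequality there is some index $k^*$ for which $D^{\times m}$ distinguishes $H_{k^*-1}$ from $H_{k^*}$ with advantage at least $\eta/m$. From this we build a polynomial-time distinguisher $D$ for the single game $P$: on input one correlation sample $(x,y,a,b)$, $D$ samples $k^*-1$ honest-looking coordinates (using the publicly known $Q$ and the simulator to produce the others for these slots only in the simulated hybrid; note that the two hybrids only differ in the $k^*$'th slot), places the challenge sample in the $k^*$'th slot, and fills the remaining $m-k^*$ slots by fresh invocations of $\Sim$, then calls $D^{\times m}$ and outputs its verdict. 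Since all the hybrid coordinates can be produced in polynomial time from $Q$ and $\Sim$, $D$ runs in polynomial time and achieves advantage $\eta/m$, contradicting the CZK property of $P$.

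The soundness bounds are inherited directly from \cref{thm:polyparallel} for the general case and \cref{thm:projectionparallel} for the projection game case, and perfect completeness follows from running the PCC strategy coordinatewise. The only genuine subtlety is the CZK hybrid: we need the simulator to be a standalone polynomial-time algorithm that does not require access to the honest strategy, so that the intermediate hybrids can be sampled efficiently by $D$; this is exactly the definition of CZK adopted in the preliminary section, so no additional assumption is required.
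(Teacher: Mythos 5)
Your treatment of PZK and SZK coincides with the paper's: the simulator for $P^{\times m}$ is $m$ independent runs of $\Sim$, PZK is delegated to \cite{mastel2024two}, and SZK follows from subadditivity of total variation distance over products together with $m=\poly(\abs{\ipt})$. Those parts are fine.

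For CZK you attempt a full hybrid argument where the paper only sketches a one-line reduction, and in principle that is the more rigorous route, but your reduction as written has a genuine gap. The hybrid $H_{k^*-1}$ (and $H_{k^*}$) requires the distinguisher $D$ to fill slots $1,\ldots,k^*-1$ with samples from the \emph{honest} correlation of $P$. Your closing claim that ``all the hybrid coordinates can be produced in polynomial time from $Q$ and $\Sim$'' is false for exactly those slots: the honest correlation is generated by entangled quantum provers measuring a shared state, and nothing in the CZK definition makes it classically samplable in polynomial time --- if it were, the simulator would be trivial to build and CZK would be vacuous. Your parenthetical remark that ``the two hybrids only differ in the $k^*$'th slot'' is true but does not help, since the common honest prefix still has to be produced by the reduction. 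The standard repairs are (i) to work with non-uniform distinguishers and fix the honest prefix by an averaging argument, hardwiring it as advice, or (ii) to restrict to definitions in which both distributions are efficiently samplable. Under the paper's stated (uniform) definition of a distinguisher, your reduction does not run in polynomial time as claimed, so you should either adopt fix (i) explicitly or restructure the argument; note that the paper's own CZK paragraph is terse enough that it sidesteps rather than resolves this same issue.
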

\begin{proof}
    The perfect zero-knowledge property is proved in \cite[Proposition 9.2]{mastel2024two}.
    Suppose a simulator can sample a quantum distribution $p$ whose statistical distance to the perfect correlation is smaller than any inverse polynomial, then a new simulator $S'$, which runs $S$ $m$
    times independently, can sample from the distribution $p^{\otimes m}$ whose statistical distance to the ideal correlation is also smaller than any inverse polynomial, because $m$ is a polynomial of $\abs{\ipt}$.
    This proves the statistical zero-knowledge property.

    For the computational zero-knowledge property, suppose that there is a distinguisher $D$ for $V^{\otimes m}$. Then for any simulator $S$ of $V$, we can run it $m$ times and feed the outputs to $D$ to distinguish its output distribution from the ideal distribution.
    This completes the proof.
\end{proof}
\section{Question reduction preserves ZK}
\label{sec:qr}
The question reduction technique is first introduced in \cite{neexp} and improved and streamlined in \cite{re,natarajan2023quantum}.
In this section, we prove the zero-knowledge property of the streamlined version in \cite{natarajan2023quantum}.

Suppose $P(\ipt)$ is an instance of an $\MIP^*$ protocol, in which $Q(\ipt)$ samples questions $x$ and $y$ and expects answers $a$ and $b$ from the provers. 
The goal of question reduction is to reduce the length of the questions exponentially.
The central idea is to let the provers sample the questions $x$ and $y$ and prove to the verifier that they have sampled the questions from the right distribution.
This is achieved by executing the Pauli basis test, detailed in \cite[Section 7.3]{re}, to certify that the provers share $\abs{x}$ EPR pairs and that they do Pauli measurements on the shared state, and that
when a special question $\mathsf{Intro}$ (short for \emph{Introspection}) is sampled with a constant probability, they first measure in the Pauli Z basis and compute the questions $x$ and $y$ honestly.
Therefore, the honest provers will measure their shared EPR pairs in the Z basis to sample the questions when they get the question $\mathsf{Intro}$, and in the X or Z basis to answer the verifier's other checks.
In this way, the question length is reduced from $q$ to $\log(q)$, because the verifier doesn't need to send $x$ and $y$ anymore.
The question reduction of \cite{re} also preserves the perfect completeness and the PCC property of the completeness,
but the soundness will increase from a constant to $1 - O(1/\polylog(n))$.
To get around this issue, in the streamlined question reduction of \cite{natarajan2023quantum}, the question-reduced verifier is 
first oracularized to make the nonlocal game a projection game.
Then parallel repetition of \cite{dinur2015parallel} can be applied to keep the question length at $\polylog(n)$ but restore the constant soundness.

In summary, from the \cite[Theorem 50]{natarajan2023quantum}, we have the following parameters for the streamlined question reduction.
\begin{proposition}
    \label{prop:qr_summary}
    Let $P=(Q,V)$ be an $\MIP^*_{1, 1/2}[q, a, t_Q, t_V]$ protocol for a language $L$, and let $P^{QR}=(Q^{QR},V^{QR})$ be the question-reduced protocol obtained after applying the question reduction technique of \cite{re} to the protocol $P$. Then $P^{QR}$ is an $\MIP^*_{1, 1/2}[q', a', t'_Q, t'_V]$
protocol for $L$, with
\begin{description}
    \item[PCC property:] If $\ipt \in L$ and there is a perfect PCC strategy for $P(\ipt)$, then there is a perfect PCC strategy for $P(\ipt)^{QR}$,
    \item[Question length:] $q'(n) = \polylog(q(n))$,
    \item[Answer length:] $a'(n) = \polylog(q(n))(q(n)+a(n) + O(1))$,
    \item[Sampling time:] $t_Q'(n) = \polylog(q(n))$, and
    \item[Verification time:] $t_V'(n) = \polylog(q(n))(t_V(n) + q(n) + O(\log(n)))$.
\end{description}
\end{proposition}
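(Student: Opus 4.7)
The plan is to derive this proposition as a direct consequence of \cite[Theorem 50]{natarajan2023quantum}, whose statement packages together the three-stage compression that replaces raw question reduction. The only nontrivial work is to (i) unpack how the three stages compose to yield precisely the parameter bounds listed in the proposition, and (ii) verify the PCC preservation claim, which is a property needed downstream for answer reduction but not always emphasized in prior statements.

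First I would describe the three-stage construction explicitly. In stage one, the introspection-based question reduction of \cite{re} is applied: instead of sending $(x,y)$, the verifier sends a short label of length $\polylog(q(n))$ that triggers either a Pauli basis test on shared EPR pairs or the introspection branch, in which the provers measure $O(q(n))$ EPR pairs in the Pauli $Z$ basis to sample their own $(x,y)$ and then return $(x,a)$ and $(y,b)$ for the original protocol. This yields $q' = \polylog(q(n))$ and answer length $O(q(n)+a(n))$, at the cost of degrading the soundness from $1/2$ to $1 - 1/\polylog(n)$. Stage two oracularizes the resulting game along the lines of \cref{prop:oracul} to make it a projection game in the sense of \cref{def:projection}, which preserves the parameters up to constants and keeps soundness bounded below $1$. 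Stage three applies $\polylog(q(n))$-fold parallel repetition via \cref{thm:projectionparallel}, which restores a constant soundness gap while multiplying the question and answer lengths and the verification time by a $\polylog(q(n))$ factor. Reading off the composed bounds then gives exactly the four parameter claims in the proposition.

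For the PCC property, I would show each of the three stages preserves the existence of a perfect PCC strategy. In stage one, an honest prover runs a Pauli $Z/X$ strategy on fresh EPR pairs — which is already a projective, commuting, and consistent strategy by standard facts about Pauli measurements on maximally entangled states — tensored with the original perfect PCC strategy for $P(\ipt)$ applied coherently after sampling $(x,y)$; projectivity, consistency, and commutation all survive the tensor product and the coherent post-selection. Stage two preserves PCC by \cite[Theorem 9.1]{re}. Stage three preserves PCC because the honest strategy for $P^{\otimes m}$ is simply the tensor power of the base PCC strategy, and commutation, consistency, and projectivity are all closed under tensor powers.

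The main obstacle, had we been reproving \cite[Theorem 50]{natarajan2023quantum} from scratch, would be the soundness analysis of stage one together with verifying that the projection-game structure is genuinely achieved after oracularization, so that \cref{thm:projectionparallel} rather than \cref{thm:polyparallel} applies and the soundness amplification does not blow up the answer length polynomially. Since we are invoking that theorem as a black box, the proof reduces to the bookkeeping above and a citation.
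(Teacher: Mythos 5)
Your proposal matches the paper's treatment: the paper offers no independent proof of this proposition, stating it as a direct summary of \cite[Theorem 50]{natarajan2023quantum}, and its surrounding prose describes exactly the same three-stage decomposition you unpack (introspection-based question reduction of \cite{re}, oracularization to a projection game, then parallel repetition via \cref{thm:projectionparallel} to restore constant soundness). Your additional bookkeeping of the parameters and the PCC-preservation argument for each stage is consistent with what the cited theorem provides and with the paper's later use of the result.
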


Below we establish the zero-knowledge property of question reduction.

\begin{proposition}
    \label{prop:que_redu}
    If $P = (Q,V)$ is a $\pzkMIP^*_{1,1/2}[q,a,\poly,\poly]$  (resp. $\szkMIP^*_{1,1/2}$ or $\czkMIP^*_{1,1/2}$) protocol for a language $L$,
    the question-reduced protocol $P^{\QR} = (Q^{\QR}, V^{\QR})$ obtained by applying 
    the question-reduction technique of \cite{natarajan2023quantum},
    is a $\pzkMIP^*_{1,1/2}[q', a', t_Q',t_V']$  (resp. $\szkMIP^*_{1,1/2}$ or $\czkMIP^*_{1,1/2}$) protocol
    for $L$, where $q', a', t_Q'$ and $t_V'$ can be derived from \cref{prop:qr_summary}.
\end{proposition}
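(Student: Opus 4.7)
The plan is to construct an efficient classical simulator for $P^{QR}$ by combining the given simulator $\Sim$ for $P$ with the efficient classical simulation of stabilizer circuits from \cite{aaronson2004improved}, and then to propagate the zero-knowledge guarantee through the oracularization and parallel-repetition wrappers that the Natarajan et al.\ construction stacks on top of the bare introspection game by invoking \cref{prop:oracul,prop:parallel_repeat}.

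First I would analyze the bare question-reduced protocol $\tilde P$ produced before its oracularization and parallel repetition. In the honest strategy of $\tilde P$ the provers hold two disjoint register blocks: a block $R_1$ with the $O(q(n))$ EPR pairs used by the Pauli basis test and the introspection procedure, and a block $R_2$ holding the entanglement of the honest zero-knowledge strategy $(\ket{\psi},M,N)$ of $P$. Non-introspection questions act only on $R_1$ via Pauli $X$ or $Z$ measurements; on the introspection question, the provers first measure the relevant qubits of $R_1$ in the $Z$ basis to obtain the original questions $(x,y)$ and then measure $R_2$ with $M^x$ or $N^y$ to obtain $(a,b)$. Because $R_1$ and $R_2$ are disjoint, the honest correlation of $\tilde P$ is a verifier-specified classical post-processing of the product of two distributions: a stabilizer distribution over Pauli measurement outcomes on EPR pairs, and the honest correlation of $P$ evaluated at the introspected question pair.

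Using this product structure I would build a simulator $\Sim_{\tilde P}$ that samples the verifier's question type internally, simulates the Pauli outcomes on the relevant EPR registers using \cite{aaronson2004improved} in $\poly(n)$ time, and, if the introspection branch is selected, feeds the sampled question $(x,y)$ to $\Sim$ and appends its output $(a,b)$. For PZK, both the stabilizer simulator and $\Sim$ are exact, so $\Sim_{\tilde P}$ reproduces the honest correlation exactly. For SZK the only error is the one inherited from $\Sim$, and it survives the classical post-processing since statistical distance does not increase under such maps. For CZK I would argue by contrapositivity: a polynomial-time distinguisher $D'$ separating $\Sim_{\tilde P}$ from the honest correlation of $\tilde P$ yields a polynomial-time distinguisher $D$ for $P$, which on input a purported $P$-transcript $(x,y,a,b)$ draws fresh Pauli outcomes via \cite{aaronson2004improved}, embeds $(x,y,a,b)$ into the introspection slot, and forwards the resulting $\tilde P$-transcript to $D'$.

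Finally, since the full construction of \cite[Theorem 50]{natarajan2023quantum} is $\tilde P$ followed by an oracularization and a parallel repetition, the zero-knowledge property survives these two steps by \cref{prop:oracul,prop:parallel_repeat}, giving the parameters recorded in \cref{prop:qr_summary}. The step I expect to be the main obstacle is justifying the product-of-distributions structure of the honest strategy: one must confirm that the introspection register $R_1$ is genuinely disjoint from the entanglement used by $(\ket{\psi},M,N)$ and that the construction of \cite{natarajan2023quantum} does not require the provers to interleave measurements on the two blocks in a way that would introduce additional correlations. I expect this to reduce to a direct inspection of the introspection gadget and the Pauli basis test described in \cite[Section 7.3]{re} and streamlined in \cite{natarajan2023quantum}.
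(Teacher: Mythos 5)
Your proposal is correct and follows essentially the same route as the paper: reduce to the bare introspection game by invoking \cref{prop:oracul,prop:parallel_repeat} for the oracularization and parallel-repetition layers, then simulate the Pauli-basis-test answers efficiently via \cite{aaronson2004improved} and call the original simulator $\Sim$ on the introspected questions $(x,y)$ in the $\mathsf{Intro}$ branch, with the same contrapositive arguments for the SZK and CZK cases. The product/disjoint-register structure you flag as the main obstacle is exactly what the paper relies on (via the honest prover behaviour of \cite[Section 8.3.2]{re}), so your anticipated verification by inspection of the introspection gadget is the right closing step.
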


\begin{proof}

    Since $q(n)$ and $a(n)$ are upper-bounded by the sampling time and verification time, respectively,
    we set $q(n) = a(n) = O(\poly(n))$, 
    which implies that $q'(n) = O(\polylog(n))$, $t_Q'(n) = O(\polylog(n))$, and $t_V'(n) = O(\poly(n))$.
Given a $\pzkMIP^*$ protocol $P$ for the instance $\ipt$ with $n = \abs{\ipt}$.
    The question-reduced protocol $P^{\QR}$ executes the Introspection game detailed in \cite[Section 8]{re} with the two provers. 
    The honest behaviour of the provers are given in \cite[Section 8.3.2]{re}.
    In this game, all the questions are about Pauli measurements on $\ket{\EPR}^{\x q(n)} = \ket{\EPR}^{\x \poly(n)}$ except for questions $\mathsf{INTRO}$, $\mathsf{READ}$, and $\mathsf{SAMPLE}$. When a prover gets these three questions, they need to measure $\ket{\EPR}^{\x \poly(n)}$ in the Z basis, compute the questions $x,y$ of $P$ based on the measurement results, and answer the questions $x,y$ by measuring an auxiliary quantum system. 
    
    Let $S$ be the simulator for $P$ and $S'$ be the simulator for $P^{\QR}$. 
    If $S'$ doesn't receive questions involving $\mathsf{INTRO}$, $\mathsf{READ}$, or $\mathsf{SAMPLE}$,
    $S'$ can simulate the honest Pauli measurements on $\ket{\EPR}^{\x \poly(n)}$
    in time $\poly(n)$. 
    On the other hand, if $S'$ gets questions involving $\mathsf{INTRO}$, $\mathsf{READ}$, or $\mathsf{SAMPLE}$, it 
    simulates the Pauli measurements to compute the questions $x,y$ of $P$. This computation involves computing a conditional linear function (See \cite[Sections 4 \& 8]{re} for details) on the measurement result.
    Since this conditional linear function can be evaluated by $V$ too, 
    it takes time $\poly(n)$.
    Then $S'$ calls $S$ on $x,y$ to get the answers for $V^{\QR}$. $S'$ gives perfect simulation because $S$ is perfect.

    Suppose the output distribution of $S$ is statistically close to the perfect distribution. In that case, the output distribution of $S'$ is even closer to the perfect correlation because the special questions $\mathsf{INTRO}$, $\mathsf{READ}$, and $\mathsf{SAMPLE}$ are only sampled with a constant probability, and all the other questions can be answered perfectly. This proves the statistical zero-knowledge property.
    Suppose a distinguisher can distinguish the output distribution of $S'$ from the perfect correlation. In that case, it must be able to distinguish the conditional distribution of the special questions, because the conditional distributions of other questions are the same. Then this distinguisher can be used to distinguish the output distribution of $S$ from the perfect one. This proves the computational zero-knowledge property. 

    Let the oracularized protocol of $P^{\QR}$ be $P^{\OR}$. All the questions pairs of $P^{\OR}$ only involve measuring EPR pairs except for the special question pairs $(\mathsf{INTRO_A}, \mathsf{INTRO_B})$, $(\mathsf{INTRO_{A/B}}, \mathsf{READ_{A/B}})$, 
$(\mathsf{INTRO_{A/B}}, \mathsf{SAMPLE_{A/B}})$
    and $(\mathsf{SAMPLE_{A/B}}, \mathsf{PAULI_Z})$.
    Next we construct a simulator $S''$ for $P^{\OR}$. If at most one question that $S''$ gets is a special pair, then
    $S''$ can just work as $S'$ above.
    If both question pairs are special, since these questions only involve Pauli-Z measurements on EPR pairs, 
    the measurement outcomes for both question pairs are consistent, either $(x,x)$, $((x,y),x)$, $((x,y),y)$ or $((x,y),(x,y))$ for the protocol $P$. Then $S$ can be used to answer these questions. The arguments for the zero-knowledge properties of $P^{\OR}$ is similar as those for $P^{\QR}$. Finally, 
    the zero-knowledge properties of parallel repeated $P^{\OR}$ follows from \cref{prop:parallel_repeat}, which completes the proof.
\end{proof}

\section{Answer reduction preserves ZK}
\label{sec:ar}
In this section, we prove that the tightened low-degree-code-based answer reduction technique summarized in \cite[Theorem 6.2]{dong2023computational} preserves zero knowledge against honest verifiers. 
Later, we apply this answer reduction technique iteratively to reduce the answer length of a $\pzkMIP^*$ protocol from polynomial to a constant.
We also show the Hadamard-code-based answer reduction summarized in \cite[Theorem 6.8]{dong2023computational} also preserves zero knowledge, but it cannot be used for succinct zero-knowledge proof.
The tightened answer reduction in \cite[Theorem 6.2]{dong2023computational} is improved upon \cite[Theorem 52]{natarajan2023quantum} as the tightened version can be applied recursively to reduce the answer size below $\log(n)$.

We first introduce Probabilistically Checkable Proof system (PCP).
\begin{definition}[PCP verifier, Definition 2.1.1 of \cite{harsha2004robust}]
    A verifier is a probabilistic polynomial-time algorithm $V$ that, on $\ipt$ of length $n$, tosses
 $r =r(n)$ random coins $R$ and generates a sequence of $q = q(n)$ queries $I = (i_1,...,i_q)$ and a circuit
 $D:\{0,1\}^q \to \{0,1\}$ of size at most $d(n)$.
 
 We think of $V$ as representing a probabilistic oracle machine that queries its oracle $\pi$ for the positions
 in $I$, receives the $q$ answer bits $\pi|_I = (\pi_{i_1}, \ldots, \pi_{i_q})$, and accepts iff $D(\pi|_I) = 1$. We write $(I,D) \overset{R}{\leftarrow} V(\ipt)$ to denote the queries and circuit generated by $V$ on $\ipt$ with random coin tosses.
 We call $r$ the randomness complexity, $q$ the query complexity, and $d$ the decision complexity of
 $V$.
\end{definition}

\begin{definition}[PCP system, Definition 2.1.2 of \cite{harsha2004robust}]
    For a function $s:\mathbb{Z}^+\rightarrow[0,1]$,
    a verifier V is a probabilistically
 checkable proof system for a language L with soundness error $s$ if the following two conditions hold for
 every string $\ipt$.

    \begin{itemize}
        \item Completeness: If $\ipt \in L$, then there exists a proof $\pi$ such that the verifier $V$ accepts the proof $\pi$ with probability $1$. That is,
        \begin{equation}
            \exists\pi,\Pr_{(I,D)
            \overset{R}{\leftarrow} V(\ipt)}[D(\pi|_I)=1]=1,
        \end{equation}
        \item Soundness: If $\ipt \notin L$, then for every proof oracle $\pi$, the verifier $V$ accepts with probability strictly less than $s(|\ipt|)$. That is,
        \begin{equation}
            \forall\pi,\Pr_{(I,D)\overset{R}{\leftarrow} V(\ipt)}[D(\pi|_I)=1]<s(|\ipt|).
        \end{equation}
    \end{itemize}
\end{definition}

Suppose $P(\ipt)$ is an instance of an $\MIP^*$ protocol, in which $Q(\ipt)$ samples questions $x$ and $y$ and expects answers $a$ and $b$ from the provers. 
The goal of answer reduction is to reduce the answer length $a(n)$ exponentially.
The key observation of the tightened answer reduction technique is that $V$ has two phases. In the first phase,
$V$ computes a predicate circuit $C^{\ipt}_{x,y}$, and in the second phase, it outputs the output of the predicate circuit 
on answers $a,b$.
Hence, instead of sending back answers $a,b$, the provers compute encodings of $a,b$ and a proof that $a,b$ satisfy $C^{\ipt}_{x,y}$ for $V$ to query, which reduces the answer length exponentially.

Specifically, the answer reduction procedure first oracularizes $P$ to get $P^{\OR}$ such that one prover gets questions $(x,y)$ with a constant probability, and this prover needs to send back $(a,b)$ such that $V(x, y, a, b) = 1$.
Then, the procedure modifies $S^{\OR}$ to get a new sampler $S^{\AR'}$ that also samples queries to the encodings of $a,b$ and the proof $\pi$ of $V(x,y,a,b) = 1$ as a question to a prover. The other prover will get $x$ or $y$ with queries to the encoding of their answer.
The new decider $D^{\AR'}$ runs the PCP checks on the first prover's answers and checks the consistency between the prover's answers. Then the procedure applies oracularization and parallel repetition to $P^{\AR'} = (S^{\AR'}, D^{\AR'})$ to improve the soundness to $<1/2$. 
The final protocol is denoted by $P^{\AR} = (S^{\AR}, D^{\AR})$.

In the complete case, one prover computes $(a,b)$ for question $(x,y)$, and the other prover computes the answer $a$ or $b$
for the question $x$ or $y$, respectively.
Then, they encode their answers using the low-degree code.
The prover, who computes $(a,b)$, further computes a PCP proof $w$ for $C^{\ipt}_{x,y}(a,b) = 1$, and encodes the proof $w$ with the low-degree code.
The proof $w$ is the values of all the internal wires of the circuit $C^{\ipt}_{x,y}$.
In addition to sending questions $x,y$, the verifier also queries the encodings of $a$, $b$ and $w$.
The provers simply answer those queries honestly.
Finally, $P^{\AR}$ is parallely repeated to reduce soundness to at most $1/2$.

In summary, by \cite[Theorem 6.2]{dong2023computational},  we have
\begin{proposition}
    \label{prop:tight_ar_summary}
    If $P = (Q,V)$ is a $\MIP^*_{1, 1/2}[q,a,t_Q, t_V, d_V]$ protocol for Language $L$,
    the answer-reduced protocol $P^{\AR} = (Q^{\AR},V^{\AR})$ is also a $\MIP^*_{1, 1/2}$ protocol for $L$
    with
\begin{description}
    \item[PCC property:] If $\ipt \in L$ and there is a perfect PCC strategy for $P(\ipt)$, then there is a perfect PCC strategy for $P(\ipt)^{AR}$, 
    \item[Question length:] $q'(n) = \polylog(d_V(n))(2q(n)+\polylog(d_V(n)))$,
    \item[Answer length:] $a'(n) = \polylog(d_V(n))$,
    \item[Sampling time:] $t_Q'(n) = \polylog(d_V(n))(t_Q(n) + \polylog(d_V(n)))$,
    \item[Verification time:] $t_V'(n) = \polylog(d_V(n))(t_Q(n)+t_V(n)+\polylog(d_V(n)))$, and
    \item[Decision complexity:] $d_V'(n) = \polylog(d_V(n))$.
\end{description}
\end{proposition}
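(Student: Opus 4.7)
The plan is to realize the answer-reduced protocol $P^{AR}$ by attaching a holographic PCP to the oracularized protocol $P^{OR}$ and then using a low-degree encoding to make individual proof bits queryable with $O(1)$ field-element answers. Concretely, I would first fix a PCP system for circuit satisfiability with $O(1)$ queries, $O(\log d_V(n))$ randomness, and decision complexity $\polylog(d_V(n))$, whose proof string is the Reed--Muller encoding of a witness over a field of size $\polylog(d_V(n))$. In the honest strategy, the ``oracle'' prover, on $(x,y)$, computes the answers $(a,b)$ to $P^{OR}$ together with a full computation transcript $w$ of $C^{\ipt}_{x,y}$ (so $C^{\ipt}_{x,y}(a,b)=1$ is certified by $w$), encodes $(a,b,w)$ under the low-degree code, and answers each verifier query with the corresponding field element; the ``isolated'' prover encodes only her part of $(a,b)$. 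The verifier's new question is $(x,y)$ together with a constant number of $\polylog(d_V)$-bit evaluation points, and the new decision runs the PCP predicate plus a consistency check between the two provers' encodings.

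Second, I would check perfect completeness and the PCC property. Completeness follows because the honest $(a,b,w)$ is accepted by every PCP random string. For the PCC property, if the original $P^{OR}(\ipt)$ admits a perfect PCC strategy, the measurements needed in $P^{AR}$ are classical post-processings (choice of encoding coefficients, point evaluations, consistency projectors) of the original PCC measurements on the same shared state, so they remain commuting, consistent, and projective; the perfect-completeness witness descends to a perfect-completeness witness for $P^{AR}$.

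The main obstacle, and the technical core of the proof, is the quantum soundness analysis. Given a near-optimal entangled strategy for $P^{AR}$, I would first invoke a quantum-sound low-degree self-test (as developed in \cite{re,natarajan2023quantum} and sharpened in \cite{dong2023computational}) to round the provers' point-evaluation measurements to a genuine commuting measurement of a low-individual-degree polynomial on each register. Plugging the resulting polynomial into the PCP soundness guarantee then forces the encoded string to be close to a valid $(a,b,w)$ witnessing $C^{\ipt}_{x,y}(a,b)=1$, from which one extracts a strategy for $P^{OR}$ with value $\geq 1-\epsilon'$ for $\epsilon' = \poly(\epsilon)$. The delicate part is synchronizing the rounding lemmas across the various sub-tests (low-degree test, plane--point consistency, PCP local view) so that the same rounded strategy can be used throughout; this is where the $1/2$ soundness of $P$ is consumed, and is the step I expect to take the most care.

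Finally, the parameter bounds are bookkeeping. The question length is $\polylog(d_V)(2q+\polylog(d_V))$ because the verifier forwards $(x,y)$ and a constant number of evaluation indices of size $\polylog(d_V)$, possibly in parallel copies to drive down the self-test error. The answer length is $\polylog(d_V)$, the size of an element of the encoding field. The sampling time $\polylog(d_V)(t_Q+\polylog(d_V))$ and verification time $\polylog(d_V)(t_Q+t_V+\polylog(d_V))$ account for running $Q$ once, sampling the PCP random string, and evaluating the (size-$d_V$) predicate through a succinct description; the decision complexity drops to $\polylog(d_V)$ since the final check is a constant-size arithmetic test on the queried field elements.
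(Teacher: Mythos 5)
This proposition is not proved in the paper at all: it is stated as a black-box import of \cite[Theorem 6.2]{dong2023computational}, and the paper's only ``proof'' is that citation. Your proposal instead attempts to reconstruct the external theorem from scratch. In outline your construction matches the known one (oracularize, have the oracle prover compute a transcript $w$ of $C^{\ipt}_{x,y}$, low-degree-encode $(a,b,w)$, and have the verifier run a constant-query PCP predicate plus cross-prover consistency checks), and your completeness/PCC discussion and parameter bookkeeping are consistent with the stated bounds.

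However, as a proof the proposal has a genuine gap at exactly the point you flag as delicate: the quantum soundness analysis is only gestured at. Saying you would ``invoke a quantum-sound low-degree self-test'' and then ``synchronize the rounding lemmas across the various sub-tests'' names the strategy but does none of the work; that synchronization (rounding the provers' measurements to commuting evaluations of a single low-degree polynomial, propagating approximation errors through the plane--point test, the PCP local checks, and the consistency tests, and then extracting a strategy for $P^{OR}$) is the multi-section technical core of \cite{re,natarajan2023quantum,dong2023computational} and cannot be treated as a one-line appeal. Relatedly, your extraction gives value $1-\poly(\epsilon)$ for $P^{OR}$, which by itself does not yield the claimed soundness $1/2$ for $P^{AR}$ from soundness $1/2$ for $P$ without tracking the constants or interposing parallel repetition (as the paper does later in \cref{sec:all}). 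Since the paper deliberately does not reprove this theorem, the honest conclusion is that your sketch is a correct roadmap to the cited result but not a substitute for it.
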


Below, we establish the zero-knowledge property of answer reduction.
\begin{proposition}
    \label{prop:ans_redu}
    If $P = (Q,V)$ is a $\pzkMIP^*_{1, 1/2}[q,a, \poly, \poly, d_V]$ (resp. $\szkMIP^*_{1,1/2}$ or $\czkMIP^*_{1,1/2}$) protocol for a language $L$,
    then the answer-reduced protocol $P^{AR} = (Q^{AR},V^{AR})$ obtained by applying the tighter answer-reduction technique in \cite[Theorem 6.2]{dong2023computational} 
    is a $\pzkMIP^*_{1, 1/2}[q', a', t_Q' ,t_V', d_V']$ (resp. $\szkMIP^*_{1,1/2}$ or $\czkMIP^*_{1,1/2}$) protocol for $L$ against honest verifiers, where the expressions of $q', a', t_Q', t_V', d_V'$ follows \cref{prop:tight_ar_summary}. 
\end{proposition}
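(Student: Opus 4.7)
The plan is to reduce the zero-knowledge property of $P^{AR}$ to that of $P$ by building a simulator $S'$ for $P^{AR}$ that invokes the simulator $S$ of $P$ as a black box and then applies a deterministic polynomial-time classical post-processing, in the spirit of the oracularization argument (\cref{prop:oracul}). Since the answer-reduction technique of \cite{dong2023computational} first oracularizes its input protocol, I may use \cref{prop:oracul} to assume without loss of generality that $P$ is already oracularized, so that in the complete case the oracle-role prover receives both $(x,y)$ and returns an $(a,b)$ satisfying $V(\ipt,x,y,a,b)=1$. The parameter bounds $q', a', t_Q', t_V', d_V'$ are provided by \cref{prop:tight_ar_summary} and do not need to be re-derived.

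The key observation is that the honest $P^{AR}$-strategy is precisely the honest $P$-strategy followed by a fixed classical function. Concretely, the oracle-role prover samples $(a,b)$ from the honest $P$-strategy, evaluates the predicate circuit $C^{\ipt}_{x,y}$ on $(a,b)$ to extract the full string of its internal wire values $w$, low-degree-encodes $(a,b,w)$, and returns the verifier's queried coordinates of this encoding; the isolated-role prover encodes its shorter $P$-answer and answers its queries similarly. The simulator $S'$ will mimic this: given a $V^{AR}$-question, it parses out the underlying $V$-question together with the PCP query positions, calls $S$ to obtain simulated $P$-answers, computes $w$ and the low-degree encoding, and returns the requested coordinates. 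Each post-processing step is deterministic and runs in time $\poly(\abs{\ipt})$, so $S'$ is efficient whenever $S$ is.

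The three notions of zero-knowledge should then transfer uniformly. For PZK, $S$ produces exactly the honest $P$-correlation and a deterministic function of it gives exactly the honest $P^{AR}$-correlation. For SZK, the same deterministic map is applied to both the simulated and the honest $P$-distributions, and deterministic classical post-processing cannot increase statistical distance, so inverse-polynomial closeness is preserved. For CZK I would argue by contrapositive: any polynomial-time distinguisher $D'$ against $S'$ yields one against $S$ by composing the post-processing with $D'$.

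The main thing to verify carefully is that $S'$ really has all the data required, and that the construction remains well-defined in the SZK/CZK regime where $S$'s outputs need not satisfy the predicate. Specifically, I plan to check that (i) $S'$ knows $\ipt$ and the full $V^{AR}$-question, hence can reconstruct $C^{\ipt}_{x,y}$ and the PCP query set, (ii) the PCP proof $w$ is defined on any inputs $(a,b)$ simply as the induced wire values of $C^{\ipt}_{x,y}$ and hence makes sense even when $S$ simulates slightly off the honest distribution, and (iii) a single coordinate of the low-degree encoding is computable in polynomial time. These are exactly the structural properties already used in the completeness analysis summarized by \cref{prop:tight_ar_summary}, so this step should be straightforward book-keeping rather than a genuine obstacle.
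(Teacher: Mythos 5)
Your proposal is correct and follows essentially the same route as the paper: run the original simulator $S$ to obtain $(a,b)$, deterministically compute the wire-value proof $w$ of $C^{\ipt}_{x,y}(a,b)$ and the low-degree encodings, answer the verifier's queried coordinates, and transfer PZK/SZK/CZK respectively by exactness, by the fact that deterministic post-processing does not increase statistical distance, and by composing the post-processing with any distinguisher in the contrapositive. Your explicit checks (i)--(iii) and the cleaner data-processing phrasing for SZK are sound refinements of the paper's argument rather than a different approach.
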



\begin{proof}
    The fact that $P$ is $\pzkMIP^*$ implies that there is a simulator $S$ such that for every 
    instance $\ipt$ and questions $x,y$ sampled by $Q(\ipt)$, the simulator can sample answers $a,b$ from a perfect quantum correlation $p_{\ipt}$ satisfying $V(x,y,a,b) = 1$ in time $\poly(n)$ where $n = \abs{\ipt}$.

    Consider the answer-reduced protocol $P^{\AR}$. We will construct a simulator $S'$ for am honest verifier of $P^{\AR}$.
    The honest verifier samples the questions $x,y$ from the questions of $P^{\OR}$ and other queries of the encodings. $S'$ parses the questions from $P$ and calls $S$ on those questions to get answers $a,b$. This step takes polynomial time because the questions from $P^{\OR}$ has a certain form.
    Then $S'$ computes the circuit $C^{\ipt}_{x,y}(a,b) = 1$, and computes the proof $w$ for $C^{\ipt}_{x,y}(a,b) = 1$. As we have discussed above, $C^{\ipt}_{x,y}$ can be computed by $V$ in $\poly(n)$ time, and 
    $w$ can be computed by tracing the evaluation of $C^{\ipt}_{x,y}(a,b)$, so the running time of $S'$ is $\poly(n)$,
    and the size of $w$ is $\poly(n)$. 
    Moreover, the additional queries from $V^{\AR}$ about the low-degree encodings of $a,b$ and $w$ can be answered in $\poly(n)$ time.
    Therefore, $S'$ can sample from a new perfect correlation $p_{\ipt}'$ for $V^{\AR}$ in $\poly(n)$ time.


    For statistical zero-knowledge, if $p_{\ipt}$ is statistically close to the perfect correlation of $V$, then $p'_{\ipt}$ is also statistically close to the perfect correlation of $V^{\AR}$ because even if $S$ generates wrong answers, there is a nonzero probability that $S'$ can answer $V^{\AR}$'s queries correctly.
    
    Lastly, if $p'_{\ipt}$ is computationally distinguishable from the perfect correlation, we can construct a new distinguisher for $p_{\ipt}$ by running the computation of $S'$ on $a,b$ no matter $a,b$ are correct or not.
    and feed the output to the distinguisher for $p'_{\ipt}$. This establishes the computational zero-knowledge property.
    
\end{proof}

The Hadamard-code-based answer reduction technique summarized in \cite[Theorem 6.8]{dong2023computational} uses Mie's PCPP \cite{mie2009short}.
We give an overview of Mie's PCPP and this answer reduction technique in \cref{appd:ans_red}.
In summary, by \cite[Theorem 6.8]{dong2023computational}, we have
\begin{proposition}
    \label{prop:had_ar_summary}
    If $P = (Q,V)$ is a $\MIP^*_{1, 1/2}[q,a,t_Q, t_V, d_V]$ protocol for Language $L$,
the answer-reduced protocol $P^{\AR} = (Q^{\AR},V^{\AR})$ is also a $\MIP^*_{1, s}$ protocol for $L$ with
\begin{description}
    \item[Soundness:] a constant $s$,
    \item[Question length:] $q'(n) = O(q(n) + a(n) + (t_V(n) + 2^{a(n)})\polylog(t_V(n)+2^{a(n)}))$,
    \item[Answer length and decision complexity:] $a'(n) = d_V'(n) = O(1)$,
    \item[Sampling time:] $t_Q'(n) = O(t_Q(n) + a(n) + (t_V(n) + 2^{a(n)})\polylog(t_V(n)+2^{a(n)}))$, and
    \item[Verification time:] $t_V'(n) = \poly(n+ q(n), a(n), \log(t_V(n) + 2^{a(n)}))$.
\end{description}
\end{proposition}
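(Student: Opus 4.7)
The plan is to treat this proposition as a parameter-tracking restatement of the Hadamard-code-based answer reduction from \cite[Theorem 6.8]{dong2023computational} (reviewed in \cref{appd:ans_red}), so the proof would essentially unpack that construction and audit each of the five bounds in turn. Completeness, soundness, and the internal correctness of Mie's PCPP I would quote directly from the cited theorem; no new nonlocal-game analysis is needed.

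First I would recall the blueprint. Starting from $P=(Q,V)$, oracularize via \cref{prop:oracul} so that, with constant probability, a single prover receives $(x,y)$ and is expected to produce $(a,b)$ satisfying the predicate circuit $C^{\ipt}_{x,y}$ of size at most $t_V(n)$. In $P^{AR}$, the oracle prover Hadamard-encodes $(a,b)$ together with a Mie-style PCPP proof $\pi$ for $C^{\ipt}_{x,y}(a,b)=1$, and the isolated prover Hadamard-encodes its own one-sided answer. The new verifier transmits $(x,y)$ plus randomness specifying a constant number of Hadamard linearity/consistency queries and PCPP local checks, and accepts iff every local check passes. The answer length and decision complexity collapse to $O(1)$ immediately, since both Hadamard linearity testing and Mie's PCPP touch only a constant number of bits and are verified by a constant-size predicate.

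Next I would verify the remaining three parameters by bookkeeping. The object being Hadamard-encoded on the oracle side has bit-length governed by the answers ($O(a(n))$) together with a Mie PCPP proof, whose length for a circuit of size $O(t_V(n) + 2^{a(n)})$ (the consistency circuit between the raw answers, their Hadamard encodings, and $C^{\ipt}_{x,y}$) is $(t_V(n)+2^{a(n)})\polylog(t_V(n)+2^{a(n)})$. An index into the Hadamard codeword has length equal to the encoded message, transmitting $(x,y)$ contributes $O(q(n))$, and specifying the PCPP-level queries contributes only $\polylog$ of the proof length; summing yields the displayed $q'(n)$ and, by the same accounting on the sampler side, the matching $t_Q'(n)$. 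Verification time is dominated by constructing $C^{\ipt}_{x,y}$ (time $t_V(n)$), generating the PCPP query specification ($\polylog$ in the proof length), and evaluating the constant-sized decision predicate, which yields $\poly(n+q(n),a(n),\log(t_V(n)+2^{a(n)}))$ as claimed.

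The main obstacle I expect is pinning down exactly which PCPP is being invoked and which instance it certifies: a careless choice would inflate the question length exponentially. One must insist on Mie's quasi-linear PCPP rather than a generic one, and one must arithmetize the \emph{consistency} circuit between $(a,b)$, their Hadamard encodings, and $C^{\ipt}_{x,y}$, whose size is $O(t_V(n)+2^{a(n)})$ rather than any larger object such as $2^{t_V(n)}$. Once that is nailed down the five bounds follow by direct accounting, and PCC preservation (not stated in the summary but implicit in the cited theorem) is obtained by the same observation as for the tightened case in \cref{prop:tight_ar_summary}: an honest PCC strategy produces deterministic classical $(a,b)$ and hence a deterministic PCPP witness whose bits are answered by commuting projective measurements on the shared state.
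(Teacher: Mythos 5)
Your proposal matches the paper's treatment: the paper states this proposition as a direct restatement of \cite[Theorem 6.8]{dong2023computational}, with the construction (oracularize, Hadamard-encode the answers together with a Mie PCPP proof of the predicate, query constantly many bits) sketched in \cref{appd:ans_red} exactly as you describe, and your parameter accounting — in particular attributing the $(t_V(n)+2^{a(n)})\polylog(\cdot)$ term to the index length into the Hadamard encoding of a quasilinear-length PCPP proof for the size-$O(t_V(n)+2^{a(n)})$ consistency circuit — is the right bookkeeping. No gap; the paper itself offers no further proof beyond this citation and overview.
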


\begin{proposition}
    \label{prop:ans_red_had}
    Let $P = (Q,V)$ is a $\pzkMIP_{1,1/2}^*[q, a, \poly, \poly, d_V]$ (resp. $\szkMIP^*_{1,1/2}$ or $\czkMIP^*_{1,1/2}$) protocol for a language $L$, and let $P^{\AR} = (Q^{\AR},V^{\AR})$
    be the $\MIP^*_{1,s}[q', O(1), t_Q', t_V', O(1)]$ protocol obtained by applying the 
    Hadamard-code-based
    answer-reduction technique of \cite{dong2023computational}, where $s$ is a constant and expressions of $q', t_Q', t_V'$ can be derived from \cref{prop:had_ar_summary}.
    Then if $q'(n) = \poly(n)$, $P^{AR}$
    is a $\pzkMIP^*_{1,s}$ (resp. $\szkMIP^*_{1,s}$ or $\czkMIP^*_{1,s}$) protocol for $L$ against \emph{honest} verifiers.
\end{proposition}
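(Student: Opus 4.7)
The plan is to mirror the proof of \cref{prop:ans_redu} essentially verbatim, since the only feature of the tightened answer reduction that was used there is that the honest answer-reduced strategy can be obtained from the honest original strategy by classical polynomial-time post-processing. Concretely, I would first describe the honest provers in $P^{AR}$: upon receiving a question from $V^{AR}$, each prover recovers the underlying question(s) of $P$, uses the original strategy to sample answers $(a,b)$ satisfying $V(x,y,a,b)=1$, the oracle prover computes a Mie PCPP proof $w$ for the statement ``$C^{\ipt}_{x,y}(a,b)=1$'', and then both provers reply to the constant number of local queries on the Hadamard encodings of $a$, $b$, and $w$. Since a single Hadamard-encoding bit at position $r$ is just the inner product $\langle r,z\rangle$ for the relevant string $z$, each queried bit is computable in time polynomial in $|z|$.

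Next I would construct the simulator $S'$ for $P^{AR}$ by running the original simulator $S$ to obtain $(a,b)$ from the perfect correlation of $P$ and then performing exactly the same classical post-processing that the honest provers do. The key point, and the place that needs to be checked with care, is that all of this runs in $\poly(n)$ time. By the hypothesis $q'(n)=\poly(n)$ together with \cref{prop:had_ar_summary}, we have $q(n)+a(n)+(t_V(n)+2^{a(n)})\polylog(t_V(n)+2^{a(n)})=\poly(n)$, which in particular forces $2^{a(n)}=\poly(n)$ and $t_V(n)=\poly(n)$. Thus the circuit $C^{\ipt}_{x,y}$ and the Mie PCPP proof $w$ both have polynomial size and are computable in polynomial time, and every Hadamard-encoding bit queried by $V^{AR}$ (whose query addresses are bounded by $q'(n)=\poly(n)$) is computable in polynomial time from $(a,b,w)$.

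Finally I would read off the three zero-knowledge conclusions. For perfect zero-knowledge, $S'$ samples exactly the honest correlation of $P^{AR}$ because $S$ samples exactly the honest correlation of $P$ and $S'$ applies the same deterministic post-processing as the honest provers. For statistical zero-knowledge, deterministic classical post-processing cannot increase statistical distance from the honest distribution, so inverse-superpolynomial closeness of $S$ transfers to $S'$. For computational zero-knowledge I would argue by contrapositive: any $\poly(n)$-time distinguisher $D'$ between $S'$'s output and the honest correlation of $P^{AR}$ yields a $\poly(n)$-time distinguisher $D$ for $P$ that, on input $(x,y,a,b)$, runs the same PCPP and Hadamard-encoding post-processing and then calls $D'$.

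The main obstacle in this plan is not conceptual but quantitative: verifying that every polynomial factor introduced by Mie's PCPP and by evaluating individual Hadamard bits stays polynomial in $n$, which crucially requires the hypothesis $q'(n)=\poly(n)$ to rule out a super-polynomial blow-up (in particular to bound $2^{a(n)}$). Once this size bookkeeping is in place, the zero-knowledge preservation arguments are identical in spirit to those used for the low-degree-code-based answer reduction in \cref{prop:ans_redu}.
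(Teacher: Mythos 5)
Your proposal is correct and follows essentially the same route as the paper's proof in the appendix: run the original simulator, apply the honest provers' deterministic polynomial-time post-processing (Mie PCPP computation plus Hadamard inner-product evaluation), and transfer the PZK/SZK/CZK guarantees, with the hypothesis $q'(n)=\poly(n)$ used exactly as you describe to keep $2^{a(n)}$ and $t_V(n)$ polynomial so that the post-processing is efficient. Your bookkeeping of why $q'(n)=\poly(n)$ bounds the PCPP size is in fact slightly more explicit than the paper's.
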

The proof of Proposition \cref{prop:ans_red_had} is similar to that of Proposition \cref{prop:ans_redu}, and we provide it in \cref{appd:ans_red}.
From the statement, we can see the limitation of this answer reduction technique is that it can only be applied to protocols with $O(\log n)$ answer size and $O(\poly(n))$ verification time because otherwise the question size and question sampling time will be superpolynomial.
Moreover, we cannot apply it to get succinct $\pzkMIP^*$ protocol, because when $t_V(n) = \poly(n)$,
the new question length is dominated by $t_V = \poly(n)$, i.e.
$q'(n) = \poly(n)$, even if $a(n) = O(\log n)$.

\section{Parametrized PZK transformation}
\label{sec:pzk}
In \cite{mastel2024two}, Mastel and Slofstra show that we can transform a BCS$-\MIP^*$ protocol into a two-prover perfect zero-knowledge $\MIP^*$ protocol using a version of the perfect zero-knowledge protocol for 3SAT due to \cite{dwork1992low}, modified to prove soundness against quantum provers. To prove our result, we need to repeatedly oracularize the protocol as part of the answer reduction step. The perfect zero-knowledge property is not preserved under oracularization. A cheating verifier can use the oracle questions to simultaneously learn the players' answers to two question pairs. The verifier may not be able to simulate the resulting correlation. To avoid this problem, we can further modify the \cite{dwork1992low} protocol to make it robust against oracularization. We begin by applying a transformation called
oblivation. 
\begin{definition}[Obliviation] \label{def:obliviation}
    Given a BCS $B$ with variables $X$ and constraints $\{(V_i,C_i)\}^m_{i=1}$ and let $k \geq 1$, let $Z =
    X\times [k]$, and $U_i = V_i\times [n]$ for any $1\leq
    i\leq m$. We denote
    $(x,i)$ by $x(i)$. Let $E_i \subseteq \{\pm1\}^{U_i}$ be the set of
    assignments $\phi$ to $U_i$ such that the assignment $\psi$ to $V_i$
    defined by $\psi(x) = \psi(x(1))\cdots \psi(x(n))$ is in $C_i$. The
    obliviation of $B$ of degree $k$ is the constraint system, denoted
    $\Obl_k(B)$ is the BCS with variables $Z$ and constraints $\{(U_i,E_i)\}_{i=1}^m$. We call the variables of $\Obl_k(B)$ oblivious variables.
\end{definition}

Obliviation has the following properties:
\begin{lemma}[\cite{mastel2024two} Lemma 9.4]\label{lem:obliviate}
    Suppose $B$ is a BCS, and let $\Obl_k(B)$ be the obliviation of $B$ for some $k \geq 1$, using the notation from \Cref{def:obliviation}. Then:
    \begin{description}
        \item[Completeness] If the constraint-constraint game for $B$ has a perfect finite-dimensional quantum strategy, then the constraint-constraint game for $\Obl_k(B)$ has a perfect finite dimensional quantum strategy.

        \item[Soundness] If the constraint-constraint game for $\Obl_k(B)$ has a quantum strategy with a winning probability of $1-\epsilon$, then the constraint-constraint game for $B$ has a quantum strategy with a winning probability of $1-\epsilon$.
    
        \item[Obliviousness] If there is a perfect finite dimensional quantum strategy for the constraint-constraint game for $B$, then there is a perfect strategy for the constraint-constraint game for $\Obl_k(B)$ such that, for any set of less than $k$ variables, the players' answers are uniformly random. 
        \end{description}
\end{lemma}

A \textbf{permutation branching
program} of width $5$ and depth $d$ on a set of variables $X$  is a tuple $P =
(X, \{(x_i,\pi_{1}^{(i)},\pi_{-1}^{(i)})\}_{i=1}^{d}, \sigma)$ where 
$x_i \in X$ and $\pi_1^{(i)}, \pi_{-1}^{(i)}$ are elements of the
permutation group $S_5$ for all $1 \leq i \leq d$, and $\sigma \in S_5$ is a
5-cycle. A permutation branching program $P$ defines a map $P : \Z_2^{X} \to
S_5$ via $P(\phi) = \prod_{i=1}^d \pi^{(i)}_{\phi(x_i)}$. A program $P$
\textbf{recognizes a constraint $C \subseteq \Z_2^{X}$} if $P(\phi) = \sigma$
for all $\phi \in C$, and $P(\phi) = e$ for all $\phi \not\in C$, where $e$ is
the identity in $S_5$.
\begin{theorem}[Barrington \cite{Barrington86}]
    Suppose a constraint $C \subseteq \Z_2^X$ is recognized by a depth $d$ fan-in 2
    boolean circuit. Then $C$ is recognized by a permutation branching program
    of width $5$ and depth $4^d$ on the variables $X$. 
\end{theorem}
Assume for the rest of the chapter that we have a canonical way of turning constraints described by fan-in 2 boolean circuits into permutation branching
programs using Barrington's theorem.

The final ingredient is randomizing tableaux, which are described using
constraints of the form $x_1 \cdots x_n = \gamma$, where the variables
$x_1,\ldots,x_n$ take values in $S_5$, $\gamma$ is a constant in $S_5$, and the
product is the group multiplication. We can now
define the $\ell$-row randomizing tableaux, modifying the definitions from \cite{dwork1992low} and \cite{mastel2024two}.
\begin{definition}\label{def:Tab}
    Let $B$ be a BCS with variables $X$ and constraints $\{(V_i,C_i)\}_{i=1}^m$ be a BCS, where each $C_i$ is described
    by a fan-in 2 boolean circuit. Let $\ell \geq 4$ and $P_i = (V_i,\{(x_{ij},\pi_{1}^{(ij)},
    \pi_{-1}^{(ij)})\}_{j=1}^{d_i},\sigma_i)$ be the permutation branching program
    recognizing $C_i$. For each $i \in [m]$, let
	\begin{equation*}
		W_i = V_{i}\sqcup\{T_{i}(p,q) : (p,q) \in [\ell] \times [d_i]\} \sqcup \{r_{i}(j,k): (j,k) \in [\ell-1] \times [d_i-1]\},
    \end{equation*}
    where $T_i(p,q)$ and $r_i(j,k)$ are new permutation-valued variables (and thus
    represent 7 boolean variables each), and let
	\begin{equation*}
        Y = X\sqcup\{T_{i}(p,q),r_{i}(j,k):(i,p,q,j,k)\in{[m]\times[\ell]\times[d_i]\times[\ell-1]\times[d_i-1]}\}
    \end{equation*}
    be the union of all the original and new variables. The variables $T_i(p,q)$
    are called tableau elements, and the variables $r_i(j,k)$ are called randomizers.

    Let $D_i$ be the constraint on variables $W_i$ which is the conjunction of the
    following clauses:
	\begin{enumerate}
		\item $T_{i}(1,q) = \pi^{(iq)}_{x_q}$ for all $q \in [d_i]$,
		\item $T_i(p+1,q) = r_i(p,q-1)^{-1}T_i(p,q)r_i(p,q)$ for $q \in [d_i]$ and $p \in [\ell-1]$, where
            we use the notation $r_i(p,0) = r_i(p,d_i) = e$, 
		\item $\prod_{1\leq q\leq d_i}T_i(\ell,q) = \sigma_i$, and
        \item a trivial constraint (meaning that all assignment are allowed) on any pair $x,y$
            of original or permutation-valued variables which do not appear in one of the above
            constraints.
	\end{enumerate}
    We get a constraint system with boolean variables representing the permutation valued variables, and constraints $\{(W_{i},D_{i})\}_{i=1}^{m}$. We further let $\{(W_{ij},D_{ij})\}_{j=1}^{m_i}$
    be a list of the clauses in (1)-(4) making up $D_i$. The \textbf{tableau} of $B$ is $\Tab_{\ell}(B)$ with variables $Y$ and constraints $\{(W_{ij},D_{ij})\}_{i\in [m], j\in
    [m_i]}$. For a fixed $i$ and $p$, we call the set of elements $\{T_i(p,q): 1\leq q\leq d_i\}$ a \textbf{row} of the tableau for constraint $i$.
\end{definition}

Our definition allows for $\ell$-row tableaux for any $\ell\geq 4$, while the tableaux in \cite{mastel2024two} are always $4$-row. The additional rows will be key in preserving the perfect zero knowledge property after multiple rounds of oracularization.

Composing the tableau construction with the obliviation transformation, we get the parametrized PZK transformation. The properties of this transformation are collected in the following lemma.
\begin{lemma}\label{lem:tab}
    Suppose $B$ is a BCS with constraints of size $V$, let $\ell\geq 4$ and $k\geq 5$, and let $\Tab_{\ell}(\Obl_{k}(B))$ be the tableau of the obliviation $\Obl_{k}(B)$, using the notation from \Cref{def:Tab} and \Cref{lem:obliviate}. Then:
    \begin{description}
        \item[Completeness] If the constraint-constraint game $G_1$ for $B$ has a perfect finite-dimensional quantum strategy, then the constraint-constraint game $G_2$ for $\Tab_{\ell}(\Obl_{k}(B))$ has a perfect finite dimensional quantum strategy.

        \item[Soundness] If the constraint-constraint game $G_2$ for $\Tab_{\ell}(\Obl_{k}(B))$ has a quantum strategy with a winning probability of $1-\epsilon$, then the constraint-constraint game $G_1$ for $B$ has a quantum strategy with winning probability $1-\poly(k,\ell,2^V)\epsilon$.
    
        \item[Perfect Correlation] There is a perfect correlation $p$ for the constraint-constraint game $G_2$ corresponding to $\Tab_{\ell}(\Obl_k(B))$ such that $p$ is a quantum correlation if and only if $G_1$ has a perfect quantum strategy. Moreover, if $\ell \geq 8$ and $k\geq 9$, then there is a perfect correlation $p^{OR}$ for the oracularization $G_2^{OR}$ of $G_2$ such that $p^{OR}$ is a quantum correlation if and only if $G_1$ has a perfect quantum oracularizable strategy.
        \end{description}
\end{lemma}


\begin{proof}
    The proof of completeness is the same as the proof of completeness of the obliviated tableau transformation in \cite{mastel2024two}. The soundness follows by noting that $\Tab_{\ell}(\Obl_k(B))$ has rows of length $2^V$ in the worst case and is constructed from $B$ by applying a classical homomorphism and subdivision transformation from \cite{mastel2024two}. 

    Construct the correlation $p$ for $G_2$ in the same way as the correlation in Proposition 9.12 of \cite{mastel2024two}, but with the potentially larger tableau and higher number of oblivious variables. In the proof of Proposition 9.12, the authors show that any non-scalar element of the subspace of the players' answers in the tableau game either contain a randomizer or have degree at most $4$ in the oblivious variables. This property still holds, as the tableau we construct is at least the same size as the \cite{mastel2024two} tableau and has at least the same degree of obliviation. By the same argument as in Proposition 9.12 of \cite{mastel2024two}, $p$ is quantum if and only if $G_1$ has a perfect quantum strategy.
    
    When $\ell \geq 8$ and $k\geq 9$, due the extra rows in the tableau, the non-scalar elements of the subspace of the players' answers in $G'$ either contain $5$ randomizers or have degree at most $4$ in the oblivious variables. Because we assume that any pair of questions may be asked simultaneously, oracularization of the game amounts to doubling the number of questions that a cheating verifier may ask each player. A cheating verifier may use this to eliminate $4$ randomizers from an element of the subspace of the players' answers by requesting two pairs of randomizers from one player. Similarly, a cheating verifier may gain access to elements of degree $8$ in the oblivious variables by asking for two question pairs that have degree $4$. Thus, the elements of the subspace of the players answers in $G_2^{OR}$ either contain a randomizer or have degree at most $8$ in the oblivious variables. Define the correlation $p^{OR}$ by the extending the procedure in Proposition 9.12 of \cite{mastel2024two} to oracle questions in the natural way. By a similar argument to that in Proposition 9.12 of \cite{mastel2024two}, $p^{OR}$ is quantum if and only if $G_1$ has a perfect oracularizable quantum strategy. 
\end{proof}
Note that if we oracularize the game $G_2^{OR}$ again, the only valid questions for each player will be oracle questions paired with type $A$ or $B$ questions. Sampling these answers is the same as sampling the answer to an oracle question in $G_2^{OR}$. Thus, adding more rounds of oracularization preserves the existence of the correlation in \Cref{lem:tab}.

\section{Putting everything together}
\label{sec:all}
In \cite{mastel2024two}, Mastel and Slofstra's construction starts with the $\MIP^*$ protocol for $\RE$ 
with $O(1)$-length questions, $\polylog$-length answers, perfect completeness and soundness $< 1/2$ 
from \cite{natarajan2023quantum}.
Then they apply their modified 3-step transformation initially proposed in \cite{dwork1992low} to turn this $\MIP^*$ protocol into a $\pzkMIP^*$ protocol
with $\polylog$-length questions, $O(1)$-length answers, perfect completeness, and soundness $\leq 1 - \poly^{-1}$.
In the end, parallel repetition is applied to reduce soundness to a constant while preserving the perfect zero-knowledge property, but the question length and answer length are increased to $\poly$.
Our proof of \cref{thm:main} can be viewed as an extension of that proof.
\begin{proof}[Proof of \cref{thm:main}]
    We first show that there exists a Turing machine $M$ that computes a $\pzkMIP^*[\polylog, O(1)]$
    protocol for $\RE$ with perfect completeness and soundness $< 1/2$.

        \begin{algorithm}[H]
        \caption{The Turing machine $M$}
        \begin{algorithmic}
            \State $P_{\text{base}} = \text{ the } \text{BCS-}\MIP^*_{1,1/2}[O(1), \polylog, O(1), \poly(n)] \text{ protocol for } \RE$ \cite{natarajan2023quantum};  
            \State $P_{3\text{SAT}} = \text{ the } \text{BCS-}\MIP^*_{1,1-1/\poly(n)}[\poly(n),O(1),\poly(n)] \text{ protocol for } \RE$ \cite{mastel2024two}; 
            \State $P_{ZK} = \Tab_{8}(\Obl_{9}(P_{3\text{SAT}}))$;
            \State $P_{ZK}^{cv} = \text{CC-to-CV-transform}(P_{ZK})$; \Comment{Constraint-Constraint game to Constraint-variable game transformation.}
            \State $P^{\PR} = (P_{ZK}^{cv})^{\otimes k}$; \Comment{$k$-fold parallel repeat of $P_1^{cv}$ to make its soundness $<1/2$.}
            \State $P_0 = \text{Question-reduce}(P^{\PR})$;
            \State $i = 0$;
            \While{Answer size of $P_i > \kappa_0$} \Comment{$\kappa_0$ is a constant determined later.}
                \State $P_{i+1} = \text{Answer-reduce}(P_i)$;
                \State $i = i+1$;
            \EndWhile
            \State Output $P_{i+1}$;
        \end{algorithmic}
    \end{algorithm}

    In \cite{mastel2024two}, the authors show by reduction form the protocol for $P_{\text{base}}$ for RE in \cite{natarajan2023quantum} that there is a BCS protocol $P_{3\text{SAT}}$ for the halting problem with polynomial length questions, where every constraint is a conjunction of three literals and the gap is inverse polynomial. By \cite[Theorem 9.15]{mastel2024two}, the tableau $P_{ZK}$ is a two-prover one-round nonlocal $\pzkMIP^*_{1,1-1/\poly}[\poly$, $O(1)$, $\poly$, $O(1)]$ protocol for $L \in \RE$.
    
    Since each instance $P_{ZK}(\ipt)$ is a constraint-constraint BCS game,  by \cref{thm:cc_to_cv}, the new protocol $P_{ZK}^{cv}$ is a constraint-variable BCS protocol.
    By \cref{thm:cc_to_cv}, the question length and the answer length of $P_{ZK}^{cv}$ remain $\poly(n)$ and $O(1)$ respectively. 
    Consequently, the verification time and the sampling time also remain the same.
    If $\ipt \in L$, $\val^*(P_{ZK}^{cv}(\ipt)) = 1$.
    If $\ipt \notin L$, $\val^*(P_{ZK}^{cv}(\ipt)) \leq 1- 1/\poly(n)$ by noticing $C = O(1)$ in the soundness part of \cref{thm:cc_to_cv}.
    Moreover, if $\ipt \in L$, since a perfect finite-dimensional quantum correlation of $P_{ZK}$ can be simulated in polynomial time by a simulator, so can 
    a perfect finite-dimensional quantum correlation of $P_{ZK}^{cv}(\ipt)$.
    Therefore, $P_{ZK}^{cv}$ is a $\pzkMIP^*_{1, 1-1/\poly}[\poly, O(1),\poly,O(1)]$ protocol.
    
    Then, by choosing $k = \poly(n)$, $P^{\PR}$ is a $\pzkMIP^*_{1,1/2}[\poly$, $\poly, \poly, \poly]$ protocol for $L$. The soundness $1/2$ follows from \cref{thm:polyparallel} and perfect zero-knowledge follows from \cref{prop:parallel_repeat}.

    After applying question reduction to $P^{\PR}$, $P_0$ is a $\pzkMIP^*_{1,1/2}[\polylog$, $\poly, \polylog, \poly]$ protocol for $L$. The parameters of the protocol and the soundness $1/2$ follows from \cref{prop:qr_summary} and perfect zero-knowledge follows from \cref{prop:que_redu}.


    Now, we are ready to give the description of a $\pzkMIP^*_{1,1/2}[\polylog, O(1), \poly, \poly]$ protocol $\bar{P}=(\bar{Q},\bar{V})$ for $L$.
    The sampler $\bar{Q}$ and decider $\bar{V}$ both start by running $M$.
    Then $\bar{Q}$ runs the sampler of the output protocol of $M$,
    and $\bar{V}$ runs the sampler of the output protocol of $M$.


    The parameters of $P_0$ are $q_0(n) = \polylog(n)$, $a_0(n) = \poly(n)$, $t_{Q,0}(n) = \polylog(n)$
    and $t_{V,0}(n) = d_{V,0}(n) = \poly(n)$.
    Following the same runtime argument in the proof of \cite[Theorem 54]{natarajan2023quantum}, we can get that by picking $\kappa_0$ large enough the number of steps of the loop of $M$ is $m = O(\log\log\log(a_0(n)))$.
    Besides the $O(1)$ answer size and decision complexity, the question size, sampling time, and verification time of $\bar{P}$ are:

    \textbf{Question size.} The question size follows the recursive relation 
\begin{align*}
    q_{i+1}(n) =  \polylog(d_{V,i}(n))(2 q_i(n) +  \polylog(d_{V,i}(n))).
\end{align*}
Since $d_{V,{i+1}}(n) = \polylog(d_{V,i}(n))$, we can bound 
\begin{align*}
    q_{m}(n) &= \polylog(d_{V,{m-1}}(n))(2 q_{m-1}(n) + \polylog(d_{V,{m-1}}(n))) \\
    &= 2^m \cdot \prod_{i=0}^{m-1} [\polylog(d_{V,i}(n))] \cdot q_{0}(n) + \sum_{i=0}^{m-1} 2^{m-1-i} \prod_{j=i}^{m-1} [\polylog(d_{V,j}(n))] \polylog(d_{V,i}(n)) \\
    &\leq 2^m \polylog(n)q_{0}(n) + 2^m m \polylog(n) \polylog(d_{V,0}(n)) \\
    &= O(\polylog(n) \polylog(n) + \polylog(n)) = O(\polylog(n)),
\end{align*}
where we use
\begin{align*}
    &\prod_{i=0}^{m-1} [\polylog(d_{V,i}(n))] = \polylog(n) \polyloglog(n) \ldots O(1) \\
    &\leq \polylog(n) \cdot \polyloglog(n)^{\polylogloglog(n)} = \polylog(n).
\end{align*}

\textbf{Sampling time.} It takes $m$ iterations for the sampler to calculate $P_m$. 
By \cite[Theorem 6.2]{dong2023computational}, $\abs{Q_i} = \abs{Q_{i-1}}+O(1)$, $\abs{V_i} = \abs{V_{i-1}} + O(1)$, and
the $(i+1)$th iteration takes time $O( \abs{Q_i}+ \abs{V_i}) = O(\abs{Q_0} + \abs{V_0} +i)$.
    Hence, the total computation time of $M$ is $O( m(\abs{Q_{0}} + \abs{V_{0}}) + m^2)$.
The running time of ${Q_m}$ follows the relation
\begin{align*}
    t_{Q,m}(n) &= \polylog(d_{V,m-1}(n)) \cdot (t_{Q,{m-1}}(n) + \polylog(d_{V,{m-1}}(n))) \\
    &= \ldots \\
    &= \prod_{i=0}^{m-1} [\polylog(d_{V,i}(n))] \cdot t_{Q,0}(n) + \sum_{i=0}^{m-1} \prod_{j=i}^{m-1} [\polylog(d_{V,j}(n))] \cdot 
    \polylog(d_{V,{i}}(n)) \\
    &\leq \polylog(n) t_{Q,0}(n) + m \polylog(n) \polylog(d_{V,0}(n)) = \poly(n).
\end{align*}
Hence, the total sampling time of $\bar{Q}$ is $O( m(\abs{Q_0} + \abs{V_0}) + m^2 + \poly(n)) = \poly(n)$
as $\abs{Q_0}$ and $\abs{V_0}$ are constants.

\textbf{Verification time.} Similar to the previous case, the time to calculate $V_m$ is
$O( m(\abs{Q_0} + \abs{V_0}) + m^2)$.
The verification time of $V_m$ is
\begin{align*}
    t_{V,m}(n) &= \polylog(d_{V,{m-1}}(n))(t_{V,{m-1}}(n) + t_{Q,{m-1}}(n) + \polylog(d_{V,{m-1}}(n))) \\
    & = \ldots \\
    & = \prod_{i=0}^{m-1} [\polylog(d_{V,i}(n))] \cdot (t_{Q, 0}(n) + t_{V,0}(n)) + \sum_{i=0}^{m-1} \prod_{j=i}^{m-1} [\polylog(d_{V,j}(n))] \cdot 
    \polylog(d_{V,{i}}(n)) \\
    & \leq \polylog(n) (t_{V,0}(n) + t_{Q,0}(n)) + m \polylog(n) \polylog(d_{V,0}(n)) = \poly(n).
\end{align*}
Hence the total verification time is $O( m(\abs{Q_0} + \abs{V_0}) + m^2 + \poly(n)) = \poly(n)$.

Lastly, the protocol $\bar{P}$ has completeness 1 and soundness at most $1/2$ by \cref{prop:tight_ar_summary}.
The perfect zero-knowledge property starts from the observation that if the base protocol samples questions $i$ and $j$ for Alice and Bob, the oracularized question of the oracularized protocol is of the form $(i,j)$.
If the base protocol is oracularized $k$ times, the oracularized question is the of the form $((i,j), \beta_1,\beta_2, \ldots, \beta_{k-1})$ where each $\beta_\ell$ is a portion of $((i,j), \beta_1,\beta_2, \ldots, \beta_{\ell-1})$.
Hence, when a dishonest verifier executes the $k$-times oracularized protocol, the verifier can learn at most four answers of the base protocol: answers to $i,j$ from Alice and answers to $i',j'$ from Bob.
In our case, the base protocol $L_0$ is oracularized $m$ times. 
Therefore, we choose the degree of obliviation to be $9$ and the number of rows in the random tableaux to be $8$ so that the answers to the four base questions are random to the verifier.
Then the simulator of $\bar{P}$ can first randomly sample the answer to these base questions, which are from a quantum correlation by \cite[Theorem 9.15]{mastel2024two}, and the answers to the questions added from answer reduction can be calculated accordingly via classical postprocessing as in the proof of \cref{prop:ans_redu}.
\end{proof}
Note that the proof above also implies that $\RE = \MIP^*[\polylog, O(1)]$.
Moreover, if we follow the same argument as above, but do answer reduction before the loop and question reduction inside the loop, we get \cref{thm:am}.
Note that the communication complexity in \cref{thm:main,thm:am} are optimal up to a $\polylog$ factor by the $\log(n)$ communication lower bound in \cite[Theorem 64]{natarajan2023quantum}.
Also, the $\poly(n)$ verification time in \cref{thm:main,thm:am} are optimal.
This is because $\bar{V}$ needs to read in $\ipt$ as $\bar{Q}$ and $\bar{V}$ are independent.

\bibliographystyle{alpha}
\bibliography{ref}

\appendix

\section{Mie's PCPP and the proof of \Cref{prop:ans_red_had}}
\label{appd:ans_red}

The Hadamard-code-based answer reduction technique relies on the $\PCPP$ constructed in~\cite{mie2009short}. 
By inspecting the construction of Mie's PCPP explicitly, we find that the PCPP proof can be computed in polynomial time in the honest case, which allows for an efficient zero-knowledge simulation. 
We first review the basics of $\PCPP$ and the construction in \cite{mie2009short}, and then give an overview of the Hadamard-based answer reduction technique. Afterwards, we prove Proposition \cref{prop:ans_red_had}.

\paragraph{Overview of Mie's PCPP.}
A \emph{pair language} is a subset of $\set{0,1}^* \times \set{0,1}^*$ of the form $(x, y)$ where $x$ is called the explicit input and $y$ is called the implicit input. 
In the context of $\PCPP$, the verifier receives the explicit input $x$, but can only query $y$.
For a pair language $L$, we define $L(x)\equiv\{y:(x,y)\in L\}$. Based on these notions, we define $\PCPP$.

\begin{definition}[PCPP, Definition~2.2.1 in~\cite{harsha2004robust}]
    Let $s,\delta:\mathbb{Z}^+\rightarrow[0,1]$.
    A random verifier $V$ is a $\PCPP$ system for a pair language $L\subseteq\Sigma^*\times\Sigma^*$ with proximity parameter $\delta$ and soundness error $s$ if the following holds for every pair of strings $(x,y)$:
    \begin{itemize}
        \item Completeness: If $(x,y)\in L$, then there exists a proof $\pi$ such that the verifier $V$ accepts the oracle $y\circ\pi$ (i.e., the implicit input concatenated by the proof) with probability $1$. That is,
        \begin{equation}
            \exists\pi,\Pr_{(I,D)
            \overset{R}{\leftarrow} V(x,|y|)}[D((y\circ\pi)|_I)=1]=1,
        \end{equation}
        where $V$ receives the input $(x,|y|)$, after which $V$ outputs a circuit $D$ and a set of positions $I$ to query in the oracle $y\circ\pi$ using random coins $R$, and $D$ outputs a binary variable after probing $(y\circ\pi)|_I$, with $D=1$ representing an acceptance.
        \item Soundness: If $y$ is at least $\delta(|x|)$-far away from $L(x)$, then for every proof oracle $\pi$, the verifier $V$ accepts the oracle $y\circ\pi$ with probability strictly less than $s(|x|)$. That is,
        \begin{equation}
            \forall\pi,\Pr_{(I,D)\leftarrow V(x,|y|)}[D((y\circ\pi)|_I)=1]<s(|x|).
        \end{equation}
    \end{itemize}
\end{definition}

We denote the length of the explicit input as $|x|=n$, the length of the implicit input as $|y|=K$, and the total length as $|x|+|y|=m$, which are all written in binary values. Consider functions $r,q:\mathbb{Z}^+\rightarrow\mathbb{Z}^+$ and $t:\mathbb{Z}^+\times\mathbb{Z}^+\rightarrow\mathbb{Z}^+$.
In the PCPP, the randomness complexity, namely the number of coin tosses, is denoted as $r(m)$; the query complexity, namely the number of queried symbols of the oracle $y\circ\pi$, is denoted as $q(m)$; the time complexity for verification is denoted as $t(n,K)$.


In our construction of a succinct $\pzkMIP^*$ system, we shall utilize a PCPP verifier with a constant query complexity~\cite{mie2009short}. The parameters of this PCPP are given as follows.

\begin{theorem}[Mie's PCPP, Theorem~1 in~\cite{mie2009short}]
    \label{thm:mie_pcpp}
    Consider a pair language in $\mathsf{NTIME}(T)$ for a non-decreasing function $T:\mathbb{Z}^+\rightarrow\mathbb{Z}^+$. The length of any purported word $(x,y)$ of $L$ is denoted as $m=n+K$, where $n = \abs{x}$, and $K=\abs{y}$. Then, for any constants $s,\delta>0$, there exists a PCPP verifier with soundness error $s$ and proximity parameter $\delta$, such that
    \begin{itemize}
        \item randomness complexity $r(m)=\log{T(m)}+O(\log\log{T(m)})$;
        \item query complexity $q(m)=O(1)$;
        \item time complexity $t(n,K)=\poly(n,\log{K},\log{T(m)})$.
     \end{itemize}   
\end{theorem}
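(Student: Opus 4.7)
The plan is to follow the standard two-stage architecture of short PCPs: build an outer PCPP of quasilinear length with low (polylogarithmic) query complexity, then apply proof composition with a constant-query inner PCPP to reduce the number of queries to $O(1)$ while preserving the randomness and time complexity bounds. The starting reduction is the Cook--Levin style transformation: any language $L \in \mathsf{NTIME}(T)$ can be reduced in time $\poly(n, \log K, \log T(m))$ to an instance of a succinctly described Boolean circuit satisfiability problem over a circuit of size $\tilde{O}(T(m))$, and then to an algebraic constraint satisfaction problem (ACSP) over a finite field $\mathbb{F}$ of size $\poly\log T(m)$. In this algebraic form, a satisfying assignment corresponds to a low-degree extension over an affine subspace of $\mathbb{F}^\ell$ of dimension $\ell = O(\log T(m) / \log\log T(m))$.

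The outer PCPP follows the Ben-Sasson--Sudan construction: the verifier's task is reduced to testing (i) that a given oracle is close to the evaluation of a low-degree polynomial (via a Reed--Muller or Reed--Solomon proximity test) and (ii) that this polynomial satisfies a set of algebraic identities derived from the ACSP (via zero-on-subspace sub-tests). Using the PCPP-of-proximity framework with the efficient low-degree testers available over small-characteristic fields, one obtains a PCPP with randomness complexity $\log T(m) + O(\log\log T(m))$, query complexity $q_{\mathrm{out}}(m) = \polylog T(m)$, decision predicate of size $\polylog T(m)$, and verification time $\poly(n, \log K, \log T(m))$. Crucially, the outer decision predicate at each random string is a small circuit whose inputs are the queried symbols, so its acceptance can itself be cast as a pair language with explicit input of polylogarithmic length.

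To drive the query complexity down to a constant, I would invoke proof composition: attach an inner PCPP verifier to each local decision circuit of the outer verifier, where the inner PCPP handles the (now polylog-sized) task with constant queries and constant soundness error. A natural choice of inner system is a Hadamard-based or Reed--Muller-based PCPP (e.g.\ the one underlying the ALMSS inner verifier) whose parameters are already known to give $O(1)$ queries, constant soundness, and verification time polynomial in the inner instance size. The robust-soundness framework ensures that composition preserves the outer randomness and time bounds up to additive $\polylog T(m)$ terms, and allows the final soundness error and proximity parameter to be pushed below any desired constants $s, \delta > 0$ by iterating or by tuning the inner verifier's parameters.

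The main obstacle in a rigorous write-up is not the high-level architecture but the bookkeeping needed to make the outer PCPP \emph{quasilinear} in length --- i.e.\ to avoid a polynomial blow-up in $T(m)$ that would violate the randomness bound $r(m) = \log T(m) + O(\log\log T(m))$. This is precisely the contribution of the Ben-Sasson--Sudan short-PCP technology and of Mie's refinement to a constant-query regime, and it requires a careful choice of small-characteristic field, a routing of the ACSP constraints via an $\tilde{O}(T)$-sized affine subspace, and a proximity test whose proof length is $\tilde{O}(T)$. Once this quasilinear outer system is in place, the composition step and the extraction of the final parameters ($q = O(1)$, $r = \log T + O(\log\log T)$, $t = \poly(n, \log K, \log T)$) are routine consequences of the robustness of the outer verifier and the constant-query complexity of the inner one.
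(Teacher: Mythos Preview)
This theorem is cited rather than proved in the paper, but the appendix does give an overview of Mie's construction, and your proposal takes a different architectural route from it. The paper's description (following Mie) does not use the outer/inner proof-composition paradigm you outline; instead it starts from the Ben-Sasson type PCPP with $\polylog T(m)$ queries, oracularizes it to a two-query constraint graph over a large alphabet, and then applies a Dinur-style \emph{gap amplification} procedure: expanderize the constraint graph, take graph powers to amplify the unsatisfiability gap, and apply an assignment tester to keep the alphabet bounded. This cycle is iterated $O(\log\log T(m))$ times, followed by a final alphabet reduction to binary. The key quantitative feature is that each gap-amplification round increases the proof length by only a constant factor, so $O(\log\log T)$ rounds add only $O(\log\log T)$ to the randomness --- exactly the slack allowed in the stated bound $r(m)=\log T(m)+O(\log\log T(m))$.

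Your composition-based sketch is a legitimate alternative philosophy, but as written it has a gap in the randomness accounting. You write that composition preserves the outer randomness ``up to additive $\polylog T(m)$ terms,'' and that is precisely the problem: the inner verifier must operate on the outer decision predicate, which has size $s=\polylog T(m)$, and a Hadamard/ALMSS-style constant-query inner PCPP on an instance of size $s$ needs a proof of length $2^{\Theta(s)}$ and hence randomness $\Theta(s)$. The added randomness is then $\polylog T(m)$, which swamps the target $r(m)=\log T(m)+O(\log\log T(m))$. To make composition meet this budget you would need \emph{iterated} composition that shrinks the decision-predicate size in stages (as in BGHSV), not a single inner application, and this iteration is absent from your proposal. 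Mie's gap-amplification approach sidesteps this bookkeeping because each round costs only a constant multiplicative factor in proof length, so the $O(\log\log T)$ rounds fit the randomness budget directly.
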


We shall explain that the proof in this PCPP system can be computed in polynomial time. To see this, we briefly explain its construction. Mie's PCPP starts from an efficient Ben-Sasson-type (BS-type) PCPP, which has a polylogarithmic verification time~\cite{harsha2004robust,ben2005short}. This PCPP has the following parameters.

\begin{theorem}[Theorem~1.3.2 in~\cite{harsha2004robust} and Theorem~2.5 in~\cite{ben2005short} (restated in Theorem~2 in~\cite{mie2009short})]
    Consider a pair language in $\mathsf{NTIME}(T)$ for a non-decreasing function $T:\mathbb{Z}^+\rightarrow\mathbb{Z}^+$. The length of any purported word $(x,y)$ of $L$ is denoted as $m=n+K$, where $n = \abs{x}$, and $K = \abs{y}$. Then, for any constants $s,\delta>0$, there exists a BS-type PCPP verifier with soundness error $s$ and proximity parameter $\delta$, such that
    \begin{itemize}
        \item randomness complexity $r(m)=\log{T(m)}+O(\log\log{T(m)})$;
        \item query complexity $q(m)=\polylog{T(m)}$;
        \item time complexity $t(n,K)=\poly(n,\log{K},\log{T(m)})$.
    \end{itemize}    
\end{theorem}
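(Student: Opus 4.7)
This is a classical PCP-style result, and the plan is to reconstruct the Ben-Sasson--Sudan / Harsha--Sudan construction. The overall idea is to arithmetize the $\NTIME(T)$ computation, encode the nondeterministic witness as a low-degree polynomial over a small field, and certify the verifier's accept-condition via a low-degree test plus sum-check-style identities, with all parameters traded against each other to hit the claimed randomness, query, and time bounds.

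First, I would reduce the given pair language $L \in \NTIME(T)$ to a succinct algebraic constraint system of size $\tilde{O}(T(m))$. The key requirement is that the constraints be \emph{locally describable}: given an index of a gate or constraint, the verifier should be able to produce its algebraic description in time $\poly(n, \log T(m))$, without reading the implicit input $y$ in full. This is achieved by routing the computation on a de~Bruijn (or butterfly) network of size $T(m)\,\polylog T(m)$ so that adjacency in the computation graph becomes a simple algebraic relation on indices, letting neighbour queries be answered in $\polylog T(m)$ time.

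Second, I would fix a field $\mathbb{F}$ of size $\polylog T(m)$ and a subset $H \subset \mathbb{F}$ with $|H| = h = \polylog T(m)$, pick $\ell$ with $h^{\ell} \geq T(m)$ (so $\ell = O(\log T(m)/\log h)$), and have the prover commit to the low-degree extension $\hat{W}\colon \mathbb{F}^{\ell} \to \mathbb{F}$ of the witness together with auxiliary polynomials certifying (a)~that the canonical constraint polynomial vanishes on the subcube $H^{\ell}$, and (b)~that $\hat{W}$ agrees with the implicit input $y$ at the appropriate positions. The verifier then performs (i)~an axis-parallel low-degree test on each claimed polynomial, (ii)~a sum-check / zero-on-subcube test for the constraint identity, and (iii)~a consistency check between random positions of $\hat{W}$ and $y$. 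Choosing a random point in $\mathbb{F}^{\ell}$ uses $\ell \log|\mathbb{F}| = \log T(m) + O(\log\log T(m))$ coins, the three tests together read $\polylog T(m)$ symbols of the oracle, and evaluating any single constraint at a chosen point costs $\poly(n, \log K, \log T(m))$, which matches the claimed verification time.

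The main obstacle, as in all short-PCP constructions, is engineering the arithmetization so that it is simultaneously succinct (each constraint describable in $\poly(n,\log T(m))$), locally checkable (fitting the sum-check / low-degree framework), and faithful (completeness and soundness carrying through the reduction). Once the routing-based succinct CSP is in place, the soundness constant $s$ and the proximity parameter $\delta$ are calibrated by the standard robust PCPP analysis of Harsha--Sudan: by taking $|\mathbb{F}|$ a large enough $\polylog T(m)$, each individual test fails with probability at least a constant whenever the oracle is $\delta$-far from a valid proof, and a constant number of repetitions drives the error below $s$ without changing the asymptotic parameters.
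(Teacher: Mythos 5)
First, note that the paper does not prove this statement at all: it is imported verbatim by citation (Harsha's thesis, Ben-Sasson--Sudan, as restated by Mie) and used as a black box, so there is no in-paper argument to compare yours against. Judged on its own terms, your sketch is a reasonable outline of a Reed--Muller-style robust PCPP (routing on a de~Bruijn network for succinct constraint description, low-degree extension of the witness, sum-check/vanishing tests, axis-parallel low-degree tests, and a proximity check against the implicit input $y$), and the succinctness and proximity ingredients you identify are indeed the right ones for the time-complexity bound $\poly(n,\log K,\log T(m))$.

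However, there is a genuine gap in the randomness accounting, which is the heart of this particular theorem. With a field of size $\abs{\mathbb{F}}=\polylog T(m)$ and a subcube $H^{\ell}$ with $h^{\ell}\geq T(m)$, the proof table $\hat W:\mathbb{F}^{\ell}\to\mathbb{F}$ has size $\abs{\mathbb{F}}^{\ell}=(h^{\ell})^{\log\abs{\mathbb{F}}/\log h}=T(m)^{c/c'}$ where $\abs{\mathbb{F}}=(\log T)^{c}$ and $h=(\log T)^{c'}$; soundness of the low-degree and zero-on-subcube tests forces $c>c'$ strictly (the field must dominate the total degree, which is at least $\ell\cdot h$), so the proof length is $T(m)^{1+\Omega(1)}$ and the randomness needed to address it is $(c/c')\log T(m)=(1+\Omega(1))\log T(m)$, not $\log T(m)+O(\log\log T(m))$ as you assert. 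The quasilinear proof length behind the stated randomness bound is precisely the novel contribution of Ben-Sasson--Sudan, and it is obtained by a different mechanism: univariate arithmetization over Reed--Solomon codes on FFT-friendly (characteristic-2) fields of size $\Theta(T\,\polylog T)$, together with a recursive ``PCPP for Reed--Solomon codewords'' composition that brings the query complexity down to $\polylog T$ while keeping the total oracle length $T\cdot\polylog T$. Your multivariate sum-check route, as parameterized, proves a weaker statement (randomness $O(\log T)$ with a constant larger than $1$), which would not suffice wherever the leading constant $1$ in the randomness matters; to prove the theorem as stated you would need to replace the core of your construction with the Reed--Solomon recursion or otherwise justify quasilinear proof length.
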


A convenient way to analyze a PCPP Verifier is through the corresponding  constraint graph~\cite{dinur2007pcp}.
\begin{definition}[Definition~1.5 in~\cite{dinur2007pcp}, Constraint graph]
    $\Graph=((V,E),\Sigma,\mathcal{C})$ is called a constraint graph if it satisfies the following conditions:
    \begin{enumerate}
        \item $(V,E)$ is an undirected graph;
        \item The vertex set $V$ also represents a set of variables over alphabet $\Sigma$;
        \item Each edge $e\in E$ carries a constraint $c(e)\subseteq\Sigma\times\Sigma$ over the variables corresponding to the linking vertices, and $\mathcal{C}=\{c(e)\}_e$.
    \end{enumerate}
    A constraint $c(e)$ is said to be satisfied by $(a,b)$ iff $(a,b)\in c(e)$. We call $(V,E)$ the underlying graph of $\Graph$.
\end{definition}

The actions of a PCPP verifier can be described by a constraint graph, as follows. For a pair language $L$ and a purported word $(x,y)$ of it, first set up a corresponding constraint graph $\Graph=((V,E),\Sigma,\mathcal{C})$, where the explicit input $x$ defines the underlying graph $(V,E)$ with constraint set $\mathcal{C}=L(x)$, $\Sigma$ is the alphabet of the variables, and $y \circ \pi$ is a purported assignment to the constraints $L(x)$. 
Then the verifier randomly samples a constraint in $\mathcal{C}$ and queries $
\sigma = y \circ \pi$ accordingly.
We denote $\mathsf{UNSAT}_\sigma(
\Graph)$ as the proportion of constraints it cannot satisfy. The constraint graph is satisfiable if there exists an assignment $\rho$ such that $\mathsf{UNSAT}_\rho(\Graph)=0$, which we denote as $\mathsf{UNSAT}(\Graph)=0$. Otherwise, we denote $\mathsf{UNSAT}(\Graph)=\min_{\sigma}\mathsf{UNSAT}_\sigma(\Graph)$.
The time complexity of the PCPP verification corresponds to the time the verifier samples an edge and checks the returned answers,
the query complexity corresponds to the number of variable values to read 
in one query, and the randomness complexity corresponds to amount of randomness to sample an edge.


To show how Mie constructs the final constraint graph, a final concept we need to use is the expander graph.
\begin{definition}[Expander graph]
    Given constants $\lambda,d$ satisfying $d\in\mathbb{N},\lambda<d$, we call a graph $\Graph$ an $(n,d,\lambda)$-graph if $\Graph$ is an $n$-vertex, $d$-regular graph, of which the second largest eigenvalue $\lambda(\Graph)<\lambda$. A family of graphs $\{\Graph_n\}$ is called an expander graph family if there exists constants $d\in\mathbb{N},\lambda<d$, such that for every $n$, $\Graph_n$ is an $(n,d,\lambda)$-graph.
\end{definition}





Mie's PCPP is constructed as follows.
 Suppose that verifier $V$ would like to verify that $\ipt \in L \in \NTIME(T)$, and suppose that
 there is a BS-type $\PCPP$ verifier $V^{BS}$ for $\ipt \in L$.
 The proof for the BS-type $\PCPP$ verifier $V^{BS}$ is constructed as follows. 
 Suppose $w$ is the witness for $\ipt \in L$. There must exist a circuit $C$ such that $C(\ipt, w) =1$.
 Then the proof is the evaluation table of $C$ on $(\ipt, w)$, which consists of the input and output values of all the gates in $C$, encoded in a low-degree polynomial.
 Mie's construction of PCPP is the following.
 \begin{enumerate}
    \item  Mie first oracularizes $V^{BS}$ to reduce the number of queries to $2$ at the cost of the increased alphabet size.
 In this step, the proof for the oracularized verifier can be computed in time $\poly(T)$.
    \item  Then Mie's construction applies his gap amplification procedure to the oracularized verifier, whose constraint graph is denoted by $\Graph$.
 Then the final constraint graph out of the gap amplification procedure preserves completeness of $\Graph$ and amplifies the soundness error $s$ to a constant value, say $1/2$.
    \begin{enumerate}
        \item In the first step, dummy questions are added to make the constraint graph expanding. The size of the proof is multiplied by a constant factor to include the answers of the dummy questions. The overhead for computing the proof is also a constant.
        \item The second step is graph powering, where multiple queries from the previous step are asked at the same time. The number of queries is a constant, so the size of the new proof is a polynomial of the size of the proof in the previous step, similarly for the computation time of the proof.
        \item The third step is to apply an assignment tester, which is similar to a PCPP, to the combined queries to reduce the alphabet size. 
In this step, PCPP proofs of the combined queries are amended to the proof from the previous step. 
The amendment is of a polynomial size and can be computed in polynomial time.
    \end{enumerate} 
The whole gap amplification procedure is repeated $\log\log(T)$ times, so the total blow-up of the size and computation time of the proof is $\poly(T)$.
    \item 
The last step is alphabet reduction where the constant-sized alphabet is reduced to the binary alphabet. 
The blow-up to the size and computation time of the proof is a constant.
 \end{enumerate}
 
\paragraph{Overview of the answer-reduction technique of \cite{dong2023computational}.}
Suppose $P(\ipt)$ is an instance of an $\MIP^*$ protocol, in which $Q(\ipt)$ samples questions $x$ and $y$ and expects answers $a$ and $b$ from the provers. 
The goal of Hadamard-code-based answer reduction is to reduce the answer length $a(\cdot)$ to a constant.
We first oracularize $P$ to get $P^{OR}$ such that one prover gets questions $(x,y)$ with a constant probability, and this prover needs to send back $(a,b)$ such that $V(x, y, a, b) = 1$.
Then, we apply answer reduction to $P^{OR}$ to get $P^{AR} = (Q^{AR}, V^{AR})$.
In the complete case, one prover computes $(a,b)$ for question $(x,y)$, and the other prover computes the answer $a$ or $b$
for the question $x$ or $y$, respectively.
Then, they encode their answers using the Hadamard code.
The prover, who computes $(a,b)$, further computes a Mie's $\PCPP$ proof $\pi$ for $V(x,y,a,b) = 1$, and encodes the proof $\pi$ with the Hadamard code.
In addition to sending questions $x,y$, the verifier also queries the encodings of $a$, $b$ and $\pi$.
The provers simply answer those queries honestly.
Note that one important property of Mie's PCPP that is not stressed in \cite{neexp,dong2023computational} is that the alphabet of applicable PCPPs must be binary.
If its alphabet is not a constant, we cannot reduce the answer length to a constant.
This means the PCPP developed in \cite[Section 10]{re} cannot be used for reducing the answer length to a constant because its alphabet is not a constant.

\begin{proof}[Proof of \cref{prop:ans_red_had}]
    We first focus on the case that $P$ is $\pzkMIP^*$. Then, there is a simulator $S$ such that for every 
    instance $\ipt$ and questions $x,y$ sampled by the verifier for $\ipt$, the simulator can sample answers $a,b$ from the perfect quantum correlation $p_{\ipt}$ satisfying $V(x,y,a,b) = 1$ in time $\poly(n)$ where $n = \abs{\ipt}$.

    Consider the answer-reduced verifier $P^{AR}$. We will use $S$ to construct a new simulator $S'$ for $P^{AR}$.
    An honest verifier samples the questions $x,y$ as $Q$ and other queries of the encodings. Then $S'$ runs $S$ to get answers $a,b$, computes a Mie's PCPP proof $\pi$ for $V(x,y,a,b) = 1$. As we have discussed above, $\pi$ can be calculated in time $\poly(n)$
    and is of $\poly(n)$ size. Hence the additional queries from the honest verifier about the Hadarmard encodings of $a,b$ and $\pi$ can also be answered in $\poly(n)$ time because $q'(n) = \poly(n)$ and answering the queries is simply calculating the inner product of the encodings with the queries.
    Therefore, $S'$ can sample from a new perfect correlation $p_{\ipt}'$ for $V^{AR}$.

    On the other hand, we let $S'$ answer the queries the same way even if $S$ outputs wrong answers $a,b$, so that 
    the output of $S'$ is a deterministic function of $S$'s outputs. 
    Then if $p'_{\ipt}$ is distinguishable from the perfect correlation, we can construct a new distinguisher for $p_{\ipt}$ by running the computation of $S'$ on $a,b$
    and feed the output to the distinguisher for $p'_{\ipt}$. This establishes the computational zero-knowledge property.
    For statistical zero-knowledge, if $p_{\ipt}$ is statistically close to the perfect correlation of $P$, then $p'_{\ipt}$ is also statistically close to the perfect correlation of $P^{AR}$ because even if $S$ generate wrong answers, there is a nonzero probability that $S'$ can answer $P^{AR}$'s queries correctly.
\end{proof}
\end{document}